\documentclass[11pt]{article}

\usepackage{titlesec}
\usepackage{tocloft}
\usepackage{multibib}

\titlespacing*{\section}          {0pt}{4pt}{0pt}
\titlespacing*{\subsection}       {0pt}{2pt}{0pt}
\titlespacing*{\subsubsection}    {0pt}{2pt}{0pt}

\usepackage{xr} %% for cross-reference 
\usepackage{mathrsfs}  
\usepackage{dsfont}

\usepackage[pdftex]{graphicx}
\usepackage{amsfonts}
\usepackage{setspace}
\usepackage{amssymb}
\usepackage{amsthm}
\usepackage{graphicx}
\usepackage{amsmath}
\usepackage{mathtools}
\usepackage{lscape}
\usepackage{subcaption}
\usepackage{suffix}
\usepackage{color}
\usepackage{bm}
\usepackage{bbm}
\usepackage{verbatim}
\usepackage{enumerate}

\usepackage{wrapfig}

\usepackage{geometry}
\definecolor{darkblue}{rgb}{0.0,0.0,0.55}
\RequirePackage[colorlinks,citecolor=darkblue,urlcolor=darkblue,linkcolor=darkblue]{hyperref} 

\usepackage{xcolor}
\usepackage{caption}
\usepackage[labelfont={color=darkblue}]{caption}

\usepackage[normalem]{ulem}

\newtheorem{theorem}{\textsc{{Theorem}}}%[section]
\newtheorem{corollary}{\textsc{{Corollary}}}%[theorem]
\newtheorem{lemma}{\textsc{{Lemma}}}

\newtheorem{assumption}{\textsc{{Assumption}}}

\newtheorem{definition}{\textsc{{Definition}}}%[section]
\theoremstyle{remark}
\newtheorem{remark}{{\textsc{Remark}}}

\usepackage{etoolbox}
\makeatletter
\patchcmd{\@begintheorem}{\itshape}{}{}{}
\patchcmd{\@opargbegintheorem}{\itshape}{}{}{}
\makeatother

\usepackage{enumitem}

\usepackage{algorithm, float}% http://ctan.org/pkg/algorithms
\usepackage{algpseudocode}% http://ctan.org/pkg/algorithmicx
\newlength{\continueindent}
\usepackage{booktabs}
\usepackage{array}
\usepackage{multirow}
\newcommand{\bfa}{\mathbf{a}}

\newcommand{\bfc}{\mathbf{c}}
\newcommand{\bfh}{\mathbf{h}}

\newcommand{\bfx}{\mathbf{x}}
\newcommand{\bfy}{\mathbf{y}}
\newcommand{\bfz}{\mathbf{z}}

\newcommand{\bfC}{\mathbf{C}}

\newcommand{\bfH}{\mathbf{H}}

\newcommand{\bfK}{\mathbf{K}}
\newcommand{\bfL}{\mathbf{L}}

\newcommand{\bfX}{\mathbf{X}}
\newcommand{\bfY}{\mathbf{Y}}

\newcommand{\Ell}{\boldsymbol \ell}

\newcommand{\bftheta}{\boldsymbol \theta}

\newcommand{\bflambda}{\boldsymbol \lambda}

\newcommand{\bfmu}{\boldsymbol \mu}

\newcommand{\bfSigma}{\boldsymbol \Sigma}

\newcommand{\bfzero}{\mathbf{0}}

\newcommand{\GP}{\mathcal{GP}}

\newcommand{\diag}{\text{diag}}

\newcommand{\calA}{{\cal A}}
\newcommand{\calB}{{\cal B}}

\newcommand{\calD}{{\cal D}}

\newcommand{\calI}{{\cal I}}

\newcommand{\calK}{{\cal K}}
\newcommand{\calL}{{\cal L}}
\newcommand{\calM}{{\cal M}}

\newcommand{\calO}{{\cal O}}
\newcommand{\calP}{{\cal P}}

\newcommand{\calR}{{\cal R}}

\newcommand{\calT}{{\cal T}}

\newcommand{\scrD}{\mathscr{D}}

\newcommand{{\Pa}}{\text{Pa}}
\newcommand{{\Phild}}{\text{child}}
\newcommand{\lchild}{{\text{l}}}
\newcommand{\rchild}{{\text{r}}}

\usepackage[sort,authoryear]{natbib}
\usepackage{mathrsfs}  

\setcitestyle{citesep={;},aysep={}}

\newtheoremstyle{plain}
  {3pt}              % Space above
  {3pt}              % Space below
  {\itshape}         % Body font
  {}                 % Indentation
  {\bfseries}        % Heading font
  {.}                % Heading punctuation
  {5pt plus 1pt minus 1pt} % Space after heading
  {}                 % Heading specification

\newtheoremstyle{remark}
  {3pt}              % Space above
  {3pt}              % Space below
  {\normalfont}      % Body font
  {}
  {\itshape}         % Heading font
  {.}
  {5pt plus 1pt minus 1pt}
  {}

\renewcommand{\baselinestretch}{1.5}

\newcommand{\down}{\downarrow\!}

\allowdisplaybreaks

\def\spacingset#1{\renewcommand{\baselinestretch}%
{#1}\small\normalsize} \spacingset{1}

\expandafter\def\expandafter\normalsize\expandafter{%
    \normalsize%
    \setlength\abovedisplayskip{2pt}%
    \setlength\belowdisplayskip{4pt}%
    \setlength\abovedisplayshortskip{1pt}%
    \setlength\belowdisplayshortskip{2pt}%
}

\newcommand{\blind}{1}

\date{}
\begin{document}
\if1\blind
{
  \title{\bf A Residual Tree Gaussian Process Modeling Framework for High-Dimensional Data}
  \author{
    Pulong Ma%\footnote{\textit{Address of Correspondence}: Pulong Ma, Assistant Professor, Department of Statistics, Iowa State University;  2438 Osborn Dr, Ames, Iowa 50011.Email: plma@iastate.edu.}
    \\
    Department of Statistics, Iowa State University\\
    %plma@iastate.edu\\
    and\\
    Li Ma \\
    Department of Statistics \& Data Science Institute, University of Chicago\\
    }
      \date{}
  \maketitle

} \fi

\if0\blind
{

    {\LARGE\bf A Residual Tree Gaussian Process Modeling Framework for High-Dimensional Data}

  \medskip
} \fi

\begin{abstract}

With the advance of measurement technologies and increasing computing power, large spatial data with heterogeneous structures are often collected over high-dimensional domains. Existing Gaussian process (GP) models and computational strategies are often inadequate for analyzing such datasets in multi-dimensional domains. To address these challenges, we develop a Bayesian residual tree GP methodology called ResTGP for large spatial data with potentially heterogeneous structures in multi-dimensional domains. The key idea is to decompose a Gaussian process at a cascade of resolutions along a dyadic tree through iteratively computing predictive and residual processes so that the residual process on each tree node, both interior and leaf, becomes sufficient for the finer-level dependency within that node. This allows characterization of the underlying covariance structure in a flexible, multi-scale manner while achieving divide-and-conquer on the data domain, which leads to computational efficiency. To allow efficient tree inference, we introduce a computational strategy for Bayesian inference based on recursive message passing, which scales linearly with the sample size given the tree. This paper also proves posterior consistency of the model for estimating continuous functions in a nonparametric regression framework. Extensive numerical examples and the storm surge application confirm the advantages of the proposed method.
\\
\\
\\
\noindent 
{\bf Keywords:} Multi-scale inference; tree methods; recursive partitioning; predictive process; spatial nonstationarity.
\end{abstract}
\bigskip

%\newpage
%\spacingset{1.45} % DON'T change the spacing!
%\setstretch{1.65} %27 pages per line
\setstretch{1.6} %27 pages per line

%\doublespacing 

\onehalfspacing

\section{Introduction}\label{sec:intro}

Many physical processes such as atmospheric or oceanographic processes inherently possess \emph{multiscale} properties and can exhibit distinctive behavior over a range of spatial resolutions. With the advance in measurement techniques and increasing computing power, large amounts of measurements and large-scale numerical simulations from complex computer models become available to address various uncertainty quantification (UQ) applications in various disciplines including, but not limited to, remote sensing science, ocean science, environmental science, and medical science. Uncertainty Quantification has been widely used to understand and characterize such physical processes using  ground- or space-based observations, and/or simulations of complex computer models (or simulators). 

Among the most popular tools for inferring a physical process of interest from correlated data are the Gaussian process (GP), which has been widely used in spatial statistics \citep{Cressie1993, Banerjee2014, Stein1999},  computer experiments and uncertainty quantification (UQ) \citep{Santner2018}, and machine learning \citep{Rasmussen2006}.

There is a dense literature on GP modeling for addressing the well-known big ``$n$'' problem \citep[e.g.,][]{Banerjee2008, Cressie2008, Sang2012,Nychka2015,Guhaniyogi2018, Ma2020FGP, Datta2016,Katzfuss2017MRA, Katzfuss2021,Stein2004,Gramacy2015, Park2018, Peruzzi2022, Tzeng2018} and spatial nonstationarity \citep{Kim2005, Konomi2019, Luo2024, Paciorek2006,Gramacy2008}. However, there is still a fundamental gap in GP modeling for addressing these issues for higher-dimensional domain. In particular, many spatial GP methods \citep[e.g.,][]{Cressie2008, Sang2012,Nychka2015,Guhaniyogi2018, Peruzzi2022} are not designed for higher-dimensional domain beyond geostatistical applications. A promising class of  GP approximation methods that could be extended to higher-dimensional domain is Vecchia approximations \citep{Vecchia1988, Katzfuss2017MRA, Katzfuss2021, Datta2016, Gramacy2015, Stein2004}, however Vecchia approximations do not provide nonstationarity modeling but rather rely on a pre-specified nonstationary covariance functions that are often  challenging to construct in multi-dimensional domains with dimension beyond two.  Hybrid ideas such as deep GP \citep{Sauer2023} can model nonstationarity for large datasets, but posterior sampling over latent space for higher-dimensional problem is challenging. On the other hand, spatial nonstationarity modeling often requires on highly parameterized covariance function structures \citep[e.g.,][]{Paciorek2006}, but they are not suitable for higher-dimensional domain even with fast GP approximation methods. Other spatial nonstationarity methods \citep{Luo2024, Gramacy2008} can allow spatial nonstationarity, but they could not cope with large datasets in a multi-dimensional domain.

This paper is mainly focused on addressing the gap for GP modeling with large spatial data and heterogeneous structures in a multi-dimensional domain with higher dimension than that in geostatistical applications. In what follows, this setting with large dataset and higher-dimensional domain will be referred to a high-dimensional setting in the sense of large ``$n$'' and large ``$d$'', where $d$ denotes the dimension of the domain. To address this challenge, the paper develops a \textit{residual tree} GP modeling framework that exploits a multi-scale decomposition of a GP into a cascade of residual processes to characterize the underlying covariance structures at different resolutions to allow adaptive shrinkage through random recursive partitioning of the domain so that heterogeneous structures can be inferred at different resolutions with scalable computation for large datasets in multi-dimensional domains. 

The proposed methodology is closely related to the current literature on tree-based modeling and multi-resolution modeling of GPs. For treed GP models, a tree-generating process can be used to allow probabilistic partitioning of the input space. Given the tree structure, many different kinds of GP modeling techniques can be used to model dependence structures within the subregion. The typical example is given by \cite{Gramacy2008}, who develop a Bayesian treed Gaussian process (BTGP) methodology in which stationary Gaussian process models are fitted separately in each leaf (or terminal node) of the tree, by extending the idea of the tree partitioning model \citep{Chipman1998, Chipman2002}. This approach allows nonstationarity modeling via leaf-specific regression parameters and covariance parameters through a tree-generating process. This model in essence enforces a global blockwise independence assumption over the underlying covariance, where the blocks are the leaves of an underlying partition tree that are inferred under a Bayesian paradigm. Hence dependencies are modeled only at the leaf level, which is a major limitation we aim to resolve in this work by modeling dependencies at all levels of a dyadic partition tree. BTGP does enjoy computational ease as it divides the data into packets supported on the leaf nodes, which are sufficient for the subprocesses within the nodes due to the block independence constraint. We aim to maintain this computational advantage in our framework. 

Building upon the BTGP, \cite{Konomi2014} consider a class of nonstationary covariance functions developed by \cite{Paciorek2006} to form a globally valid covariance function across all the terminal nodes, and further alleviate computational challenges that can arise when there are a large number of spatial observations in each terminal node via the full-scale approximation \citep{Sang2012}. This class of treed Gaussian process models relies on a reversible-jump Markov chain Monte Carlo (RJ-MCMC) algorithm to enable stochastic tree search operations (grow, prune, change, swap, and rotation), which can be computationally intensive and can have slow mixing. %without a good proposal of the tree. 

In a different vein, multiresolution GP models generally relax the restrictive global blockwise independence assumption made under treed GP models to capture complex dependency at different ranges, but in comparison to treed GP models, existing approaches generally suffer two limitations. The first is the loss of the ability to adaptively learn the underlying multiscale representation either in terms of a tree or otherwise. The second is that they no longer divide and conquer the data by splitting them into small subsets of sufficient statistics. 

%These methods often represent a spatial process of interest via a linear combination of multiresolution basis functions and their associated random coefficients, with low-resolutions capturing spatial variability at larger distances and high-resolutions capturing spatial variability at smaller distances. 

Past work on multiresolution modeling falls into two categories. The first category allows multiresolution modeling without using partition tree of the domain \citep{Nychka2002, Cressie2008, Ma2020FGP,Nychka2015}. By assuming a pre-specified number of maximum resolutions, these methods estimate different structures of random coefficients from data to learn nonstationary features in spatial processes. 

The second category again appeals to a dyadic partition tree of the domain to allow GP models to capture different variabilities at different resolutions. For instance, \cite{Fox2012} propose a multiresolution GP over one-dimensional input domain that aggregates over a collection of stationary smooth GPs, each of which is defined on a node of a binary partition tree so that long- and short-range dependence and nonstationarity are captured. However, the resulting model does not allow the resulting likelihood to factorize according to the tree structure, and so the proposed Bayesian inference needs to proceed by sampling {\em all} model parameters over the entire tree jointly using the entire dataset, which is computationally very challenging even for a moderate number of data points. In contrast, scalable multiresolution GP methods such as \citet{Katzfuss2017MRA} and \citet{Guhaniyogi2020} can allow scalable computation but they rely on a fixed partition tree that is given {\em a priori}. %Similarly, the sparse generalized Vecchia framework \citep[SGV;][]{Katzfuss2021} does not allow uncertainty quantification of sparsity patterns.  

Compared to the Bayesian treed GP models, these Bayesian and non-Bayesian multiresolution GP models either suffer from computational burden \citep[e.g.,][]{Fox2012} or ignore uncertainties in the underlying partition tree \citep{Katzfuss2017MRA, Guhaniyogi2018}. Both the multiresolution GP \citep{Fox2012} and the MRA \citep{Katzfuss2017MRA} in essence relax the global blockwise independence assumption imposed by the treed GP \citep{Gramacy2008} to a more flexible, resolution-specific blockwise independence assumption defined along a dyadic tree. When this tree structure is ``correctly'' specified, they can be much more effective than the treed GP in capturing long- and short-range dependencies. For instance, MRA does not allow the learning of the underlying tree partition. In complex data problems, such oracular specification of the tree may be unattainable, and thus treed GP may still perform better in many scenarios.

Recognizing the unique benefits and drawbacks of these tree-based and multiresolution GP methods, we aim to introduce a new model that integrates key features of these approaches and thereby maintains their benefits while avoiding their shortcomings. Specifically, we propose a new class of tree-based multiresolution GP model which incorporates both the stochastic learning of the tree {\em and} avoids the restrictive global blockwise independence assumption. Interestingly, we show that it is possible to achieve efficient adaptive tree learning in more flexible multiresolution GP modeling under an effective decomposition of the resulting likelihood over tree nodes under the weaker resolution-specific blockwise independence assumptions made under multiresolution GP and MRA.

The core development of our framework relies on iteratively constructing \emph{residual processes} along the tree nodes to achieve a likelihood factorization over the tree nodes. In particular, we partition the observations on a tree structure according to a random partition tree. Once the tree is specified, in a top-down (i.e., root-to-leaf) fashion, we construct a \textit{predictive process} and corresponding residual process on each node. The residual process is the ``sufficient statistic'' for all finer dependency structures within that node. This procedure continues recursively with the ``data'' on each node corresponding to the residual process passed down from its parent, and a predictive process can thus be computed accordingly. 

By adopting a prior distribution on the underlying partition tree topology, Bayesian inference can be accomplished through sampling algorithms similar to that for the treed GP. The posterior will place small probability on trees that further split a node should the residuals be well characterized by a \emph{baseline} covariance model; otherwise, the posterior will favor those trees which further split the node such that the residuals therein are modeled with one additional level of resolution up to a maximum number of tree depth being reached. The resulting GP model is termed as a \emph{residual tree GP} (ResTGP) model, which allows Bayesian learning on the underlying partition tree and can capture large-scale dependency over large distances at lower resolutions and small-scale dependency over short distances at deeper resolutions as in multiresolution GP methods. Under a common per-resolution blockwise independence assumption similar to those made under both the MRA and the multiresolution GP (not the much stronger global blockwise independence imposed by treed GPs), the likelihood under our model given the tree can be factorized over the tree nodes due to the sufficiency of the residual process on each node; thereby, efficient Bayesian computation at cost linear in the sample size is achieved. This paper also establishes posterior consistency result to provide strong support of the proposed model for estimating continuous functions. 

The rest of this article is organized as follows. 
Section~\ref{sec: ResTGP} introduces the residual tree GP modeling framework. Section~\ref{sec: Posterior inference} presents the detailed message-passing algorithms for posterior inference. Section~\ref{sec: theory} presents posterior consistency of the ResTGP model for estimating continuous functions.  Section~\ref{sec: simulation} illustrates the proposed methodology with simulation examples. Section~\ref{sec: app} demonstrates the proposed methodology through a real-world application to storm surge.  Section~\ref{sec: discussion} concludes the paper with discussions and possible research directions. Supplementary Materials contain all the supporting details of the main paper.

\section{A Residual Tree GP Modeling Framework} \label{sec: ResTGP}

Let $f(\cdot)=:\{f(\bfx) \in \mathbb{R}: \bfx \in\scrD \subset \mathbb{R}^d \}$ be a real-valued Gaussian process (GP) on the input space $\scrD$ with mean function $\mu(\cdot): \scrD \to \mathbb{R}$ and covariance function $C: \scrD\times \scrD\to \mathbb{R}$, where $d$ is the dimension of the input space $\scrD$.  Suppose that we have a set of inputs $\calA^0:=\{\bfx_i\}_{i=1}^n$ and corresponding outputs $\{y(\bfx_i)\}_{i=1}^n$. Standard spatial modeling for data process $y(\cdot)$ would consider a noisy version of $f$ with independent nugget effect or measurement error incorporated: $y(\bfx) = f(\bfx) + \epsilon(\bfx), \, \bfx \in \scrD$, where $\text{cov}(\epsilon(\bfx_1), \epsilon(\bfx_2)) = v \mathbbm{1}(\bfx_1=\bfx_2)$ with $v>0$. Define $K(\bfx, \bfx'):=C(\bfx, \bfx') + v \mathds{1}(\bfx=\bfx')$.

\subsection{Overview of the model} \label{subsec: model description}

%We consider the additive structure: $y(\bfx) = f(\bfx) + \epsilon(\bfx)$, where $\epsilon(\cdot)$ is independent of $f(\cdot)$ with $\text{cov}(\epsilon(\bfx_1), \epsilon(\bfx_2)) = v \mathbbm{1}(\bfx_1=\bfx_2)$ with $v>0$. This error term $\epsilon(\cdot)$ is often accounting for nugget effect or measurement error. $f(\cdot)$ is a spatial process that will be specified later. Write $K(\bfx, \bfx'):=C(\bfx, \bfx') + v \mathds{1}(\bfx=\bfx')$. %Then $y(\cdot)\sim \mathcal{GP}(\mu, K)$ with $K(\bfx, \bfx'):=C(\bfx, \bfx') + v \mathds{1}(\bfx=\bfx')$. 

We specify the proposed model for $y(\cdot)$ hierarchically based upon a base GP prior $\GP(\mu, K)$ and a stochastic tree prior over the tree space $\mathbb{T}$. The stochastic tree $\calT\in \mathbb{T}$ is constructed using a random recursive partitioning strategy as in \cite{Wong2010}, which is a variant of the marginal prior on the partition tree in Bayesian CART prior \citep{Chipman1998, Chipman2002}. Let $\pi(\calT)$ denote the tree prior for $\calT$.  
For any dyadic tree that arises under this model, its nodes can be grouped into two collections: a set of internal nodes $\calI$ (that do have children) and a set of leaf nodes $\calL$ (that have no children). Let $\downarrow\!\calA$ denote the collection of ancestral nodes from root to the parent of $\calA$. For notational convenience, $\calA$ is used to denote a set of observations (or data points) or the subregion that is used for recursive partitioning whenever there is no scope of confusion.

A key device for the proposed model construction is a recursively constructed ``\textit{residual process}" at any node $\calA$ of the partition tree, which is formally defined below. 
\begin{definition}[Residual process]
Let $\xi_{\calA}(\cdot)$ denote a Gaussian process at node $\calA$ with covariance function $C_\calA(\bfx, \bfx'):=\text{cov}\{\xi_\calA(\bfx), \xi_\calA(\bfx')\}$.
Fix a pre-specified knot set $\calR_\calA$ from the region $\calA$. Define $f_{\calA}(\cdot) := E[\xi_{\calA}(\cdot) | \xi_{\calA}(\calR_{\calA})]$. At the root, define $\xi_\calA=y$; at internal node $\calA$, let $B$ be a child of $\calA$, define $\xi_{\calB}(\cdot):=\{\xi_{\calA}(\bfx) - f_{\calA}(\bfx): \bfx \in \calB\}$. Let $\calA_i,\calA_j$ be two different child nodes of $\calA$, assume $\text{cov}(\xi_{\calA_i}(\bfx), \xi_{\calA_j}(\bfx'))=0$. Then We call $\xi_{\calB}(\cdot)$ a residual process at node $\calB$. 
\end{definition}

\begin{remark}
 This residual process construction starts at root where we assume $\xi_\calA(\cdot) = y(\cdot)$ and then repeats recursively at each node down the tree until a maximum tree depth is reached. The covariance function of $\xi_{\calB}(\cdot)$ is given by 
 $C_{\calB}(\bfx, \bfx')=\{C_{\calA}(\bfx, \bfx') - C_{\calA}^*(\bfx, \bfx')\} \mathds{1}(\bfx, \bfx'\in \calB)$, where $C_\calA^*(\bfx, \bfx'):=C_\calA(\bfx, \calR_\calA) C_\calA^{-1}(\calR_\calA, \calR_\calA) C_\calA(\calR_\calA, \bfx')$ is the covariance function for $f_\calA(\cdot)$ \citep{Banerjee2008}. 
 The resulting residual covariance is assumed to be blockwise independent given knots which is different from the \textit{full-scale} residual process $\xi_\calA(\cdot)-f_\calA(\cdot)$.
\end{remark}

From a hierarchical perspective, the proposed model for $y(\cdot)$ can be specified hierarchically. Given the tree $\calT \sim \pi(\calT)$, we can write $y(\cdot)$ as follows:  
\begin{equation} \label{eqn: generative model} 
\begin{gathered}
        y(\cdot) :=   \sum_{\calA \in \calI} f_{\calA}(\cdot) + \sum_{\calA \in \calL} \delta_{\calA}(\cdot), \\
        f_{\calA}(\cdot):=E[\xi_{\calA}(\cdot) | \xi_{\calA}(\calR_\calA)], \quad \calA \in \calI  \\
        \delta_{\calA}(\cdot)  \sim \calM_0(\calA), \quad \calA \in \calL
\end{gathered}
\end{equation}
where $\calR_\calA$ is a set of knots at internal node $\calA$. The term $f_{\calA}(\cdot)$ at internal node $\calA$ is the best linear unbiased predictor via kriging \citep[][Section 1.5]{Stein1999} based on the recursively defined residual process $\xi_\calA(\cdot)$. This term is designed to capture spatial variabilities at large distances across different scales on the partition tree. When the knot set is small, it leads to \emph{low-rank processes} \citep{Banerjee2008, Cressie2008}. Define $h_{\downarrow\!\calA}(\bfx) := \sum_{ \calB \in {\downarrow\!\calA}} f_\calB(\bfx)$. Then standard calculation verifies that $h_{\downarrow\!\calA}(\cdot)$ is a multi-resolution approximation \citep{Katzfuss2017MRA} to the process $y(\cdot)$. At the root, standard kriging formula \citep[][Section 1.5]{Stein1999} gives that $f_\calA(\bfx) = \mu(\bfx) + K(\bfx, \calR_\calA) K^{-1}(\calR_\calA,\calR_\calA)[y(\calR_\calA) - \mu(\calR_\calA)]$.

To complete the construction over the tree, we will specify a \textit{baseline model} $\calM_0$ for the process $\delta_\calA(\cdot)$ at the leaf node.  $\delta_\calA$ at leaf node is assumed to be independent among siblings and is designed to capture spatial variabilities at small distances to accommodate situations either when spatial variations at its finer scales are of no interest or when the data contain weak information at this scale so that no spatial structure can be inferred. As we have a set of residual processes constructed down the tree, the corresponding residuals are also available at each node down the tree when the covariance parameters and knots are specified.  A key distinction between the proposed model and MRA is that the proposed model uses these residuals to infer the leaf node model $\calM_0$ to allow adaptive inference on the tree structure; while MRA maintains all residual covariance structures (i.e., those from $\xi_{\calA}$) at the leaf node and simultaneously relies on a sufficiently deep tree structure such that the number of observations in leaf node is small enough to allow efficient computation, thus MRA does not allow adaptive partition structure by design. The detailed specification of leaf node model $\calM_0$ is  deferred to Section~\ref{subsec: baseline model}.

As residual processes are repeatedly constructed  from a base Gaussian process on a tree structure, we call the resulting model a \textit{residual tree GP} (or ResTGP). The resulting prior for $y(\cdot)$ is a valid stochastic process and we will use the notation $\text{ResTGP}(\bftheta, \calT)$ to denote the residual tree GP prior constructed from a tree $\calT$ and a base GP with covariance parameters $\bftheta$. This ResTGP prior can be used for modeling the response directly or as a latent process; see Section~\ref{sec: theory} for this adoption to study its posterior property. 

%Tackling the computational challenges for large geostatistical data has received lots of attentions in the literature with several successful developments focusing on approximating a model with a pre-specified covariance function \citep[e.g.,][]{Datta2016, Katzfuss2017MRA, Katzfuss2021, Peruzzi2020}. They yield a good approximation of the sampling model under the given covariance function. A key problem that is not adequately addressed in these works is how these methods can be used to infer unknown covariance in a multi-dimensional domain  beyond geostatistical settings. 

Our model construction is an iterative approach to model building and computation by recursively choosing the set of knots  and then regressing the remaining data on these knots to form the predictive and residual processes. This iterative approach is inspired by a theoretical result on an orthogonal decomposition of GP stated in Theorem~\ref{thm: orthogonal decomposition}. 
The model and the inference algorithm are both iterative---in each step (of the model construction and computation), the resulting residual process is then a sufficient statistic for the next iteration. By choosing knots from existing inputs, the computational and modeling effort is only on the residuals over remaining observation locations that have not been selected as knots.

\begin{theorem} \label{thm: orthogonal decomposition}
Let $\xi_0(\cdot) := y(\cdot)$ be any GP on a domain $\mathscr{D}\subset\mathbb{R}^d$. Fix any integer $k \in [1,n]$. Let $\calR^i$ be a nonempty subset of $\calA^i$ such that $\calA^{i+1}:=\calA^{i} \setminus \calR^{i}$ is recursively defined for $i=0,1, \ldots, k-1$. Given the knots $\calR^i$, let $f_i(\cdot)$ be the predictive process of ${\xi}_i(\cdot)$ and  ${\xi}_{i+1}(\cdot):={\xi}_i(\cdot) - f_i(\cdot)$ be the full-scale residual process. It follows that the process $y(\cdot)$ admits the following decomposition
$
y(\bfx) = f_0(\bfx) + f_1(\bfx) + \cdots + f_{k-1}(\bfx) + \xi_{k}(\bfx), \bfx\in \scrD,
$
where $f_0(\cdot), \ldots, f_{k-1}(\cdot), \xi_{k}(\cdot)$ are mutually independent. 
In addition, the sampling density has the following factorization
$
p(y(\calA^0)) = p(f_0(\calR^0)) \times p(f_1(\calR^1)) \times \cdots \times p(f_{k-1}(\calR^{k-1})) \times p(\xi_{k}(\calA^{k})).
$
\end{theorem}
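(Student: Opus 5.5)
The plan is to prove the decomposition and independence by induction on $k$, using the defining property of Gaussian conditional expectation. Work with the centered process. At the root this means subtracting $\mu$, which is absorbed into $f_0$; afterwards every $\xi_i$ is zero-mean for $i\ge 1$, because $E[\xi_i(\bfx)]=E[E[\xi_i(\bfx)\mid \xi_i(\calR^i)]]$.

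The telescoping identity is immediate: since $\xi_{i+1}=\xi_i-f_i$, summing over $i=0,\dots,k-1$ gives $y=\xi_0=\sum_{i<k} f_i+\xi_k$ pointwise on $\scrD$.

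For independence, all the processes $f_0,\dots,f_{k-1},\xi_k$ are linear functionals of the jointly Gaussian process $y$. They are therefore jointly Gaussian, and it suffices to show zero cross-covariances. The key one-step fact is that $f_i(\cdot)=E[\xi_i(\cdot)\mid\xi_i(\calR^i)]$ is a linear function of $\xi_i(\calR^i)$. The residual $\xi_{i+1}(\bfx)=\xi_i(\bfx)-f_i(\bfx)$ is uncorrelated with $\xi_i(\calR^i)$ by the normal equations, $\text{cov}(\xi_i(\bfx),\xi_i(\calR^i))=C_i(\bfx,\calR^i)C_i^{-1}(\calR^i,\calR^i)C_i(\calR^i,\calR^i)$. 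Hence the whole process $\xi_{i+1}(\cdot)$ is independent of $\sigma(\xi_i(\calR^i))\supseteq\sigma(f_i(\cdot))$. All later $f_j$ ($j>i$) and $\xi_k$ are measurable functions of $\xi_{i+1}(\cdot)$, so each $f_i$ is independent of the collection $(f_{i+1},\dots,f_{k-1},\xi_k)$. Iterating this from $i=0$ upward yields mutual independence.

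For the density factorization, I will change variables from $y(\calA^0)$ to the stacked vector $(\xi_0(\calR^0),\xi_1(\calR^1),\dots,\xi_{k-1}(\calR^{k-1}),\xi_k(\calA^k))$. Since $f_i$ interpolates on its knots, $f_i(\calR^i)=\xi_i(\calR^i)$, so these blocks are exactly $f_i(\calR^i)$ and $\xi_k(\calA^k)$. Because the sets $\calR^0,\dots,\calR^{k-1},\calA^k$ partition $\calA^0$, the map is square.

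Next I check that the map is triangular with identity diagonal blocks. Ordering the coordinates as $\calR^0,\calR^1,\dots,\calA^k$, the quantity $\xi_i(\bfx)$ for $\bfx\in\calA^i$ equals $y(\bfx)$ minus a linear function of $y$ evaluated on $\calR^0\cup\dots\cup\calR^{i-1}$. This follows by induction, since $f_{j}(\bfx)$ depends linearly on $\xi_j(\calR^j)$, which in turn depends only on $y$ at $\calR^0,\dots,\calR^j$. The Jacobian is therefore unit lower block triangular with determinant $1$. The joint density of $y(\calA^0)$ thus equals the joint density of the blocks, which factorizes by the independence established above.

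The main obstacle is the invertibility and interpolation property: it requires $C_i(\calR^i,\calR^i)$ to be nonsingular at every stage. I would argue this from $K$ being strictly positive definite on distinct points, which holds because of the nugget $v>0$. Then $C_{i+1}$ restricted to $\calA^{i+1}$ is a Schur complement of a positive definite matrix and hence positive definite, so the argument goes through.
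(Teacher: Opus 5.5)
Your proposal is correct and follows essentially the same route as the paper: iterating the one-step independence between the predictive process $f_i$ and its residual $\xi_{i+1}$. The paper phrases this as induction on $k$, with Lemma S.1 as the base case, while you unroll it into a telescoping sum, sequential independence, and one global change of variables. The details you make explicit are used only implicitly in the paper's step $p(\xi_m(\calA^m))=p(f_m(\calR^m))\,p(\xi_{m+1}(\calA^{m+1}))$: the interpolation property $f_i(\calR^i)=\xi_i(\calR^i)$, the unit-triangular Jacobian, and nonsingularity of $C_i(\calR^i,\calR^i)$ via the nugget and Schur complements.
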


\begin{remark} This orthogonal decomposition and the Vecchia approximations \citep{Datta2016, Katzfuss2017MRA} both use the device of predictive processes. One important distinction is that the orthogonal decomposition leads to exact computation and inference for GP modeling, while the Vecchia's approximations factor the joint density into the product of a collection of conditional densities, each of which is approximated by a small number of random variables in the conditioning set. While the above decomposition does not bring any computational benefit by itself, it will lead to efficient computation when coupled with tree-based modeling strategies (in particular the within-resolution blockwise independence assumptions) described in this paper. %to be described in Section~\ref{subsec: model description}. 
\end{remark}

\subsection{The baseline model $\calM_0(\calA)$ on leaf nodes}  \label{subsec: baseline model}

Modeling spatial nonstationarity is essential for complex data with heterogeneous structures that vary across different regions. The ResTGP methodology can allow spatial nonstationarity over multi-dimensional input space through both stochastic tree prior and baseline models. The stochastic tree prior can allow the ResTGP model to build for different instances of partition tree from the Bayesian CART prior. Not only does this mitigate the discontinuity in the predictive surface around the boundary of partition regions \citep{Katzfuss2017MRA}, but it allows spatial nonstationarity across different trees as in \citet{Gramacy2008} and \citet{Luo2024}. We shall refer to this type of nonstationarity as \textit{across-tree} spatial nonstationarity.

In addition to across-tree spatial nonstationarity, the ResTGP model also allows \textit{within-tree} nonstationarity through different baseline models over the tree nodes. This is evident because each baseline model only captures behaviors within a small region in a tree node whose size shrinks as the tree grows. With different baseline models, region-specific heterogeneous behaviors can be captured. A key requirement for leaf-node model specification is to allow efficient computation of likelihood given the residuals at each node to allow adaptive smoothing as tree grows. We discuss the following choices for the baseline model that balance the need of flexibility with simplicity.

\textit{\textbf{Choice 1}}: A simple choice is to consider the covariance function induced by the residual covariance. In particular, we can use a modified predictive process \citep{Finley2009} representation of the residual covariance function at leaf node as the baseline model so that the leaf model allows efficient computation based on residuals at that leaf. Let $C_{\calA}$ denote the covariance function at leaf node $\calA$. Given a set of knots $\calR_\calA$, the leaf model for $\delta_\calA(\cdot)$ can be constructed: ${\delta}_\calA(\bfx) :=  C_{\calA}^\top(\calR_\calA, \bfx) C_{\calA}(\calR_\calA, \calR_\calA)^{-1} \xi_{\calA}(\calR_\calA) + \epsilon_*(\bfx)$, where $\epsilon_*(\cdot)$ is a residual error process assumed with $\text{cov}(\epsilon_*(\bfx), \epsilon_*(\bfx')) = \{C_{\calA}(\bfx, \bfx') - C_{\calA}^\top(\calR_\calA, \bfx) C_{\calA}^{-1}(\calR_\calA, \calR_\calA)C_{\calA}(\calR_\calA, \bfx')\}$ $\mathds{1}(\bfx=\bfx')$. This specification preserves the orthogonal structure for residual processes between internal nodes and leaf nodes by construction and also allows efficient computation at leaf node. The resulting likelihood function given the tree is essentially a multi-scale predictive process approximation to the base GP likelihood. This specification will adaptively infer the tree partition via the proposed tree prior. This is particularly useful when the goal is to approximate a base GP. This model is referred to as \texttt{ResTGP-PP} hereafter. 

\textit{\textbf{Choice 2}}: Another simple choice for the baseline covariance model is a white-noise model: $\delta_{\calA}(\bfx) \overset{iid}{\sim} N(\mu_\calA, \kappa^2_\calA), \, \bfx \in \calA$  with conjugate priors $\mu_\calA |\kappa^2_\calA \sim N(\mu_0, \kappa^2_\calA/\kappa_0), \kappa^2_\calA \sim IG(a_0, b_0)$, where $z \sim IG(a_0, b_0)$ has density $p(z) = b_0^{a_0} z^{-a_0-1} \exp(-b_0/z) / \Gamma(a_0), z>0, a_0>0, b_0>0$.  This specification will be referred to as \texttt{ResTGP-WN}, hereafter. This model assumes that the residuals at each node do not contain any spatial variation and are modeled with heteroskedastic errors with tree-node dependent means and variances. This can capture different degrees of heteroskedasticity over different regions to allow regime changing behaviors. This specification also facilitates predictive inference as the underlying covariance structure is estimated from the data. Model comparison under this baseline can be performed in a fully Bayesian way; see Section~\ref{sec: likelihood recursion} for details.  

\textit{\textbf{Choice 3}}: A more flexible choice is to assume a parametric covariance function over the tree nodes whose covariance parameters might be shared across different tree levels but allows certain degrees of nonstationary behaviors as in \cite{Fox2012}. However, this situation would then require efficient computation of the marginal likelihood for the baseline model when the number of observations is large. The computational challenge due to the large number of observations can be addressed by adopting existing fast GP approximation methods such as Vecchia approximations \citep{Katzfuss2021}, however, the calculation of the marginal likelihood could still pose potential challenges, and hence this choice is left for future investigation.

\subsection{The dyadic tree prior $\pi(\calT)$} \label{subsec: RRP}

The Bayesian approach to learning a latent tree structure is to endow the tree with a prior distribution, often in the form of a constructive process that generates the tree through a top down stochastic recursive partitioning. We start by describing one such generative tree model which we will use throughout this paper. 
Without loss of generality, we consider recursive dyadic partitioning. The corresponding tree is binary with each internal or non-terminal node having two child nodes (e.g., left-child node and right-child node); see Figure~\ref{fig: binary tree} for an example of a binary tree with fixed cut and adaptive cut. 

\begin{figure}[htpb]
\begin{center}
\makebox[\textwidth][c]{\includegraphics[width=.85\textwidth, height=0.2\textheight]{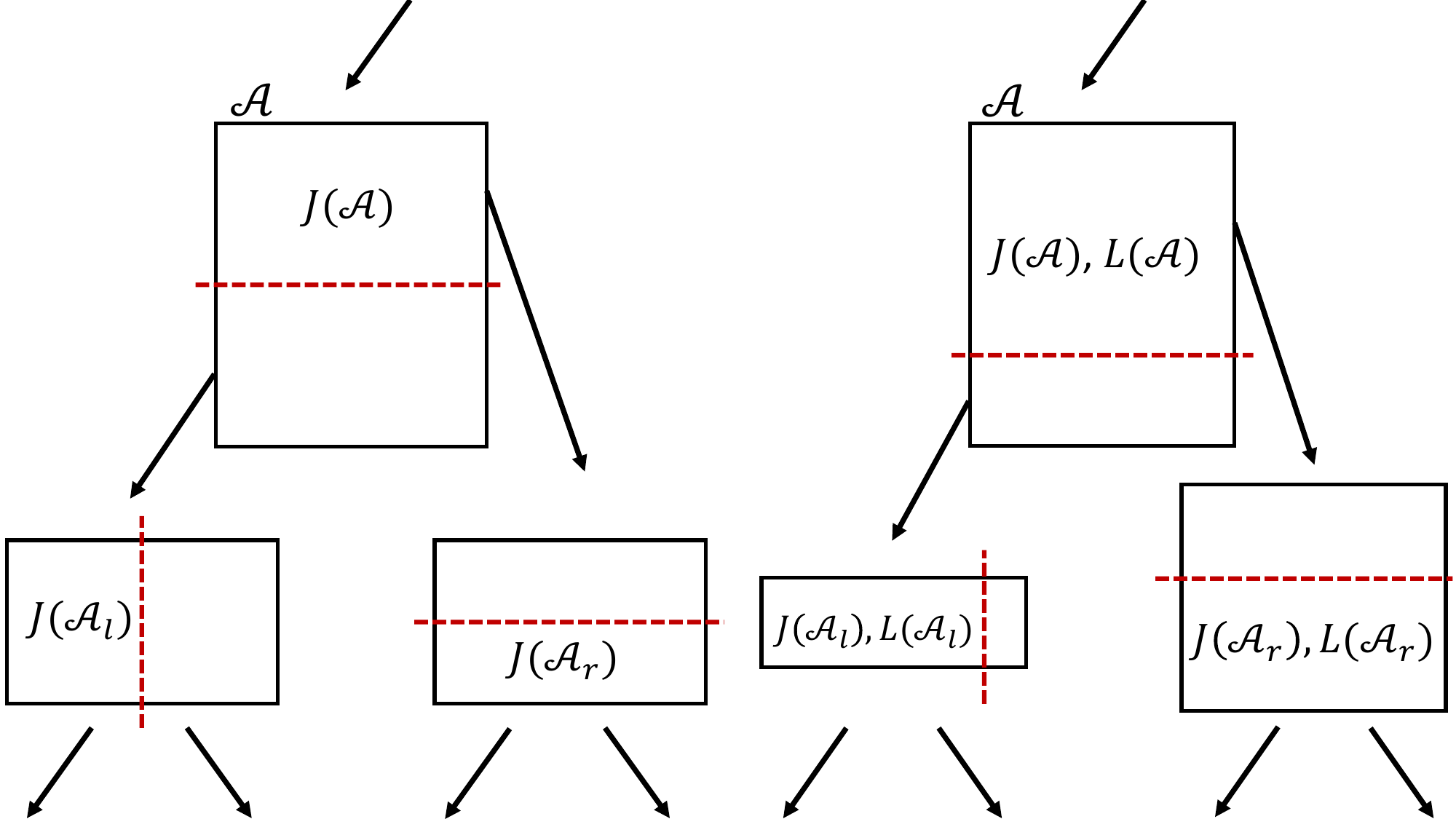}}
\caption{Random recursive partitioning with fixed middle cut (left panel) and adaptive cut (right panel).}
\label{fig: binary tree}
\end{center}
\vspace{-0.6cm}
\end{figure}

For any node $\calA$ that arises during the random partitioning, a random variable $S(\calA)$ with values in $\{0, 1\}$ determines whether the node $\calA$ is further split or not. Let $S(\calA)=1$ correspond to the case that the node $\calA$ is not split with $P(S(\calA)=1) = \rho$, which is the prior stopping probability. That is, $S(\calA) \sim \text{Bernoulli}(\rho)$. We refer to $S(\calA)$ as the \emph{stopping random variable}. $S(\calA)=0$ corresponds to the case that the node $\calA$ is split into its left-child node $\calA_{\lchild}$ and right-child node $\calA_{\rchild}$. Following Bayesian CART prior, we  parameterize $\rho(\calA):=1-\rho_0 (1+ \text{depth}(\calA))^{-\beta_0}$ for $\rho_0\in (0, 1], \beta_0\geq 0$

Let $N_*$ denote the total number of ways to partition the region $\calA$, and let $\bflambda(\calA):=(\lambda_1, \ldots, \lambda_{N_*})^\top$ be the selection probability vector, where $\lambda_j$ denotes the probability of choosing the $j$th way to partition the region $\calA$ such that $\sum_{j=1}^{N_*} \lambda_j(\calA) = 1$ and $\lambda_j \in (0, 1)$.  Let $J(\calA)$ be the selection random variable taken values in $\{1, 2, \ldots, N_*\}$ with $P(J(\calA)=j) = \lambda_j$ for $j=1, \ldots, N_*$.  We say that the region $\calA$ is partitioned according to the partition rule $\calP$ if $J(\calA)=j$ and the region $\calA$ is partitioned into two child nodes according to the $j$th possible way: 
$\calA =  \calA_{\lchild}^j \cup \calA_{\rchild}^j$,    
where the superscript $j$ denotes the $j$th possible way of partitioning $\calA$. A typical choice of $\lambda_j$ is to assume equal prior probabilities among $N_*$ ways.   

A typical example of the partition rule is to consider axis-aligned binary partition (i.e., $N_*=d$), in which $J(\calA)$ becomes a random variable that picks up the dimension to cut. Once $J(\calA)$ is specified, we could assume fixed cutting points such as in the middle; see Figure~\ref{fig: binary tree} for an example. A more flexible way to cut is to endow the cutting point $L$ with a probabilistic distribution $F_L$ on $(0,1)$ so that adaptive partition is achieved. Let $\calA = \prod_{i=1}^d[a_i, b_i]$ and $c\in (0,1)$. Splitting $\calA$ according to axis $J=j$ and cut point $L=c$ gives its child nodes 
\begin{gather*}
\calA_\lchild:=[a_1, b_1] \times \cdots \times [a_j, a_j+c(b_j-a_j)) \times [a_{j+1}, b_{j+1}] \times \cdots \times [a_d, b_d] \\
\calA_\rchild:=[a_1, b_1] \times \cdots \times [a_j+c(b_j-a_j), b_j] \times [a_{j+1}, b_{j+1}] \times \cdots \times [a_d, b_d].
\end{gather*}
$F_L$ can be chosen to be simply as a continuous probability distribution on $(0,1)$, or a discrete distribution as in Bayesian CART prior. When $F_L$ is chosen to be a discrete mixture, e.g., $F_L  = {1}/{M}\sum_{i=1}^M \delta_{c_i}$, where $\bfc:=(c_1,\ldots, c_M)^\top$ is a vector of pre-determined cut points in $(0,1)$ or the empirical quantiles of data points (mapped to $(0,1)$ for consistency) and $M$ is the total number of cut points. $M$ should be chosen as computational budget allows. 

The triple $T:=(S,J,L)$ is the tree decision for partitioning a tree node $\calA$ recursively and defines the probability structure of the partition tree $\calT$ in the same way as the Bayesian CART prior with their hyperparameters denoted by $(\rho_0, \beta_0, \bflambda, \bfc)$.

\subsection{Knot selection} \label{subsec: knots}
Knot selection is a well-known problem in GP modeling \citep[e.g.,][]{Banerjee2008, Finley2009, Ma2019downscaling, Katzfuss2017MRA}. The knot selection in the ResTGP model is similar to the MRA \citep{Katzfuss2017MRA}. The number of knots in each node should be relatively small so that the resulting linear systems can be solved efficiently to construct predictive processes and residual processes at multiple scales. A good starting point is to choose the knots in a similar way to Vecchia approximations \citep{Katzfuss2021, Guinness2018} using methods such as maximin distance design. A decision-theoretic approach to knot selection could be further developed by minimizing certain risk functions such as Kullback-Leibler divergence or mean-squared errors. It should be noted that such an approach may incur computational burdens in searching for optimal knot selection. This investigation is left for future research. 

\section{Posterior Inference} \label{sec: Posterior inference} 
\subsection{Likelihood computation through recursion} \label{sec: likelihood recursion}

It is well-recognized that tree-based GP models \citep{Gramacy2008, Konomi2014, Luo2024} often rely on (reversible-jump) MCMC algorithms to search over the tree space, which can lead to very expensive computation.  A computational advantage of ResTGP is that inference based on the model can avoid MCMC-based stochastic tree search. This can lead to computational efficiency, especially in multi-dimensional domains with dimensions beyond two. 

The ResTGP model allows a closed-form, recursive representation of the marginal likelihood function by analytically integrating out the partition tree prior $\pi(\calT)$. This result is stated in the following theorem, with its proof given in Section S.4.3 of the Supplementary Material. 

\begin{theorem}\label{thm: marginal likelihood}  Suppose that the knots are sequentially selected from existing observations.
Let $q_0(\calA)$ denote the (marginal) likelihood when $\calA$ is a leaf node under either PP or WN leaf model.   Let $\Phi(\calA)$ be the marginal residual likelihood function in node $\calA$ when the partition tree $\calT$ is integrated out, where hyperparameters are suppressed. It can be established that if the node $\calA$ is a leaf node, $\Phi(\calA) = q_0(\calA)$, which is the likelihood of the corresponding residual process on $\calA$; otherwise, the marginal likelihood can be expressed recursively as
\begin{equation}\label{eqn: marginal likelihood}
\begin{split}
\Phi(\calA) = \rho(\calA) q_0(\calA) + (1-\rho(\calA)) \biggr\{ \sum_{j=1}^{d} \lambda_j q_{\calA}(\calR_{\calA}) 
\times \int \Phi(\calA^{j,c}_{\lchild}) \cdot \Phi(\calA^{j,c}_{\rchild}) F_\calA(\mathrm{d} c | j)  \biggr\}.
\end{split}
\end{equation}
\end{theorem}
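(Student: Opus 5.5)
We prove the recursion by conditioning on the first tree decision $T(\calA)=(S,J,L)$ at node $\calA$ and exploiting the fact that the prior on the subtree rooted at $\calA$ is generated recursively and independently across children. For a fixed subtree $\calT_\calA$ rooted at $\calA$, write $p(\xi_\calA(\calA)\mid \calT_\calA)$ for the likelihood of the residual data on $\calA$. We define $\Phi(\calA):=\sum_{\calT_\calA}\pi(\calT_\calA)\,p(\xi_\calA(\calA)\mid\calT_\calA)$, with the sum read as an integral over the continuous cut locations. The proof is then an induction on the (maximal) depth of the subtree, with the base case being a node forced to be a leaf. In that case $\rho(\calA)=1$ and $\Phi(\calA)=q_0(\calA)$ by definition of the PP or WN leaf likelihood; for WN, $q_0$ is the closed-form marginal after integrating $\mu_\calA,\kappa^2_\calA$ against the conjugate normal--inverse-gamma prior.

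For the inductive step we split on $S(\calA)$. On $\{S(\calA)=1\}$, which has prior probability $\rho(\calA)$, the node is a leaf and contributes $\rho(\calA)q_0(\calA)$. On $\{S(\calA)=0\}$ we further condition on $J(\calA)=j$ (probability $\lambda_j$) and $L=c$ (distribution $F_\calA(\cdot\mid j)$). Given the split, the key claim is the factorization
\[
p\bigl(\xi_\calA(\calA)\mid \calT_\calA\bigr)=q_\calA(\calR_\calA)\,p\bigl(\xi_{\calA_\lchild}(\calA_\lchild\setminus\calR_\calA)\mid\calT_{\calA_\lchild}\bigr)\,p\bigl(\xi_{\calA_\rchild}(\calA_\rchild\setminus\calR_\calA)\mid\calT_{\calA_\rchild}\bigr),
\]
where $q_\calA(\calR_\calA)$ is the Gaussian density of $\xi_\calA(\calR_\calA)$ with covariance $C_\calA(\calR_\calA,\calR_\calA)$. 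Because the knots are chosen from existing observations, the first part of this is exactly one step of Theorem~\ref{thm: orthogonal decomposition} with $k=1$. That step gives $p(\xi_\calA(\calA))=p(f_\calA(\calR_\calA))\,p(\xi_\calA(\calA\setminus\calR_\calA)-f_\calA(\calA\setminus\calR_\calA))$, with the predictive process independent of the residual. Moreover, $f_\calA(\calR_\calA)=\xi_\calA(\calR_\calA)$, since kriging interpolates at the knots. The residual block then splits into the two children, because the Definition of the residual process imposes zero cross-covariance between siblings; the Gaussian density of the residual vector therefore factorizes into the children's densities. For the PP leaf model the leaf residual is likewise a function only of the leaf's own residuals, so no cross-terms reappear further down the tree.

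Next we use the tree prior. Given the split $(j,c)$, the subtrees $\calT_{\calA_\lchild}$ and $\calT_{\calA_\rchild}$ are independent draws from the same recursive prior. The factor $q_\calA(\calR_\calA)$ does not depend on the subtrees, since it depends only on $\calA$, its knots and $C_\calA$, all fixed before splitting. The subtree sum therefore factorizes as
\[
\sum_{\calT_{\calA_\lchild},\calT_{\calA_\rchild}}\pi(\calT_{\calA_\lchild})\pi(\calT_{\calA_\rchild})\,p_\lchild\,p_\rchild=\Phi(\calA^{j,c}_\lchild)\,\Phi(\calA^{j,c}_\rchild)
\]
by Fubini/Tonelli, since all terms are nonnegative, together with the inductive hypothesis. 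Averaging over $c\sim F_\calA(\cdot\mid j)$ and $j$ with weights $\lambda_j$, multiplying by $1-\rho(\calA)$, and adding the leaf term yields \eqref{eqn: marginal likelihood}.

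The main obstacle is the factorization step, in particular verifying that the children's residual covariances $C_{\calA_\lchild},C_{\calA_\rchild}$ depend only on $\calA$'s knots and $C_\calA$ and not on how the subtree continues. This is what makes $\Phi(\calA_\lchild)$ a well-defined function of the child node alone. I would check it directly from the Remark's formula $C_\calB=(C_\calA-C_\calA^*)\mathds{1}(\bfx,\bfx'\in\calB)$, which is determined by the ancestors only. The same fact also justifies treating $\Phi$ as depending on the node alone, as required for the recursion to close.
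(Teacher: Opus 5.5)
Your proposal is correct and follows essentially the same route as the paper. The paper writes the conditional likelihood at $\calA$ as $S(\calA)q_0(\calA)+(1-S(\calA))\,q_\calA(\calR_\calA)\,q(\calA_\lchild^{J,L})\,q(\calA_\rchild^{J,L})$ and integrates $(S,J,L)$ against the tree prior; your version adds the justifications the paper leaves implicit (the knot/residual factorization via Theorem~\ref{thm: orthogonal decomposition}, sibling independence from the Definition, cut-independence of $\calR_\calA$, independence of the two subtree priors given the split, and Tonelli).
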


\begin{remark}
It follows immediately from Theorem~\ref{thm: marginal likelihood} that a bottom-up recursion can be used to compute the marginal likelihood %of observations $y(\calA^0)$ 
under the ResTGP($\bftheta, \calT$) model. It suggests an alternative to MCMC-based computation for the posterior. For a fixed tree with depth $D$,
the computational complexity for one likelihood evaluation with $n$ observations 
is bounded by $\calO(nr^2D^2 + nrdD)$, where $r$ is the maximum number of knots among all the tree nodes. Detailed analysis of computational cost is given in Section S.1.1. The linear scaling in the sample size is notable, but it does not indicate that exact computation of the posterior is always feasible, though often the case when there are less than a handful of dimensions in the domain. In higher dimensional cases, we present later a sequential Monte Carlo algorithm for posterior inference that maintains scalable computation given the tree. 
\end{remark}

\begin{remark} Theorem~\ref{thm: marginal likelihood} also implies that when the data are collected sequentially, it is only necessary to update the likelihood over the nodes where data are added without recomputing the likelihood over all the tree nodes. This has important consequences for online streaming data and sequential design of computer experiments. 
\end{remark}

\begin{remark} For given hyperparameters, let $\rho(\calA \mid \calA^0):=\mathbb{P}(S(\calA)|\bfy(\calA^0))$ be the \textit{posterior stopping probability} of $S$ at node $\calA$ and $\tilde{\lambda}_{j}(\calA \mid \calA^0):=\mathbb{P}(J=j \mid \calA^0)$ be the \textit{posterior selection probability} of $J$ for the $j$th way of partitioning for node $\calA$. Denote the posterior distribution of cutting variable $L$ by $\tilde{F}_{\calA}(\mathrm{d}c|j) = \mathbb{P}(L\in  \mathrm{d}c|J=j,S=0, \calA^0)$. Then it follows from Bayes' theorem and Theorem~\ref{thm: marginal likelihood} that 
\begin{equation} \label{eqn: tree posterior}
\begin{split}
    %\tilde{\rho}(\calA) &= \frac{\rho q_0(\calA)}{\Phi(\calA)}, \\
    \tilde{\lambda}_{j}(\calA) &\propto \lambda_j(\calA) q_{\calA}(\calR_{\calA})\times \int \Phi(\calA_{\lchild}^{j,c}) \Phi(\calA_{\rchild}^{j,c}) F_{\calA}(\mathrm{d} c | j), \quad j=1, \ldots, d, \\
    \tilde{\rho}(\calA) &= \frac{\rho q_0(\calA)}{\Phi(\calA)}, \quad \tilde{F}_{\calA}(\mathrm{d} c \mid j) = \frac{\Phi(\calA_{\lchild}^{j,c} ) \Phi(\calA_{\rchild}^{j,c} )}{\int \Phi(\calA_{\lchild}^{j,u} ) \Phi(\calA_{\rchild}^{j,u} ) F_{\calA}(\mathrm{d} u | j)} \cdot F_{\calA}(\mathrm{d} c | j)
\end{split}
\end{equation}
It should be mentioned that sampling from the posterior distribution of the partition tree using above formula is quite straightforward and computationally cheap as regular Monte Carlo simulation with Bernoulli distribution and categorical distribution can be used. 
\end{remark}

\subsection{Sequential Monte Carlo tree-sampling algorithms}

Suppose that the hyperparameters (i.e., covariance parameters) $\bftheta$ is assigned with a prior $\pi(\bftheta)$. The joint posterior of $\bftheta$ and tree $\calT$ under the ResTGP model is $p(\bftheta, \calT|\bfy) \propto p(\bfy|\calT, \bftheta) \times p(\calT) \times p(\bftheta)$. A key challenge here is to sample the tree structure. Exploiting the success of SMC algorithms for tree-based models \citep{Lakshminarayanan2013, Lu2013, Awaya2024}, we present a one-step look-ahead sequential Monte Carlo (SMC)  algorithm \citep{Lin2013}; see \citet{Liu2001} for an overview. The key idea of SMC method is to design a sequence of \textit{target} distributions to approximate the true distribution by sampling from a sequence of \textit{proposal} distributions, where each sample is called a particle. Each particle in ResTGP model contains a tree structure.  

Let $\calT_t$ denote the tree structure at step $t$ that is formed based on a sequence of tree decisions $\{T_k: k=0,\ldots, t\}$ up to $t$, where $T_t=(S_t, J_t, L_t)$ denotes the tree decision at step $t$. Then the target distribution in SMC is the posterior distribution of $\calT_t$ for which we wish to infer and  by Bayes theorem it is given by $p(T_t|\bfy,\bftheta) \propto \pi(\calT_t) p(\bfy|\calT_t, \bftheta)$. 
Given the current tree $\calT_t$, the next tree is formed by expanding the current tree with the next tree decision $T_{t+1}$. Following  \citet{Lakshminarayanan2013}, the next tree decision can be made by searching over each leaf node $\calA$ of $\calT$ in a breadth-first strategy until tree leaves are explored. This process repeats until the maximum tree level is reached. 

The tree growth process involves sampling from a proposal distribution $q(T_{t}|\calT_{t-1})$ and updating the tree decision $T_{t}$.  The posterior distribution of $\calT_{t}=\calT_{t-1} \cup T_t$  grown from $\calT_{t-1}$ is obtained as follows 
\begin{align} \nonumber 
p(\calT_t | \bfy, \bftheta)& \propto  p(\calT_t) \times p(\bfy|\calT_t, \bftheta) 
 = p(\calT_{t-1}) p(T_t| \calT_{t-1}) p(\bfy|\calT_{t-1}, \bftheta) \times \frac{p(\bfy|\calT_{t}, \bftheta) }{p(\bfy|\calT_{t-1}, \bftheta) } \nonumber \\ \label{eqn: target distribution}
& \propto p(\calT_{t-1}|\bfy, \bftheta) \cdot \text{BF}(\calT_{t}, \calT_{t-1}|\bftheta) \cdot p(T_t|\calT_{t-1}),
\end{align}
where $\text{BF}(\calT_{t}, \calT_{t-1}|\bftheta):=p(\bfy|\calT_{t}, \bftheta) / p(\bfy|\calT_{t-1}, \bftheta)$ is the Bayes factor (when the common covariance parameters are given). $p(T_t|\calT_{t-1})$ is the prior transition distribution of growing the  tree  $\calT_{t-1}$ to produce the tree $\calT_t$ via tree decision rule $T_t:=(S_t, J_t, L_t)$ whose distribution is specified by the Bayesian CART prior. Equation~\eqref{eqn: target distribution} provides a sequential formula to evaluate the current target distribution given the previous step of the target distribution. In order to sample from the target $p(\calT_t | \bfy, \bftheta)$, we use a one-step ahead strategy to construct the proposal distribution $q(T_{t}|\calT_{t-1})$ and then draw from this one-step ahead tree posterior to grow the tree in a node-wise breadth-first manner. The one-step look-ahead proposal distribution is given in Section S.1.

Suppose that there are $N$ particles with initial weights $\{w_{(0)}^i: i=1,\ldots, N\}$ are initialized in the SMC algorithm. At step $t$, the posterior of tree decision $T_t$ is used to generated according to the one-step look-ahead proposal distribution. This is performed independently for each particle and thus leads to a set of $N$ realized trees $\{\calT_t^i:i=1,\ldots, N\}$. The weight for the $i$th particle after the $t$th step (to grow the tree) is updated from  $w_{(t-1)}^i$ to current weight 
\begin{align*}
w_{(t)}^i = w_{(t-1)}^i \cdot \frac{p(\calT_t^i|\calT_{t-1}^i)}{q_{t|t-1}(\calT_t^i|\calT_{t-1}^i, \bfy) } \cdot \text{BF}(\calT_t^i, \calT_{t-1}^i|\bftheta),
\end{align*} 
where $q_{t|t-1}(T_t^i|\calT_{t-1}^i, \bfy):=q(T_t^i|\calT_{t-1}^i)$ is the one-step look-ahead proposal distribution that produces the realized tree $\calT_t^i$ from the tree $\calT_{t-1}^i$.  As $t$ evolves, the distribution of importance weights $w_{(t)}^i$ often becomes highly skewed and results in the so-called weight degeneracy issue \citep{Liu2001, Lin2013}. This can be alleviated by resampling techniques. Most often the effective sample size (ESS) is used to trigger the resampling step \citep{Lin2013}. 

In the SMC algorithm, we can obtain an unbiased estimator $\hat{Z}(\bftheta)$ for the marginal likelihood $Z(\bftheta):=\int p(\bfy| \calT, \bftheta) \mathbb{P}(\mathrm{d}\calT)$, which is also referred as model evidence. This quantity will be used for likelihood comparison in Section 5.1 and SMC diagnostics by monitoring its value across SMC steps. We introduce a novel \textit{path-sharing} implementation of the SMC algorithm with detailed complexity analysis in Algorithm~1 in Section S.1.  The central design strategy is to identify each node's information by its full ancestor path, evaluate the computation once, and let each particle use the result to grow the tree according to the look-ahead strategy in SMC. The resulting computational cost is bounded by  $\calO(nr^2D^2 + nrdD)$ for one likelihood evaluation given the tree and memory cost is bounded by $\calO(NnrD)$ for $N$ particles. This strategy avoids computational bottleneck and memory explosion in a naive implementation where all the computation within each particle is run independently. 

A convenient way to estimate $\bftheta$ is to maximize $\hat{Z}(\bftheta) \pi(\bftheta)$ to obtain the maximum a posterior (MAP) estimate. To account for the uncertainty in $\bftheta$,  
a fully Bayesian inference algorithm called particle marginal Metropolis-Hastings (PMMH) sampler \citep{Andrieu2010} is also given in Algorithm~2 in Section~S.1.

\subsection{Predictive inference}
A common inferential task in spatial data analysis is to make prediction at any input $\bfx_0$ and quantify the uncertainty through the (posterior) predictive distribution of $y(\bfx_0)$ given the data $y(\calA^0)$. A nice property of the ResTGP model is that the joint predictive distribution over multiple prediction locations is valid, and thus provides legitimate predictive uncertainty quantification, for which local methods often fail to offer. 

To derive the predictive distribution $p(y(\bfx_0)\mid y(\calA^0), \bftheta, v)$, we need to first find the predictive distribution $p(y(\bfx_0)\mid y(\calA^0), \bftheta,v, \calT)$ given the covariance parameters $\bftheta$ and the partition tree. Let $\phi_\calT(\bfx_0)$ and $V_\calT(\bfx_0)$ denote the predictive mean and predictive variance of $y(\bfx_0)$ given $\bfy, \calT, \bftheta, v$. Let $\calA_*(\bfx)$ denote a leaf node containing $\bfx$. Denote the ancestor nodes of $\calA_*$ by $\downarrow\!\calA_*(\bfx):=\{\calA \in \calI: \bfx \in \calA\}$. At the root, define $h_{\calA}(\bfx):= \mu(\bfx) + C_{\calA}(\calR_\calA, \bfx)^\top C_{\calA}(\calR_\calA, \calR_\calA)^{-1}\{\xi_{\calA}(\calR_\calA) - \mu(\calR_\calA)\}$; at non-root node, define $h_{\calA}(\bfx):= C_{\calA}(\calR_\calA, \bfx)^\top C_{\calA}(\calR_\calA, \calR_\calA)^{-1}\xi_{\calA}(\calR_\calA)$. Denote by $h_{\calT}(\bfx):= \sum_{\calA \in \downarrow \!\calA_*(\bfx)} h_{\calA}(\bfx)$. Then $y(\bfx_0)$ can be expressed as $y(\bfx_0) =  h_{\calT}(\bfx_0) + \xi_{\calA_*}(\bfx_0)$. 

Let $q(z \mid \calA):=p_{\xi(\bfx_0)}(z \mid \bfy, \bftheta, \calT), z \in \mathbb{R}$ be the \textit{residual predictive distribution} of $\xi(\bfx_0)=z$ given $\bfy$, $\bftheta$ and $\calT$.  If $\calA$ is a leaf node, $\xi_{\calA}(\bfx_0)$ is modeled according to the leaf model $q_0$ with $q(z\mid \calA) = q_0(z \mid \calA)$. For PP, $\xi_{\calA}(\bfx_0) \sim N(\phi_0(\bfx_0; \calA),\, V_0(\bfx_0; \calA))$; for WN,  $\xi_{\calA}(\bfx_0) \sim N(\mu_{\calA}, \kappa^2_\calA), \mu_{\calA} \sim N(\mu_0, \kappa^2_\calA/\kappa_0), \kappa^2_\calA \sim IG(a_0, b_0)$. If $\calA$ is an internal node with immediate child $\calB(\bfx_0)$ that contains $\bfx_0$,  $q(z | \calA) = q(z - h_{\calA}(\bfx_0) \mid \calB(\bfx_0))$ since $ h_{\calA}(\bfx_0)$ is fixed whenever $\calT, \bftheta$ are fixed. Recursively repeating this construction for internal nodes along the ancestor nodes that contain $\bfx_0$ would yield the closed-form expressions for the predictive distribution  $p(y(\bfx_0)\mid y(\calA^0), \bftheta, v, \calT)$.  

\begin{theorem} \label{thm: predictive distribution}
Let $\calT, \bftheta, v$ be fixed. Let $\calA$ denote the terminal node containing $\bfx_0$.  Then 
\begin{itemize}[itemsep=-2pt, topsep=0pt, partopsep=0pt]
\item[(i)] Under PP leaf model, $y(\bfx_0) | \bfy, \calT, \bftheta, v \sim N(\phi_{\calT}(\bfx_0), V_{\calT}(\bfx_0))$ where 
$
\phi_{\calT}(\bfx_0) :=  h_{\calT}(\bfx_0) + h_{\calA}(\bfx_0)$, 
$V_{\calT}(\bfx_0) := C_{\calA}(\bfx_0, \bfx_0) - C_{\calA}(\calR_\calA, \bfx_0)^\top {C}_{\calA}(\calR_\calA, \calR_\calA)^{-1} C_{\calA}(\calR_\calA, \bfx_0) 
$.
\item[(ii)] Under WN leaf model, $y(\bfx_0) | \bfy, \calT, \bftheta, v \sim  t_{2\tilde{a}}(\phi_{\calT}(\bfx_0), V_{\calT}(\bfx_0) (\tilde{a}-1)/\tilde{a})$ provided that $\tilde{a}>1$, where
$
\phi_{\calT}(\bfx_0):=  h_{\calT}(\bfx_0) + \tilde{\kappa}^{-1} \{ \kappa_0 \mu_0 + n_\calA \bar{\xi}(\calA) \},\, 
V_{\calT}(\bfx_0):= \tilde{b} (1 + \tilde{\kappa}^{-1}) / (\tilde{a} -1)
$
with $\tilde{a} := a_0 + n_\calA/2, \tilde{\kappa}:=\kappa_0 + n_\calA, \bar{\xi}(\calA) := {1}/{n_\calA}\sum_{i=1}^{n_\calA} \xi_i(\calA)$, $\tilde{b}:=b_0 + 1/2 \sum_{i=1}^{n_\calA} [ \xi_i(\calA)  - \bar{\xi}(\calA)]^2 + \kappa_0 n_\calA / (2\tilde{\kappa}) [\bar{\xi}(\calA) - \mu_0]^2$.
\end{itemize}
\end{theorem}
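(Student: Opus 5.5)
The argument rests on the additive decomposition $y(\bfx_0) = h_{\calT}(\bfx_0) + \xi_{\calA}(\bfx_0)$ stated just before the theorem, where $\calA$ is the leaf containing $\bfx_0$. The plan is first to show that, conditional on $\bfy,\calT,\bftheta,v$, the term $h_{\calT}(\bfx_0)$ is a deterministic function of the data. Then the predictive law of $y(\bfx_0)$ is just the residual predictive law $q(\cdot\mid\calA)$ shifted by $h_{\calT}(\bfx_0)$. This is exactly the recursion $q(z\mid\calA)=q(z-h_\calA(\bfx_0)\mid\calB(\bfx_0))$ unrolled along $\downarrow\!\calA_*(\bfx_0)$.

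For the first step, I will argue by induction down the ancestor path that the knot values $\xi_\calB(\calR_\calB)$ are measurable functions of $\bfy$ for every internal node $\calB\ni\bfx_0$. Knots are chosen among observed inputs, so $\xi_{\text{root}}(\calR)=y(\calR)$ is observed. At a child node, $\xi_{\calB}=\xi_{\text{parent}}-f_{\text{parent}}$ evaluated at observed inputs, and both terms are known by induction. Consequently each $h_\calB(\bfx_0)$ is fixed, and so is their sum $h_\calT(\bfx_0)$.

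Next I need the residual $\xi_\calA(\bfx_0)$ at the leaf to depend on the data only through the leaf residuals $\xi_\calA(\calA\cap\calA^0)$. This follows from the mutual independence in Theorem~\ref{thm: orthogonal decomposition} together with the blockwise independence across siblings in the residual-process definition. The variables $f_\calB(\calR_\calB)$ at ancestors, and the residuals in other branches, are independent of $\xi_\calA(\cdot)$. The data $\bfy$ is a bijective transform of these variables and the leaf residuals, so conditioning on $\bfy$ reduces to conditioning on $\xi_\calA(\calA)$. This sufficiency step is the one I expect to need the most care, because the per-node blockwise assumption replaces the full-scale residual and must be stated cleanly.

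With that in place, the two cases are routine. For (i), under the PP leaf model $\delta_\calA(\bfx_0)=C_\calA(\calR_\calA,\bfx_0)^\top C_\calA(\calR_\calA,\calR_\calA)^{-1}\xi_\calA(\calR_\calA)+\epsilon_*(\bfx_0)$. Here the knot values are known, and $\epsilon_*(\bfx_0)$ is independent of all data with the diagonal variance shown. Hence the predictive is Gaussian with mean $h_\calT(\bfx_0)+h_\calA(\bfx_0)$ and variance $V_\calT(\bfx_0)$.

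For (ii), under the WN model I will use standard Normal--Inverse-Gamma conjugacy on the $n_\calA$ leaf residuals. The posterior is $\mu_\calA\mid\kappa^2_\calA\sim N(\tilde\kappa^{-1}(\kappa_0\mu_0+n_\calA\bar\xi),\kappa^2_\calA/\tilde\kappa)$ and $\kappa^2_\calA\sim IG(\tilde a,\tilde b)$. Integrating out $\mu_\calA$ and then $\kappa^2_\calA$ gives a Student-$t$ with $2\tilde a$ degrees of freedom, location $\tilde\kappa^{-1}(\kappa_0\mu_0+n_\calA\bar\xi)$, and squared scale $\tilde b(1+\tilde\kappa^{-1})/\tilde a$. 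That scale equals $V_\calT(\bfx_0)(\tilde a-1)/\tilde a$, where $V_\calT$ is the predictive variance and requires $\tilde a>1$. Shifting by $h_\calT(\bfx_0)$ then completes (ii).
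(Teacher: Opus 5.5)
Your proposal is correct and follows essentially the same route as the paper. Both arguments write $y(\bfx_0)=h_{\calT}(\bfx_0)+\xi_{\calA}(\bfx_0)$ and use that each $h_{\calB}(\bfx_0)$ along the ancestor path is fixed given $\bfy,\calT,\bftheta,v$, so the predictive law is the leaf residual predictive shifted by $h_{\calT}(\bfx_0)$; they then read off the modified-PP Gaussian and the Normal--Inverse-Gamma Student-$t$ at the leaf. Your induction showing that the knot residuals are functions of $\bfy$, and your independence/bijection argument for sufficiency of the leaf residuals, fill in steps the paper takes for granted when it sets $q(z\mid\calA)=q_0(z\mid\calA)$ at the leaf.
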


The next corollary shows that after marginalizing out the partition tree, the residual predictive mean can still be derived in closed form through a bottom-up recursive formula. 
\begin{corollary} \label{cor: predictive mean}
Let $\phi^*(\bfx_0; \calA):=E_{p(\xi(\bfx_0)\mid \bfy, \bftheta, v)}[\xi(\bfx_0)]$ denote the residual predictive mean of the predictive distribution $p(\xi(\bfx_0)\mid \bfy, \bftheta, v)$ on any potential node $\calA$ that could arise in a partition tree.
If the node $\calA$ is a leaf node, $\phi^*(\bfx_0; \calA)=\phi_0(\bfx_0; \calA)$; otherwise, 
\begin{equation*}
    \begin{split}
        \phi^*(\bfx_0; \calA) = \tilde{\rho}(\calA) \phi_0(\bfx_0; \calA)  &+ (1-\tilde{\rho}(\calA)) \bigr\{h_{\calA}(\bfx_0)
     + \sum_{j=1}^{d}\tilde{\lambda}_j(\calA) \int \phi^*(\bfx_0; \calB^{j,c}) \tilde{F}_\calA(\mathrm{d} c | j) \bigr\}.
    \end{split}
\end{equation*}
where $\calB^{j,c}$ denotes the child node of $\calA$ that contains $\bfx_0$ under cut rule $J=j, L=c$. 
\end{corollary}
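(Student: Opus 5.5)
The recursion follows from the law of total expectation applied to the random tree decision at $\calA$, conditioning successively on $S(\calA)$, $J(\calA)$ and $L(\calA)$, combined with the conditional independence of the subtree below $\calA$ given its parent decisions. Fix $\bfy,\bftheta,v$, and regard the subtree rooted at a potential node $\calA$ as random with the posterior law described by $\tilde\rho(\calA)$, $\tilde\lambda_j(\calA)$ and $\tilde F_\calA(\mathrm{d}c\mid j)$ in \eqref{eqn: tree posterior}. Because the marginal likelihood factorizes recursively (Theorem~\ref{thm: marginal likelihood}), the posterior of the tree is again a recursive partitioning process. Its node-level probabilities are exactly these tilded quantities, and the subtrees of the two children are conditionally independent given the split. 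I would verify this first, by writing the posterior of a full tree as a product of the prior factors times the product of $q_0$ over leaves and $q_\calA(\calR_\calA)$ over internal nodes, and then normalizing bottom-up with $\Phi$.

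Next, decompose on $S(\calA)$. With posterior probability $\tilde\rho(\calA)$, $\calA$ is a leaf, and by Theorem~\ref{thm: predictive distribution} the residual predictive mean is the leaf-model mean $\phi_0(\bfx_0;\calA)$: the PP kriging mean, or the WN posterior mean $\tilde\kappa^{-1}\{\kappa_0\mu_0+n_\calA\bar\xi(\calA)\}$. With probability $1-\tilde\rho(\calA)$, $\calA$ is split. By the construction preceding Theorem~\ref{thm: predictive distribution}, the residual then satisfies $\xi_\calA(\bfx_0)=h_\calA(\bfx_0)+\xi_{\calB}(\bfx_0)$, where $\calB=\calB^{j,c}$ is the child containing $\bfx_0$. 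Here $h_\calA(\bfx_0)$ is deterministic given the data, since the knots $\calR_\calA$ are observed and do not depend on the split below $\calA$. It therefore factors out of the expectation and does not depend on $(j,c)$.

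Then condition on $J=j$ and $L=c$ with posterior weights $\tilde\lambda_j(\calA)$ and $\tilde F_\calA(\mathrm{d}c\mid j)$. The conditional expectation of $\xi_\calB(\bfx_0)$ given this split is the residual predictive mean on node $\calB^{j,c}$ with its own subtree marginalized, namely $\phi^*(\bfx_0;\calB^{j,c})$. This step uses the conditional independence of the two child subtrees: the sibling not containing $\bfx_0$ contributes only a factor $\Phi$, which is already absorbed in $\tilde F$, and does not affect the conditional law of the subtree at $\calB$. Summing over $j$ and integrating over $c$ gives the stated formula. At maximal depth, where no further split is allowed, $\phi^*=\phi_0$, which is the base case.

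The main obstacle is the first step: making rigorous that the posterior over subtrees at a child, conditional on the split at $\calA$, coincides with the posterior used to define $\phi^*(\cdot;\calB^{j,c})$. Equivalently, the data outside $\calB$ and the split decisions above must enter only through the residuals $\xi_\calB$ passed down. I would handle this by noting that, given $\bftheta$ and the knots, the residuals on $\calB$ are fixed functions of the data. The likelihood factor for the subtree at $\calB$ is then exactly $\Phi(\calB)$, by the same recursion used for Theorem~\ref{thm: marginal likelihood}, so Bayes' theorem at each level yields the claimed conditional law by induction on depth.
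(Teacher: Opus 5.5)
Your proposal is correct and follows essentially the same route as the paper: the corollary comes from total expectation over the posterior decisions $(S,J,L)$ at $\calA$, with weights $\tilde\rho(\calA)$, $\tilde\lambda_j(\calA)$ and $\tilde F_\calA(\mathrm{d}c\mid j)$ from \eqref{eqn: tree posterior}, combined with the fixed-tree identity $\phi(\bfx_0;\calA)=h_\calA(\bfx_0)+\phi(\bfx_0;\calB)$ from the proof of Theorem~\ref{thm: predictive distribution}. The paper gives no separate proof, so your check that the tree posterior is itself a recursive partitioning process, with conditionally independent child subtrees and node factors $\Phi$ from Theorem~\ref{thm: marginal likelihood}, spells out a step the paper leaves implicit.
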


\section{Theoretical results}\label{sec: theory}
We establish almost-sure posterior consistency for the white-noise leaf model of the residual-tree GP. All proofs are in Section~S.5 of the Supplementary Material.%\ref{supp: main results}.

\subsection{Notation and preliminaries}\label{sec: notations}
Let $g$ denote the regression function under the ResTGP model on $\scrD=[0,1]^d$: 
$$
g(\bfx) = \sum_{\calA \in \calI} f_{\calA}(\bfx) + \sum_{\calA\in \calL} \delta_{\calA} \mathds{1}(\bfx \in \calA),
$$
where $f_A$ is assumed to be constructed from the process $f$ and $\delta_{\calA} \sim N(\mu_{\calA}, \kappa_{\calA}^2)$ with $\mu_\calA|\kappa^2_\calA\sim N(\mu_0, \kappa^2_\calA/\kappa_0)$. Further assume that $\mu(\bfx)=0$ without loss of generality. Suppose that $f$ has the input-scaled Mat\'ern covariance function with fixed smoothness $\nu>0$: $\calM_{\nu}(\bfh):=\sigma^2 2^{1-\nu} h_0^{\nu}\mathcal{K}_{\nu}(h_0) /\Gamma(\nu)$, where $\calK_{\nu}$ is the modified Bessel function of the second kind of order $\nu$ and $h_0:=\sqrt{\sum_{i=1}^d h_i^2/\ell_i^2}$. Write $\bftheta=(\theta_0,\ldots,\theta_d)=(\log\sigma^2,$ $\log\ell_1,\ldots,\log\ell_d)$ for variance parameter and range parameters in log scale.

Let $\Pi_g$ be the resulting function prior under the assumptions below and $\Pi_v$ the prior for measurement-error variance; write $\Pi_{g,v}=\Pi_g\otimes\Pi_v$.
Write $y_i:=y(\bfx_i)$ and $\mathcal O_n=(O_1,\ldots,O_n)$, where $O_i:=y_i$ in fixed design and $O_i:=(\bfx_i,y_i)$ in random design. Single subscripts refer to the individual experiment; fixed locations/inputs may change with $n$. For a possible observation vector $o$, with $\varphi_v$ denoting $N(0,v)$, the sampling density is
$
p_{g,v}^{(n)}(o)=\prod_{i=1}^n\varphi_v(y_i-g(\bfx_i)).
$
It is relative to Lebesgue measure in fixed design and $(\mathbb P_X\otimes \mathrm{d}y)^{\otimes n}$ in random design, where $\mathbb P_X$ is the probability measure on input. For Borel sets $A$ in the space of $(g, v)$, the posterior is
\begin{equation}\label{eqn: function-posterior}
\Pi_{g,v}(A\mid\mathcal O_n)
=\frac{\int_A p_{g,v}^{(n)}(\mathcal O_n)\,\Pi_{g,v}(\mathrm{d}g,\mathrm{d}v)}
{\int p_{g,v}^{(n)}(\mathcal O_n)\,\Pi_{g,v}(\mathrm{d}g,\mathrm{d}v)}.
\end{equation}
Its marginals are $\Pi_g(\,\cdot\mid\mathcal O_n)$ and $\Pi_v(\,\cdot\mid\mathcal O_n)$ \citep[Section 1.3]{Ghosal2017}. 

Put $\mathbb P_{X,n}=n^{-1}\sum_i\delta_{x_i}$ in fixed design and $\mathbb P_{X,n}=\mathbb P_X$ in random design. For an input probability measure $\mathbb P$, use $d_\mathbb P(g,g_0)=\inf\{u>0:\mathbb P(|g-g_0|>u)<u\}$ for convergence in measure, and the  Hellinger distance
$
H_\mathbb P^2((g,v),(g_0,v_0))=\int
\{\sqrt{\varphi_v(y-g(\bfx))}-\sqrt{\varphi_{v_0}(y-g_0(\bfx))}\}^{2}\,\mathrm{d}\mathbb P(\bfx) \times \mathrm{d} y.
$

\subsection{General consistency conditions}\label{sec: general-conditions}
Consider a fixed proper prior $\Pi_{g,v}$, jointly measurable positive sampling densities and a proper posterior, with all experiments defined on a common probability space under $\mathbb P_0$. For each $\epsilon>0$, the following conditions are sufficient for almost-sure posterior consistency.
\begin{description}[leftmargin=0.8cm,nosep]
\item[C1.] For every $a>0$,
\begin{equation*}%\label{cond: C1}
\int\frac{p_{g,v}^{(n)}(\mathcal O_n)}{p_{g_0,v_0}^{(n)}(\mathcal O_n)}\,\Pi_{g,v}(\mathrm{d}g,\mathrm{d}v)\geq e^{-an}
\quad \mathbb P_0\text{-almost surely, eventually}.
\end{equation*}
\item[C2.] There are  measurable sieves $\mathcal S_n$ and constants $0<C_\epsilon<\infty$, $c_\epsilon>0$ such that, for all sufficiently large $n$, 
$\Pi_{g,v}(\mathcal S_n^c)\leq C_\epsilon e^{-c_\epsilon n}.
$
\item[C3.] On the same sieves there are deterministic Borel tests $\psi_n$ taking values in $\{0,1\}$ on the observation space such that, for all sufficiently large $n$,
\begin{equation*}%\label{cond: C3}
\mathbb E_0\psi_n(\mathcal O_n)\leq C_\epsilon e^{-c_\epsilon n},\qquad
\sup_{\substack{(g,v)\in\mathcal S_n\\H_{\mathbb P_{X,n}}((g,v),(g_0,v_0))>\epsilon}}
\mathbb E_{g,v}^{(n)}(1-\psi_n)\leq C_\epsilon e^{-c_\epsilon n}.
\end{equation*}
Here $\mathbb E_0, \mathbb E_{g,v}^{(n)}$ denote expectation under the true law and the sampling density $p_{g,v}^{(n)}$, respectively. 
\end{description}
Condition C1 is the posterior denominator bound used in the proof of \citet[][Theorem 6.39]{Ghosal2017}. For a sequence of nested observations, this bound appears in \citet[][Appendix A, Equation (A.5)]{Choi2007}. Here C1 is obtained from positive prior mass and likelihood moments using \citet[Lemma 8.21]{Ghosal2017}. Independence between experiments is not required. Condition C2 is the exponential prior-mass condition applied to the sieve complement; see \citet[][Theorem 6.39 (b)]{Ghosal2017}. Condition C3 is the testing condition which is a stronger version of \citet[][Theorem 6.39(a)]{Ghosal2017}. Conditions C2 and C3 imply the testing condition in Theorem 1 of \citet{Choi2007}.  Under C1--C3, $\Pi_{g,v}(H_{\mathbb P_{X,n}}((g,v),(g_0,v_0))>\epsilon\mid\mathcal O_n)\to0$ $\mathbb P_0$-almost surely for every $\epsilon>0$. This criterion follows the consistency framework of \citet[Theorems 6.17 and 6.39]{Ghosal2017}; its general-experiment proof is in Section~S.5.5. What follows is to state explicit assumptions needed to verify these conditions.  

\subsection{Assumptions}\label{ass: model assumptions}
The following hypotheses specify the truth, design and fixed prior.
\begin{assumption}\label{ass: truth}
Fix $g_0\in C(\scrD)$ and $v_0>0$. Under the true law $\mathbb P_0$, use either:
\begin{enumerate}[label=(\roman*),leftmargin=*,nosep]
\item For each $n$, let $\bfx_1, \ldots, \bfx_n \in \scrD$ be design locations with their dependence on $n$ suppressed. Let $y_i$ denote their random response at $\bfx_i$ on a common probability space, with $
y_i=g_0(\bfx_i)+\sqrt{v_0}\varepsilon_i,
\varepsilon_1,\ldots,\varepsilon_n\overset{iid}{\sim}N(0,1).
$
\item An iid sequence $(\bfx_i,\varepsilon_i)$ with law $\mathbb P_X\otimes N(0,1)$, where $\mathbb P_X$ is a fixed Borel probability measure on $\scrD$, and $y_i=g_0(\bfx_i)+\sqrt{v_0}\varepsilon_i$.
\end{enumerate}
\end{assumption}

\begin{assumption}\label{ass: tree-prior}
The Bayesian CART prior satisfies: (i) $0<\rho_0<1$ and $\beta_0>0$; (ii) $\lambda_j>0$, and $\sum_{j=1}^d\lambda_j=1$; (iii) $\{c_1, \ldots, c_M\} \subset (0,1)$ with one of them being $1/2$.
\end{assumption}

\begin{assumption}\label{ass: knots}
Every internal node has a deterministic ordered knot set $\mathcal R_{\mathcal A}$ in its cell interior, with $1\leq |\mathcal R_{\mathcal A}|\leq r$ for a fixed finite integer $r$. Knots are distinct within a node and do not coincide with ancestor knots. The rule  does not depend on $n$. 
\end{assumption}

\begin{assumption}\label{ass: covariance}
The prior on covariance parameters is supported on
$[\log\sigma_-^2,\log\sigma_+^2]\times\mathbb R^d$, where $0<\sigma_-^2\leq\sigma_+^2<\infty$, and satisfies, for all sufficiently large $n$ and constants $C_\ell, c_\ell$,
\begin{equation}\label{eqn: range-tail}
\Pi_{\bftheta}\!\left(\max_{1\leq j\leq d}|\theta_j|>n\right)\leq C_\ell e^{-c_\ell n},
\qquad 0\leq C_\ell<\infty,\quad c_\ell>0.
\end{equation}
The error variance $v$ has a proper prior $\Pi_v$ supported on $[v_-,v_+]$, where $0<v_-<v_0<v_+<\infty$, with $\Pi_v(|v-v_0|<\delta)>0$ for every $\delta>0$. Assume that $\mu_0\in \mathbb{R}$, $\kappa_0>0$ are fixed constants, and there exists a constant $\kappa_+^2>0$ such that $\kappa^2_\calA<\kappa_+^2$ for all $\calT$ and $\calA \in \calL$. 
\end{assumption}

\begin{remark}
Only continuity of $g_0$ is assumed but prior functions may jump. The knots require no uniform separation or space-filling condition. Covariance parameters may be fixed, and random ranges are allowed under \eqref{eqn: range-tail}. 
\end{remark}

\subsection{Posterior consistency}\label{sec: posterior consistency}
Under these assumptions, Propositions~P4--P5 verify C1, Proposition~P6 verifies C2, and Proposition~P7 verifies C3 on the same sieve. These propositions show new model-specific results. The formal consistency theorem is stated below.
\begin{theorem}[Posterior consistency]\label{thm: consistency}
Under Assumptions~\ref{ass: truth}--\ref{ass: covariance}, with fixed truth $g_0\in C(\scrD)$, for the fixed prior $\Pi_{g,v}$ described above and every $\epsilon>0$,
\[
\Pi_{g,v}\!\left(H_{\mathbb P_{X,n}}((g,v),(g_0,v_0))>\epsilon\mid\mathcal O_n\right)
\longrightarrow0\qquad \mathbb P_0\text{-almost surely}.
\]
\end{theorem}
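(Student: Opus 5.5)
Since the general criterion of Section~\ref{sec: general-conditions} (conditions C1--C3) already gives almost-sure consistency in $H_{\mathbb P_{X,n}}$, the proof of Theorem~\ref{thm: consistency} reduces to checking C1, C2 and C3 for the ResTGP prior with white-noise leaves, under Assumptions~\ref{ass: truth}--\ref{ass: covariance}. I would verify the three conditions in that order, using the same sieve for C2 and C3, and then invoke the criterion.

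\textbf{C1 (prior positivity and denominator bound).} Fix $a>0$. The aim is positive prior mass on a sup-norm neighbourhood $\{(g,v):\|g-g_0\|_\infty<\eta,\ |v-v_0|<\eta\}$ for every $\eta>0$. Since $g_0$ is uniformly continuous on $[0,1]^d$, pick a depth $D$ such that $g_0$ oscillates by less than $\eta/2$ on every cell of the complete dyadic tree with midpoint cuts. Assumption~\ref{ass: tree-prior}(i)--(iii) gives this tree positive prior probability: midpoint cuts are allowed, $\lambda_j>0$, and the splitting probabilities are strictly between $0$ and $1$ at each depth. Conditional on this tree and on a fixed $\bftheta$, $g$ is Gaussian: a finite sum of predictive processes with continuous paths plus independent Gaussian leaf constants. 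For the leaves, choose $\delta_\calA\approx g_0(\text{center of }\calA)-\sum_{\calB\in\downarrow\calA}f_\calB$. Because the $\delta_\calA$ are independent of the internal-node processes and have full support on $\mathbb R$ (the hyperpriors being proper and $\kappa_\calA^2$ bounded), the internal part needs only to be bounded in sup norm with positive probability, which holds for continuous Gaussian processes on compacts. This gives positive mass. Since $v$ lies in a compact set away from zero, the Gaussian log-likelihood ratio is bounded by a constant times $\eta$ plus a martingale or iid term. This yields the Kullback--Leibler and variance conditions, and then Lemma 8.21 of Ghosal--van der Vaart gives C1 almost surely, with $\eta$ small relative to $a$. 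In fixed design the strong law is replaced by a Kolmogorov-type a.s. bound for weighted Gaussian sums.

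\textbf{C2 (sieve).} Take $\mathcal S_n=\{(g,v): \text{tree depth}\le D_n,\ \max_j|\theta_j|\le n,\ \|g\|_\infty\le M_n,\ v\in[v_-,v_+]\}$, with $D_n\asymp\sqrt n$ and $M_n\asymp\sqrt n$. The depth tail is superexponentially small because the split probability $\rho_0(1+\text{depth})^{-\beta_0}$ decays in depth. The range tail is exactly \eqref{eqn: range-tail}. The sup-norm tail is handled in two parts. At depth $\le D_n$ there are at most $r2^{D_n}$ knots, and by the kriging structure each $f_\calA$ is controlled by the knot values through bounded weights ($|C_\calA(\bfx,\calR)C^{-1}\xi(\calR)|$ has variance at most $\sigma_+^2$). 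A union bound over at most $2^{D_n}$ nodes with Gaussian tails $e^{-M_n^2/c}$ then gives $e^{-cn}$. The leaf constants $\delta_\calA$ have variance at most $\kappa_+^2+\kappa_+^2/\kappa_0$ and are treated the same way.

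\textbf{C3 (tests) and main obstacle.} Functions in $\mathcal S_n$ are piecewise, with at most $2^{D_n}$ cells at depth $\le D_n$, and each piece is sup-bounded by $M_n$. The task is to build tests for Hellinger separation with exponential error, from local tests against Hellinger balls (which exist for Gaussian regression with bounded variance, as in Choi--Schervish) combined through a covering-number bound $\log N(\epsilon,\mathcal S_n,H)=o(n)$. This is the main obstacle: the pieces are not smooth, because range parameters up to $e^{n}$ make the predictive processes arbitrarily rough. The first approach to try is to cover in $L_1(\mathbb P_{X,n})$, or in convergence-in-measure $d_{\mathbb P}$, rather than in sup norm, exploiting that $H$ is controlled by $\int|g-g'|\wedge 1\,d\mathbb P+|v-v'|$. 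Each $f_\calA$ is a linear combination of at most $r$ kernel functions $C_\calA(\cdot,\calR_\calA)$ with bounded coefficients, and Matérn kernels are monotone in the scaled distance. So each $f_\calA$ lies in a class of bounded VC-type (pseudo-)dimension, and the sum over a depth-$D_n$ tree lies in a class of dimension $O(r2^{D_n}d)$. With $D_n$ chosen so that $2^{D_n}\log n=o(n)$ (adjusting $D_n\approx \log_2 n-\log\log n$, which still gives an exponentially small depth tail), the entropy is $o(n)$, and the standard construction yields tests satisfying C3. Combining C1--C3 with the general criterion proves the theorem.
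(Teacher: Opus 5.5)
Your architecture matches the paper's: verify C1--C3 on a common sieve, invoke the criterion, and get prior positivity from the complete midpoint tree. But the sieve you pick cannot satisfy C2 and C3 at the same time, and this is the crux of the proof.

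Under the Bayesian CART prior, the event that the leftmost root-to-depth-$D$ path splits at every level already has probability $\prod_{h=0}^{D-1}\rho_0(1+h)^{-\beta_0}=\rho_0^{D}(D!)^{-\beta_0}=e^{-\beta_0 D\log D\,(1+o(1))}$. Hence $\Pi(\text{depth}\geq D_n)$ is at least $e^{-c\sqrt n\log n}$ for $D_n\asymp\sqrt n$, and at least $n^{-c\log\log n}$ for $D_n\approx\log_2 n$, whereas C2 requires $e^{-cn}$. Getting $e^{-cn}$ from a depth bound forces $D_n\gtrsim n/\log n$. The sieve then contains trees with up to $2^{D_n}$ leaves, and the entropy is no longer $o(n)$. ``Superexponentially small in the depth'' is not ``exponentially small in $n$''.

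The missing idea is to sieve on the number of leaves instead. The tree-size tail $P(K>k)\leq C_{\calT}e^{-c_{\calT}k}$ is a Chernoff/branching-process bound, using that split probabilities eventually fall below $1/4$. It makes $\{K>\alpha n\}$ cost only $e^{-c_{\calT}\alpha n}$. Meanwhile the sieve $K\leq\alpha n$, $\|z\|_2\leq\sqrt n$, $\max_j|\theta_j|\leq n$ has Hellinger entropy $\lesssim n\alpha\log(e/\alpha)+\log n$. This is below $n\epsilon^2/16$ once $\alpha=\alpha_\epsilon$ is small. The model size is linear in $n$, and the tradeoff is won by the small constant $\alpha$.

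Your C3 entropy step also rests on an unsupported claim. In the kernel basis $C_\calA(\cdot,\calR_\calA)$, the coefficients $C_\calA(\calR_\calA,\calR_\calA)^{-1}\xi_\calA(\calR_\calA)$ are not bounded: they blow up as the knot covariance becomes nearly singular for large ranges. Moreover, $C_\calA(\cdot,\calR_\calA)$ is itself a combination of Mat\'ern kernels at all ancestor knots. Monotonicity of the Mat\'ern kernel does not give a pseudo-dimension bound for such spans with free anisotropic ranges.

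The paper sidesteps this with two ingredients:
\begin{itemize}
\item the whitened representation $g=\mu_0+b_{\calT,\bftheta}^\top z$ with $\sup_{\bfx}\|b_{\calT,\bftheta}(\bfx)\|_2\leq B$;
\item the key lemma that $\bftheta\mapsto b_{\calT,\bftheta}$ is Lipschitz in the log-parameters \emph{uniformly in $\bftheta$}, with constant $C\sqrt K$, because $\partial_{\theta_j}\log s_{\bftheta}$ is uniformly bounded for the Mat\'ern spectral density.
\end{itemize}
This neutralizes the roughness you worried about from ranges up to $e^{n}$. A grid of mesh of order $u/\sqrt{nK}$ on $[-n,n]^d$ has only polynomially many points, and a volumetric cover of the finite-dimensional coefficient ellipsoid does the rest.

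Two smaller points in C1. First, the internal-node part must be \emph{small} in sup norm, not merely bounded, since a constant leaf value cannot cancel within-cell oscillation. Smallness has positive probability because that part is bounded by $\sigma_+\|z_{\calI}\|_2$. Second, Lemma~8.21 with only second moments gives a non-summable $O(1/n)$ bound. For the almost-sure statement you need either centered fourth moments plus Borel--Cantelli, as in the paper, or a per-$(g,v)$ exponential concentration bound combined with Fubini and Fatou.
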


\begin{remark}
This theorem implies that under regularity conditions, the posterior distributions of the regression function and the variance parameter under the proposed model converge to the truth under the corresponding topologies as the sample size goes to infinity. This ensures that the proposed model behaves well and can provide enough support to predict the class of functions in Assumption~\ref{ass: truth}. 
\end{remark}

\begin{corollary}\label{cor: convergence in measure}
Under the same assumptions in Theorem~\ref{thm: consistency}, for every $\epsilon>0$,
\[
\Pi_{g,v}\!\left(\{d_{\mathbb P_{X,n}}(g,g_0)\geq\epsilon\}
\cup\{|v/v_0-1|\geq\epsilon\}\,\middle|\,\mathcal O_n\right)
\longrightarrow0\qquad \mathbb P_0\text{-almost surely}.
\]
In particular, on the same probability-one event,
\[
\Pi_g(d_{\mathbb P_{X,n}}(g,g_0)\geq\epsilon\mid\mathcal O_n)\to0,
\qquad
\Pi_v(|v/v_0-1|\geq\epsilon\mid\mathcal O_n)\to0.
\]
\end{corollary}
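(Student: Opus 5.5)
The plan is to derive the corollary from Theorem~\ref{thm: consistency} by comparing Hellinger distance with the two target discrepancies. It is enough to show that for every $\epsilon>0$ there is $\eta=\eta(\epsilon)>0$ such that, for every $n$ and every $(g,v)$ in the prior support,
\[
\{d_{\mathbb P_{X,n}}(g,g_0)\geq\epsilon\}\cup\{|v/v_0-1|\geq\epsilon\}\subseteq\{H_{\mathbb P_{X,n}}((g,v),(g_0,v_0))>\eta\}.
\]
Monotonicity of the posterior measure then gives the first display, with $\eta$ playing the role of $\epsilon$ in Theorem~\ref{thm: consistency}. The almost-sure event can be fixed once: take the intersection of the probability-one events from Theorem~\ref{thm: consistency} over $\epsilon=1/m$, $m\in\mathbb N$, which is a countable intersection. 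The marginal statements follow because each marginal event is contained in the union, and the marginals $\Pi_g(\cdot\mid\mathcal O_n)$ and $\Pi_v(\cdot\mid\mathcal O_n)$ are images of the joint posterior.

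The key computation is the pointwise Hellinger affinity between two normals. With $\Delta(\bfx):=g(\bfx)-g_0(\bfx)$,
\[
\int\sqrt{\varphi_v(y-g)\varphi_{v_0}(y-g_0)}\,\mathrm dy=\sqrt{\tfrac{2\sqrt{vv_0}}{v+v_0}}\exp\Bigl\{-\tfrac{\Delta(\bfx)^2}{4(v+v_0)}\Bigr\}=:\mathrm{BC}(\bfx).
\]
Hence $H^2_{\mathbb P}=2\int(1-\mathrm{BC}(\bfx))\,\mathrm d\mathbb P(\bfx)$.

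The argument then splits into two cases.
\begin{itemize}
\item \emph{Variance case.} Suppose $|v/v_0-1|\geq\epsilon$. Since $\exp\{\cdot\}\leq1$, we have $1-\mathrm{BC}\geq1-\sqrt{2\sqrt{t}/(1+t)}$ with $t=v/v_0$. This lower bound is continuous, is zero only at $t=1$, and is monotone on each side of $1$, so it is bounded below by a constant $\eta_1(\epsilon)>0$. The bound is uniform in $\bfx$ and in $\mathbb P$.
\item \emph{Function case.} Suppose $d_{\mathbb P}(g,g_0)\geq\epsilon$. By the definition of $d_{\mathbb P}$ as an infimum, $\mathbb P(|\Delta|>\epsilon/2)\geq\epsilon/2$. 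On that set, using $v\leq v_+$ from Assumption~\ref{ass: covariance} and the square-root factor being at most $1$, we get $1-\mathrm{BC}\geq1-\exp\{-\epsilon^2/(16(v_++v_0))\}$. Integrating gives $H^2\geq\epsilon\,(1-e^{-\epsilon^2/(16(v_++v_0))})=:\eta_2(\epsilon)^2$.
\end{itemize}
Taking $\eta=\tfrac12\min(\eta_1,\eta_2)$ gives the strict inequality needed for the inclusion above.

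I expect the main obstacle to be that $d_{\mathbb P}$ is defined through an infimum and $\mathbb P=\mathbb P_{X,n}$ changes with $n$ in fixed design, so the constant $\eta$ must not depend on $n$ or on the design. The bounds above use only the definition of $d_{\mathbb P}$, $v\leq v_+$, and pointwise normal calculations, so they hold for arbitrary $\mathbb P$; I would check carefully the step from $d_{\mathbb P}\geq\epsilon$ to $\mathbb P(|\Delta|>\epsilon/2)\geq\epsilon/2$. One smaller point also needs checking: the compact support $[v_-,v_+]$ of $\Pi_v$ is used so that $v+v_0$ is bounded and the bounds stay uniform.
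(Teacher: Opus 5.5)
Your proposal is correct and follows essentially the same route as the paper: the Gaussian affinity formula yields lower bounds on $H_{\mathbb P}$ that are uniform in $\mathbb P$, both in the variance case and in the function case (the latter via $\mathbb P(|\Delta|>\epsilon/2)\geq\epsilon/2$ and $v\leq v_+$). This puts the target set inside $\{H_{\mathbb P_{X,n}}>\eta\}$, after which Theorem~\ref{thm: consistency} applies. The differences are cosmetic. The paper bounds the variance term explicitly by $(v-v_0)^2/(8v_+^2)$ rather than through your monotonicity argument. Your $\eta=\tfrac12\min(\eta_1,\eta_2)$ mixes a bound on $1-\mathrm{BC}$ (that is, on $H^2/2$) with a bound on $H$, but this is harmless: $\eta_1\leq 1$ gives $H\geq\sqrt{2\eta_1}>\eta_1/2$.
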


\begin{remark}
This result concerns convergence in measure under the empirical input distribution in fixed design and under $\mathbb P_X$ in random design. This implies that for every fixed error tolerance, the posterior concentrates on functions whose error exceeds that tolerance on a vanishing fraction of design locations in fixed design, or on a set of vanishing $\mathbb{P}_{X}$-probability in random design. 
\end{remark}

Corollary~\ref{cor: convergence in measure} provides function consistency under the model induced prior. The next result in Corollary~\ref{cor: bounded fun} provides a stronger loss guarantee under a modified prior. 
For fixed $M>0$, write $\Pi_{g,v}^M=\Pi_{g,v}(\,\cdot\mid\|g\|_\infty\leq M)$ for the prior conditioned on the uniform envelope. Its posterior is thus given by 
\[
\Pi_{g,v}^{M}(A\mid\mathcal O_n)
=
\frac{
\Pi_{g,v}\!\left(
A\cap\{\|g\|_\infty\leq M\}
\mid\mathcal O_n
\right)
}{
\Pi_g(\|g\|_\infty\leq M\mid\mathcal O_n)
}
\]
for every measurable set $A$.

\begin{corollary}\label{cor: bounded fun}
If Assumptions~\ref{ass: truth}--\ref{ass: covariance} hold and $\|g_0\|_\infty<M$, then, for every $\epsilon>0$,
\[
\Pi_{g,v}^M\!\left(\{\|g-g_0\|_{L_1(\mathbb P_{X,n})}>\epsilon\}
\cup\{|v/v_0-1|\geq\epsilon\}\,\middle|\,\mathcal O_n\right)
\longrightarrow0\qquad \mathbb P_0\text{-almost surely}.
\]
If the random-design distribution $\mathbb{P}_{X}$ has a Lebesgue density $p_X$ that is bounded below away from zero, the same conclusion holds with $L_1(\mathrm{d}\bfx)$ in place of $L_1(\mathbb P_X)$.
\end{corollary}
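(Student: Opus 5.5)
We deduce Corollary~\ref{cor: bounded fun} from Theorem~\ref{thm: consistency} and Corollary~\ref{cor: convergence in measure} in two steps. The first step is to show that conditioning on $\{\|g\|_\infty\le M\}$ preserves the almost-sure convergence. The second is to upgrade convergence in measure to $L_1$ convergence using the uniform envelope.

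\emph{Step 1 (conditioning).} Write $B_M=\{\|g\|_\infty\le M\}$ and let $E_n$ be the exceptional set $\{\|g-g_0\|_{L_1(\mathbb P_{X,n})}>\epsilon\}\cup\{|v/v_0-1|\ge\epsilon\}$. Since
\[
\Pi^M_{g,v}(E_n\mid\mathcal O_n)\le \Pi_{g,v}(E_n\cap B_M\mid\mathcal O_n)/\Pi_g(B_M\mid\mathcal O_n),
\]
it suffices to show two things $\mathbb P_0$-a.s.: that $\Pi_{g,v}(E_n\cap B_M\mid\mathcal O_n)\to0$, and that $\liminf_n\Pi_g(B_M\mid\mathcal O_n)>0$. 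The cleanest route to the denominator is to observe that $\Pi^M_{g,v}$ is itself a proper prior, provided $\Pi_g(B_M)>0$, and then to re-verify C1--C3 for it.

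For C2 and C3, the sieves and tests are unchanged. We use $\Pi^M(\mathcal S_n^c)\le \Pi(\mathcal S_n^c)/\Pi(B_M)$, and the testing bounds are suprema over sets that only shrink.

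For C1, the Kullback--Leibler neighbourhoods constructed in Propositions~P4--P5 around $(g_0,v_0)$ consist of $g$ with $\|g-g_0\|_\infty<\eta$. Because $\|g_0\|_\infty<M$, taking $\eta<M-\|g_0\|_\infty$ puts these neighbourhoods inside $B_M$. They therefore retain positive $\Pi^M$-mass, with the same lower bound divided by $\Pi(B_M)\le1$, and this also gives $\Pi_g(B_M)>0$.

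The main obstacle is confirming that P4--P5 indeed produce sup-norm neighbourhoods of positive prior mass, rather than neighbourhoods in a weaker metric. If they only provide empirical-$L_2$ neighbourhoods, I would instead re-derive the mass bound. To do this, I would fix a deep enough fixed-middle-cut tree, which has positive prior probability by Assumption~\ref{ass: tree-prior}. I would then approximate the uniformly continuous $g_0$ by a piecewise-constant function on the leaves, with leaf effects $\delta_\calA$ having Gaussian priors of full support, while the interior predictive-process part $\sum f_\calA$ is kept uniformly small with positive probability by continuity of the Mat\'ern sample paths. Together these give $\Pi(\|g-g_0\|_\infty<\eta)>0$.

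Applying the general criterion of Section~\ref{sec: general-conditions} to $\Pi^M$ then yields the conclusion of Corollary~\ref{cor: convergence in measure} with $\Pi^M$ in place of $\Pi$.

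\emph{Step 2 (from measure to $L_1$).} On $B_M$ we have $|g-g_0|\le 2M$. For any $u>0$,
\[
\|g-g_0\|_{L_1(\mathbb P)}\le u+2M\,\mathbb P(|g-g_0|>u),
\]
so $d_{\mathbb P}(g,g_0)<u$ implies $\|g-g_0\|_{L_1(\mathbb P)}\le u(1+2M)$. Choosing $u=\epsilon/(1+2M)$ gives the inclusion
\[
\{\|g-g_0\|_{L_1(\mathbb P_{X,n})}>\epsilon\}\cap B_M\subset\{d_{\mathbb P_{X,n}}(g,g_0)\ge u\}.
\]
The posterior mass of the right-hand side under $\Pi^M$ vanishes almost surely by Step 1, and the $v$ part is the same as in Step 1.

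Finally, suppose $p_X\ge c>0$ in random design. Then
\[
\int|g-g_0|\,\mathrm d\bfx\le c^{-1}\int|g-g_0|\,p_X\,\mathrm d\bfx,
\]
so $\{\|g-g_0\|_{L_1(\mathrm d\bfx)}>\epsilon\}\subset\{\|g-g_0\|_{L_1(\mathbb P_X)}>c\epsilon\}$. Applying the previous conclusion with $c\epsilon$ gives the Lebesgue statement.
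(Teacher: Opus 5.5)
Your proposal is correct and follows essentially the same route as the paper. The paper also shows $\Pi^M_{g,v}$ is proper because $\mathcal B_\delta\subset\{\|g\|_\infty\le M\}$ once $\delta<M-\|g_0\|_\infty$, reuses the prior-free moment bounds for C1, divides the sieve-complement bound by $\Pi_{g,v}(\|g\|_\infty\le M)$ for C2, and keeps the tests for C3; it then uses $\|g-g_0\|_{L_1(\mathbb P)}\le(1+2M)\,d_{\mathbb P}(g,g_0)$ and a positive lower bound on $p_X$ exactly as you do. Your worry about P4--P5 is unnecessary: Proposition P4 already gives positive mass $m_\delta$ to the sup-norm ball $\mathcal B_\delta=\{\|g-g_0\|_\infty<\delta,\ |v-v_0|<\delta\}$, using essentially the deep midpoint-tree argument you sketch as a fallback.
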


\begin{remark}
This result implies that under fixed design, the  absolute regression error averaged over observed inputs vanishes in posterior probability almost surely; under random design, this  absolute regression error averaged over the population distribution of inputs vanishes in posterior probability almost surely. 
\end{remark}

\section{Numerical Illustrations} \label{sec: simulation}

\subsection{Likelihood approximation}
Likelihood approximation to known targets has been used as a standard assessment tool in spatial statistics \citep[e.g.,][]{Stein2004, Sang2012, Katzfuss2017MRA, Katzfuss2021, Guinness2018, Datta2016}. In most spatial applications, this has been well studied for $d\le 3$, but not for higher dimensions. Among these methods, Vecchia approximations \citep{Katzfuss2021, Datta2016} are among the most successful methods to achieve this task. While Bayesian methods are not inherently designed for likelihood approximation, it is still worth comparing the proposed ResTGP-PP with existing methods in a non-Bayesian setting. The goal of this subsection is to demonstrate the quality of likelihood approximation for the proposed method ResTGP under higher input dimensions ($d$ ranges from 2 to 20) and compare it with the scaled sparse general Vecchia  \citep[\texttt{sSGV};][]{Katzfuss2022} that extends \citet{Katzfuss2021} to higher input dimensions.

In particular, the underlying truth is simulated from a stationary Gaussian process with an input rescaled Mat\'ern covariance over the domain, where the parameter settings are $\sigma^2=1, \nu=1.5, \ell_j=4^{(j-1)/(d-1)}, j=1,\ldots, d$ with a nugget $v=0.1$ included. The observation locations are generated via maximin Latin hypercube design for different $n$. This example examines the quality of approximation due to several factors including the number of observations ($n$) and input dimension ($d$). The computing time for one likelihood evaluation is recorded for all methods.   ResTGP method has several special implementations: (1) one is to use the full residual covariance in leaf node with a pre-specified tree structure drawn from Bayesian CART prior. This implementation achieves a full-scale approximation on a tree structure that extends MRA to higher dimensions, and will be referred to as \texttt{ResTGP-Full}; (2) the second one is \texttt{ResTGP-PP}.  This model also approximates the original base covariance but differs from \texttt{ResTGP-Full} in that it allows its tree structure to be adaptively inferred based on the data. (3) The third version of ResTGP is \texttt{ResTGP-WN}.  This model is more suitable to capture region changing spatial nonstationarity that will be demonstrated in subsequent sections. The implementation details are given in Section S.2.1. 

\begin{figure}[!htpb]

\begin{subfigure}[t]{1\textwidth}
\makebox[\textwidth][c]{\includegraphics[width=1\textwidth, height=0.2\textheight]{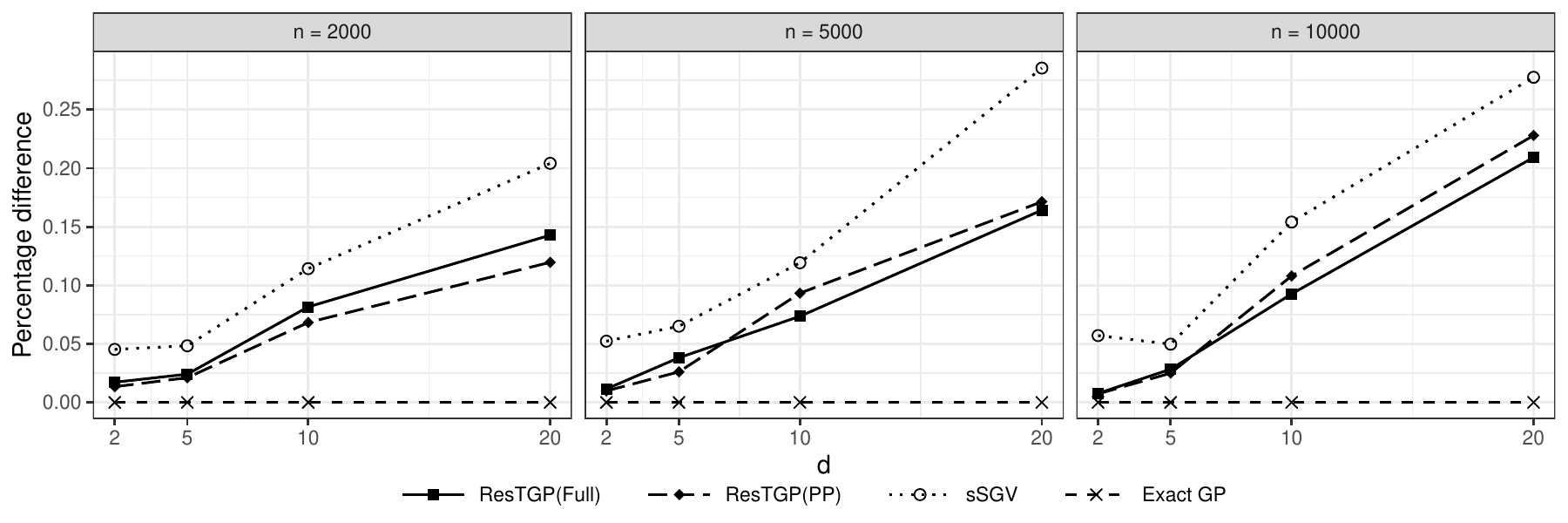}}
\end{subfigure}
\begin{subfigure}[t]{1\textwidth}
\makebox[\textwidth][c]{\includegraphics[width=1\textwidth, height=0.2\textheight]{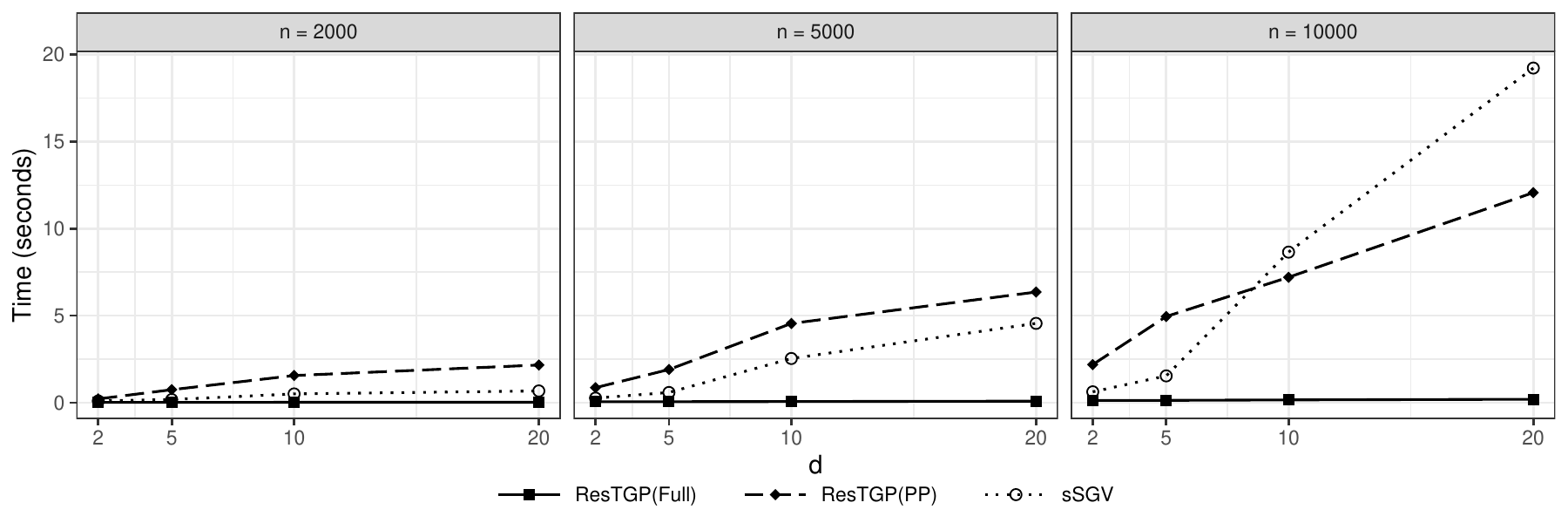}}
\end{subfigure}
\caption{Likelihood approximation versus input dimension $d$ under different $n$.  All methods have fixed $r=30$ knots or nearest neighbors chosen via maximin design. Top panels show the percentage error of log-likelihood compared to exact GP. Bottom panels show the timing.}
\label{fig: likelihood approximation}
%\vspace{-0.6cm}
\end{figure}

The results in Figure~\ref{fig: likelihood approximation} show that all the three models can approximate the true likelihood very well when data is rich in low-dimensional space ($d\leq 5$), measured by $|\ell_{\text{approx}}-\ell_{\text{truth}}|/|\ell_{\text{truth}}|$, where $\ell_{\text{approx}}$ denotes approximate log-likelihood under different methods and $\ell_{\text{truth}}$ denotes the truth log-likelihood. When $d$ is large (e.g., 10 or 20), the performances of all the methods deteriorate with their loglikelihood percentage difference ranging from 5\% to 25\% depending on $n$. However, both \texttt{ResTGP-PP} and \texttt{ResTGP-Full} outperform sSGV with 30 nearest neighbors (or knots).  As shown in Section S.2.1 of the Supplementary Material, when the number of knots or nearest neighbors is increased to $r=100$, the performances under \texttt{ResTGP-PP} and \texttt{ResTGP-Full} improve substantially with their log-likelihood percentage difference dropping from 20\% ($r=30$) to 12\% ($r=100$).  In terms of timing, \texttt{ResTGP-Full} is fastest among all the models due to a pre-determined single tree specification from the Bayesian CART prior even though its performance can be sensitive to tree specification. \texttt{ResTGP-PP} is much slower than the other two because of tree inference via SMC, but parallel computing with 8 cores can substantially reduce computing time when tree is deep (e.g., $d=20$). It is worth noting that different methods have been implemented in different languages such as \textsf{R} or \texttt{C++} so the timing does not necessarily reflect their theoretical computational complexity.

\subsection{Adaptation to input dependent noise} 
In this section, we focus on illustrating the benefit of the ResTGP over treed GP and SGV in terms of capturing input-dependent noise structures as this is a key feature for ResTGP with the white-noise leaf model. 
For $x\in[0,1]$, let $y(x) = f(x) + \sum_{k=1}^{4}\sigma_k\epsilon(x) \mathds{1}_{I_k}(x)$, where 
$f(x)=-3e^{3\min(x,1/4)}\sin\!\bigl(10\pi\min(x,1/4)\bigr)
\mathds{1}_{\{x<1/2\}}
+e^{3x}\sin(40\pi x/3)\mathds{1}_{\{x\ge3/4\}}$. $(\sigma_1,\sigma_2,\sigma_3,\sigma_4)=(0.8,0.2,0.5,1.5)$, $I_1=[0,1/4]$, and $I_k=((k-1)/4,k/4]$ for $k=2,3,4$.
$\epsilon(x)$ is an $N(0,1)$ random variable with $\text{cov}(\epsilon(x), \epsilon(x'))= \mathds{1}(x=x')$. This function has a globally nonstationary but locally stationary feature with different smoothness structures and noises in four different regions.  The data points are simulated over $n=320$ training locations and the predictive performance over 100 held-out new locations that are equally spaced. All models are fitted with an isotropic Mat\'ern covariance function. To measure the performance, continuous ranked probability score (CRPS) and log-score (i.e., averaged negative pointwise log predictive density) are used to measure the accuracy of the entire predictive distribution with root mean squared error (RMSE) and  empirical coverage probability (CVG) of 95\% prediction intervals also reported. 

Figure~\ref{fig: 1d example 1} indicates that all the methods seem to give satisfactory predictive mean with numerical measures summarized in Table~S.1 of the Supplement.  Both \texttt{tgp} and \texttt{ResTGP-WN} can adapt to the input noise as seen from Panel (b) of Figure~\ref{fig: 1d example 1}, however, \texttt{ResTGP-WN} gives the best performance in terms of CRPS and log-score. SGV is over-confident in the region $(0.75, 1]$ and cannot adapt to the heterogeneous noise settings across different regions. This applies for \texttt{ResTGP-PP} because both SGV and ResTGP-PP are designed to approximate a stationary Mat\'ern covariance function. As \texttt{ResTGP-PP} accounts for the uncertainty in tree partitions, \texttt{ResTGP-PP} gives better CRPS and logscore than SGV.

We also investigated the performance for the same simulation setting with $n=3,200$ observations in Section S.2.2. The conclusion also confirms the finding that \texttt{ResTGP-WN} can achieve better performance in terms of CRPS and log-score than other methods including SGV and BART \citep{Chipman2010} when the noise depends on inputs. All timing comparisons are also discussed in Section S.2. 

\begin{figure}[!htpb]

\begin{subfigure}[t]{\textwidth}
\makebox[\textwidth][c]{\includegraphics[width=1\textwidth, height=0.25\textheight]{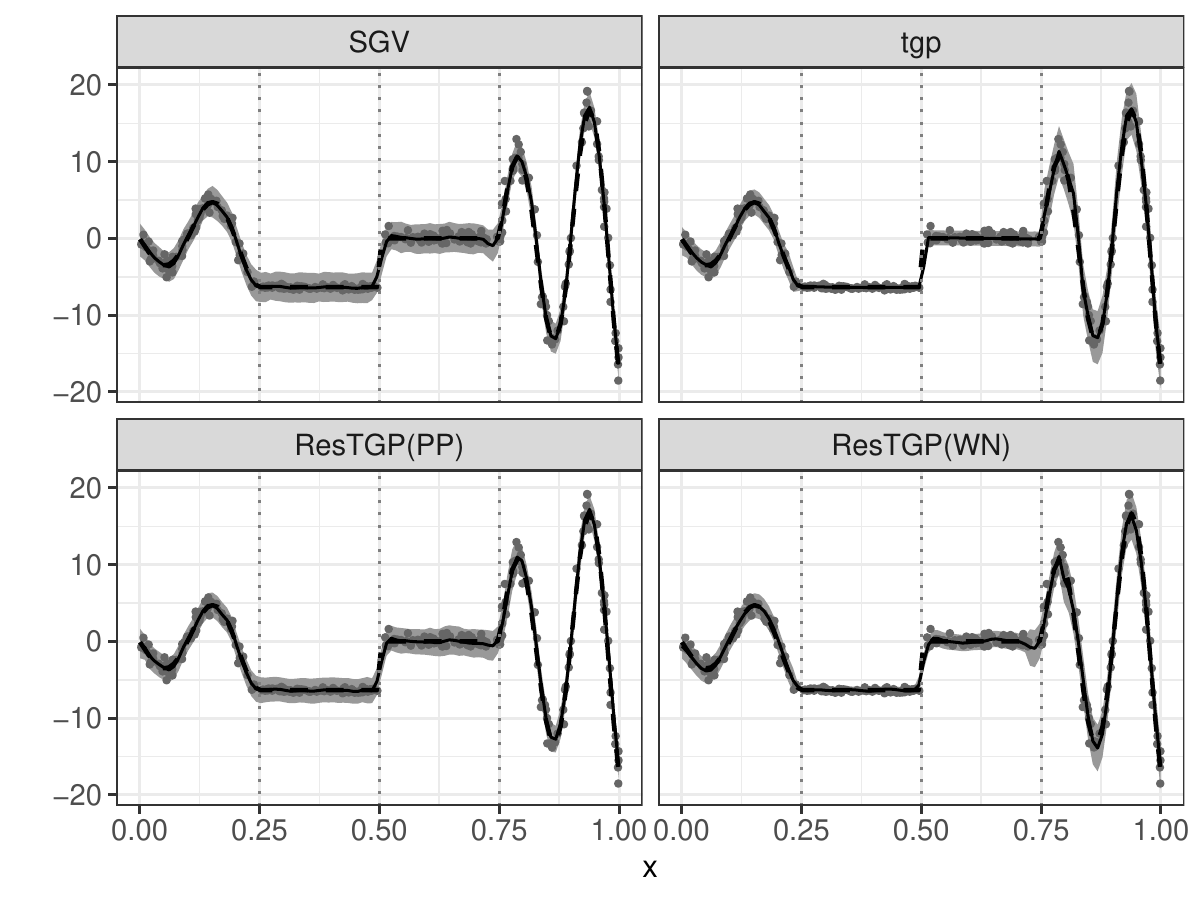}}
\caption{Predictions under different models with 95\% prediction intervals}
\end{subfigure}%

\begin{subfigure}[t]{\textwidth}
\makebox[\textwidth][c]{\includegraphics[width=\textwidth, height=0.18\textheight]{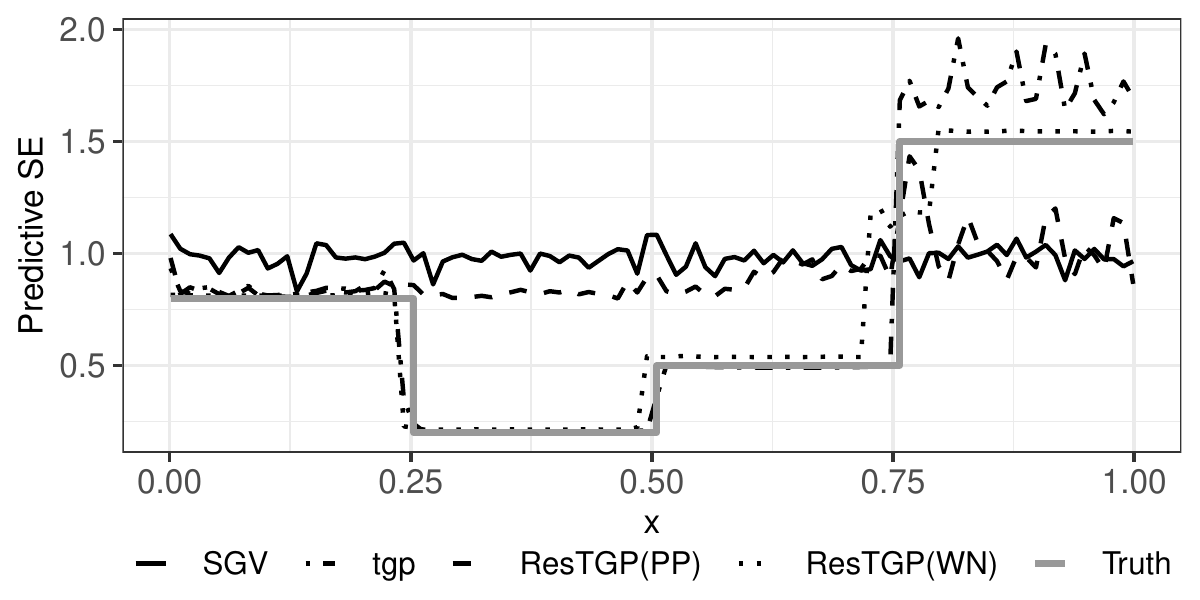}}
\caption{Predictive standard deviations}
\end{subfigure}
\caption{Predictive performance of different models  based on 320 observations (dots). %Panel (a) shows the predictive mean and 95\% predictive intervals. Panel (b) shows predictive standard deviations.
}
\label{fig: 1d example 1}
%\vspace{-0.2cm}
\end{figure}

\subsection{Higher-dimensional examples}

The dataset is simulated from $y(\bfx) = f(\bfx) + \sigma(\bfx) \epsilon$, where $\epsilon\overset{iid}{\sim} N(0,1)$, $\sigma(\bfx)=0.05 + 0.95 g(x_2; 0.5, 0.05)$, $f(\bfx) = 2\{[1-g(x_1; 0.5, 0.02)] \sin(4\pi r(\bfx)) + g(x_1; 0.5, 0.02) \sin(6\pi r(\bfx) +1)\} + 2 g(x_1; 0.5, 0.02) + 0.5\sum_{j=1}^d \cos(\pi x_j)/\sqrt{d}$ with $r(\bfx) := | (x_3, x_4, x_5) - (0.3, 0.6, 0.4)|$ and $g(t; a, b):=\{1 + \exp(-(t-a)/b)\}^{-1}$. The function $f$ has a multiscale structure with smooth behaviors. The noise function $\sigma$ smoothly varies as a function of $x_2$. The $g$ function allows smooth transition at center $a$ with shape $b$. The training dataset is simulated over a maximin Latin hypercube sampling (LHS) of $n=10,000$ input locations in $[0,1]^d$. The testing dataset is simulated over a maximin LHS of $n^*=0.2n$. All the knots or nearest neighbors are chosen based on maximin design with $r=60$ for sSGV and ResTGP models. ResTGP-Full is estimated with maximum likelihood estimation via gradient based optimization while sSGV is estimated via Fisher scoring method and both ResTGP-PP and ResTGP-WN are estimated via the derivative free method using \texttt{Nelder-Mead} in \textsf{R} function \texttt{optim}.

\begin{table}[!htbp]
\centering
\normalsize
\caption{Predictive performance under different models with $n=10{,}000$
and $d=10,20$. %The reported metrics are the root mean squared error (RMSE), continuous ranked probability score (CRPS), empirical coverage of the nominal 95\% predictive intervals (CVG), and log-score.
The ResTGP models are estimated using the empirical Bayes method unless mentioned.}
{\resizebox{1.0\textwidth}{!}{%
\setlength{\tabcolsep}{1.0em}
\begin{tabular}{l l l l l l l}
\toprule \noalign{\vskip 1.5pt}
$d$ & Models & RMSE & CRPS & log-score & CVG(95\%) & Timing (mins) \\
\midrule \noalign{\vskip 2.5pt}
10 & BART        & 0.985 & 0.546 & 1.400 & 0.943 & 65 \\
   & sSGV        & 1.122 & 0.622 & 1.526 & 0.951 & 1.2 \\
   & ResTGP-Full & 0.963 & 0.527 & 1.374 & 0.944 & 6.7  \\
   & ResTGP-PP   & 0.806 & 0.439 & 1.200 & 0.937 & 10 \\
   & ResTGP-WN   & 0.913 & 0.474 & 1.243 & 0.927 & 8  \\
   & ResTGP-WN (PMMH) & 0.857 & 0.442 &  1.080  &  0.965  & 246\\
\midrule \noalign{\vskip 2.5pt}
20 & BART        & 1.035 & 0.575 & 1.452 & 0.949 & 112 \\
   & sSGV        & 1.328 & 0.741 & 1.691 & 0.948 & 3.3 \\
   & ResTGP-Full & 0.931 & 0.511 & 1.341 & 0.936 & 8.0  \\
   & ResTGP-PP   & 0.874 & 0.481 & 1.279 & 0.936 & 18 \\
   & ResTGP-WN   & 0.952 & 0.502 & 1.286 & 0.937 & 23  \\
   & ResTGP-WN (PMMH)   & 0.895 & 0.468 & 1.157 & 0.962 & 770  \\
\bottomrule
\end{tabular}%
}}
\label{table: 10D-20D-illustration}
\end{table} \vspace{-0.2cm}

As seen from Table~\ref{table: 10D-20D-illustration}, ResTGP models show better predictive performance than BART \citep{Chipman2010} and sSGV for $d=10, 20$. Even though a stationary input-scaled Mat\'ern covariance is used, \texttt{ResTGP-Full} has better RMSE than \texttt{sSGV} and \texttt{BART}. Among ResTGP models,  \texttt{ResTGP-PP} has smallest RMSE while \texttt{ResTGP-WN} with PMMH  method has smallest log-score for both $d=10$ and $d=20$. For $d=10$, \texttt{ResTGP-PP} gives best performance in terms of RMSE, CRPS among all the methods; for $d=20$, \texttt{ResTGP-PP} performs best in terms of RMSE while \texttt{ResTGP-WN} with the PMMH method outperforms other methods in terms of CRPS and log-score. This indicates that adaptively inferred tree with a PP or WN as leaf models can better capture the covariance structure in the data. The better performance in \texttt{ResTGP-PP} over \texttt{ResTGP-Full} is likely because of accounting for the model uncertainty in the tree structure. In this example, \texttt{ResTGP-WN} achieves better predictive performance with the PMMH method than that with the MAP method while the PMMH method requires significant amount of more computing time. 

The diagnostics in Section S.2.3 indicate that all methods show good agreement for predictive mean; see Panel (a) in Figures S.3 and S.4. Among all the models, \texttt{ResTGP-WN} shows input-dependent predictive standard deviations; see Panel (b) in Figures S.3 and S.4. \texttt{BART} has nearly flat (or constant) predictive standard deviations which might suggest a limitation in homogeneous noise variance assumption. \texttt{sSGV}, \texttt{ResTGP-Full}, and \texttt{ResTGP-PP} fail to show adaptation to input-dependent noise structure in their predictive standard deviations. The timing results are discussed in Section S.2.3.

\section{Application to Storm Surge}  \label{sec: app}

Risk assessment of storm surges is often performed by running large-scale simulations of ocean circulation models over many different storm characteristics so that devastating economic damages and loss of lives can be prevented. In the United States, the ADCIRC ocean circulation model \citep{Luettich2004, Dietrich2011, Dietrich2012} is the primary computer model (or simulator) to predict storm surges in coastal areas that can provide simulations over a wide range of spatial resolutions with different physical processes.  ADCIRC has also been certified by the Federal Emergency Management Agency (FEMA) for coastal flood hazard study. For more details about this computer model, see \citet{Ma2022PPCokrig}. 

In this study, we consider the ADCIRC simulator  with  inputs  $(\Delta P$, $R_p$, $V_f$, $\vartheta$, $B$, $\mathbf{\Ell})$ whose  physical meanings are given in Table~S.3 of Supplementary Material and surge outputs at a single spatial location. The storm surge simulations from  the ADCIRC simulator were produced by a maximin LHS of 3,500 design points  given by  the five storm parameters: $\Delta P$, $R_p$, $V_f$, $\vartheta$, $B$, where each storm's landfall location is repeated along the coastline with spacing $R_p/2$. Only storms that make landfall in or near the study regions were included  between latitudes 25.80 degrees and 27.39 degrees along the coastline of Southwestern Florida. For each storm, the meteorological input files start 3 days and 23 hours before the storm makes landfall and continue for 24 hours post-landfall with a total duration of 4 days and 23 hours. The storm input design covers a fairly wide space that is constructed from all historical storm information. Based on these 3,500 LHS design points, the entire study produces $56,277$ model runs with more than 10 million mesh nodes covering the full study region. To illustrate the proposed methodology, where scalar output is assumed in the model, we shall focus on storm surge output at one spatial location with sea level depth between 0 and 10 meters which could be of great interest for risk assessment due to storm surge. Generalization of the results to massive spatial locations over the entire domain is investigated elsewhere. 

Exploratory analysis in Figure S.6 of Section S.3 shows that the spatial dependence shows different ranges for each input parameter and the variance has modest variation in the semivariogram plots. This might justify a plausible covariance model with different range parameters such as the input-rescaled Mat\'ern correlation. The storm surge response surface generally shows some stable trend that might vary according to the landfall location, which also suggests that an input varying model might be a more suitable model. The storm surge also shows some very small values near zeros across different input settings, which  indicates a type of regime change in the response function. Because this dataset comes from physical simulation, standard framework of Gaussian process modeling \citep[e.g.][]{Sacks1989, Santner2018} is not too unreasonable; see \citet{Ma2022PPCokrig}.      

To perform a comparative study, the 3,500 unique storm settings are randomly split into 2,800 (80\%) for training and 700 (20\%) for testing, resulting in $n = 44,801$ simulations for training various models and $n^*=11,476$ for testing held out performance. All the inputs are rescaled into $[0,1]^d$ with $d=7$. We compared predictive performance for the following models: sSGV, BART, ResTGP-Full, ResTGP-PP, ResTGP-WN. Other methods such as tgp and deep GP \citep{Sauer2023} are computationally too expensive to be included in the comparison. In all the GP models, we use the input-scaled Mat\'ern correlation function with a nugget parameter. The results in Table~\ref{table: Storm surge} show the predictive advantage of ResTGP over sSGV, albeit with more computing time since ResTGP uses SMC algorithms for tree inference. ResTGP-PP gives smallest RMSE but ResTGP-WN gives smallest CRPS and log-score which are expected because  the white-noise leaf model can improve predictive distribution.

\begin{table}[htbp] 
\centering
\normalsize
   \caption{Predictive performance under different models over $n^*=11,476$ held out data. 
%The reported metrics are the root mean squared error (RMSE), continuous ranked probability score (CRPS), empirical coverage of the nominal 95\% predictive intervals (CVG), and log-score.  Timing is reported for both estimation and prediction. 
MAP=Maximum a posteriori. PMMH=particle marginal Metropolis-Hastings.}
  {\resizebox{1.0\textwidth}{!}{%
  \setlength{\tabcolsep}{1.5em}
   \begin{tabular}{l l  l lll} 
   \toprule \noalign{\vskip 1.5pt} 
%Models   & RMSE &  computing time (in hours) \\ \noalign{\vskip 1.5pt} 
  Models & RMSE & CRPS & log-score & CVG(95\%)  & Timing (mins) \\
\midrule \noalign{\vskip 2.5pt} 
BART & 0.126   & 0.065    &  -0.653 & 0.881 & 780   \\ 
% SoftBART &    &     &  &  \\ (numerical issues, fail always)
  sSGV ($r=30$)  & 0.097 & 0.046 & -1.047 & 0.926 &6     \\
    sSGV ($r=60$)  & 0.070 & 0.033 & -1.339 & 0.938 &18     \\
   % sSGV ($r=120$)  & 0.056 & 0.026 & -1.550 & 0.939 &56     \\
 %deepGP &    &     &  &  \\
 ResTGP-Full &  0.076  & 0.032  & -1.406 &0.938 & 24    \\
 ResTGP-PP (MAP) & 0.055   & 0.024    & -1.676  & 0.946 & 105   \\
  %ResTGP-PP (PMMH) &0.059    &0.025     &-1.691  & 0.954 & 548   \\
 %ResTGP-WN (MAP) & 0.078   &  0.033   & -1.702 & 0.944 & 108   \\
 ResTGP-WN (PMMH) & 0.072   & 0.023    &-1.966  & 0.987 & 334  \\
\bottomrule
   \end{tabular}%
   }}
   \label{table: Storm surge}
\end{table}  

Prediction results in Figure S.7 from Section S.3 suggest that  sSGV, ResTGP-PP, and ResTGP-WN perform very well in capturing the predictive mean. Predictive standard deviations shown in Figures S.8-S.10 from Section S.3 indicate that ResTGP-PP has smallest predictive standard deviations, while ResTGP-WN shows more heterogeneity in predictive uncertainty. This partly explains the lower log-score for ResTGP-WN but similar CRPS for ResTGP-WN and ResTGP-PP. The heterogeneity in predictive uncertainty is especially important for risk assessment of storm surges where the spatially heterogeneous tail probability of extreme events could be exploited for scientific decision making.

\section{Concluding Remarks} \label{sec: discussion}
In this paper we develop a residual tree learning framework for Gaussian processes (GP) for modeling data in multi-dimensional domain with large datasets. The proposed method exploits the multiscale decomposition of GP over tree structures and learns local information adaptive along the tree structures to allow adaptive shrinkage.  This framework leads to two key instances of models according to the leaf model specification (i.e., PP and WN). The resulting likelihood functions can be recursively computed with a linear computational cost in terms of the number of tree nodes given the tree. To search over the tree space, we developed a sequential Monte Carlo algorithm to effectively allow global exploration for Bayesian inference.  Viewing this model as a nonparametric prior, we prove that the proposed model enjoys posterior consistency in the almost sure sense for estimating continuous functions under several different loss functions. This guarantees that the proposed nonparametric prior is well-behaved and provides enough support for predicting the function class.     

The proposed method provides a new framework for integrating GP and tree-based methods in a computationally scalable way for higher dimensional space. A key challenge  in partition-based GP methods is to allow long-range dependence between partition regions, which is addressed through the use of multiscale modeling across different levels of recursive random partitioning of the domain, thus it allows both spatial adaptation and computational scalability. There are several modeling choices such as knot selection which could be further exploited. This is investigated elsewhere. 

Finally, there are several potential future research directions. Many real-world applications produce multivariate and space-time datasets, thus, it would be interesting to extend the proposed idea to account for multivariate and space-time dependency among multiple variables and even to allow different random recursive partitioning priors for each variable. While the proposed method is focused on axis-aligned partitions, it would be more beneficial to adopt an adaptive (or non-axis aligned) recursive partitioning strategy, which may require new efficient inferential approaches. The current framework is focused on single tree structure and performs satisfactorily. The proposed model could also be extended to allow an ensemble method, where additive tree structures are used. This could further improve predictive accuracy. %On the theoretical side, it would be interesting to derive posterior contraction rates for estimating certain function class with the proposed model. 

\section*{Supplementary Material}
The Supplementary Material  includes all the additional details. 
\textsf{R} code to reproduce the results are at \url{https://github.com/pulongma/ResidualTree-paper}. 
%a zipped file that includes \textsf{R} code for reproducing the numerical illustrations. 

%\newpage 
\begin{singlespacing}
\bibliographystyle{apalike}
\bibliography{main}
\end{singlespacing}

\newpage
\appendix

%%%% Main text entry area:
\setcounter{equation}{0}
\setcounter{page}{1}
\setcounter{table}{0}
\setcounter{section}{0}
\setcounter{figure}{0}
%\numberwithin{table}{section}
\renewcommand{\theequation}{S.\arabic{equation}}
\renewcommand{\thesection}{S.\arabic{section}}
\renewcommand{\thesubsection}{S.\arabic{section}.\arabic{subsection}}
\renewcommand{\thepage}{S.\arabic{page}}
\renewcommand{\thetable}{S.\arabic{table}}
\renewcommand{\thefigure}{S.\arabic{figure}}
%\baselineskip=15pt

%\clearpage\pagebreak\newpage

\renewcommand{\theequation}{S.\arabic{equation}}
\renewcommand{\thesubsection}{S.\arabic{section}.\arabic{subsection}}
\renewcommand{\thelemma}{S.\arabic{lemma}}
\renewcommand{\theremark}{S.\arabic{remark}}
\renewcommand{\thecorollary}{S.\arabic{corollary}}

\newtheorem{prop}{\textbf{Proposition}}
\renewcommand{\theprop}{P\arabic{prop}}

\spacingset{1.45} % DON'T change the spacing!

%\newpage

\setlength{\cftsecnumwidth}{2.5em}     % Space for S.1
\setlength{\cftsubsecnumwidth}{3em}  % Space for S.1.1

\newtheorem*{proposition*}{\bf Proposition}

\clearpage\pagebreak\newpage

%\if1\blind
%{
%  \title{\bf Supplementary Materials: A Residual Tree Gaussian Process Modeling Framework for High-Dimensional Data}
%  \author{
%    Pulong Ma\footnote{\textit{Address of Correspondence}: Pulong Ma, Assistant Professor, Department of Statistics, Iowa State University;  2438 Osborn Dr, Ames, Iowa 50011.
%Email: plma@iastate.edu.}\\
%    Department of Statistics, Iowa State University\\
%    %plma@iastate.edu\\
%    and\\
%    Li Ma \\
%    Department of Statistics \& Data Science Institute, University of Chicago\\
%    }
%      \date{}
%  \maketitle
%
%} \fi
%
%\if0\blind
%{
%    \title{\LARGE\bf Supplementary Materials: A Residual Tree Gaussian Process Modeling Framework for High-Dimensional Data}
%\maketitle
%  \medskip
%} \fi

\begin{center}
{\Large \textbf{Supplementary Material: A Residual Tree Gaussian Process Modeling Framework for High-Dimensional Data}} 
\end{center}

The Supplementary Material consists of five sections. 
Section~\ref{supp: posterior inference} presents the details of posterior inference including parameter estimation via the sequential Monte Carlo algorithm and prediction. 
Section~\ref{supp: numerical example} presents additional numerical results in the simulation results. 
Section~\ref{supp: real application} presents some additional numerical results in the real application. 
Section~\ref{supp: ms decomposition} contains some  additional technical results related to multi-scale Gaussian process decomposition with their proofs. 
Section~\ref{supp: posterior consistency} contains all the results to validate three conditions and the detailed proofs for Theorem 4 and related results. 

\newpage

\section{Posterior Inference} \label{supp: posterior inference}

The posterior distribution of the tree $\calT$ and global covariance parameters $\bftheta$ given the data $p(\calT, \bftheta |\bfy) \propto p(\bfy | \calT, \bftheta) p(\bftheta) p(\calT)$. The key challenge for posterior inference is to sample from the tree posterior. Section~\ref{supp: SMC} describes the detailed implementation of a sequential Monte Carlo (SMC) algorithm and how it can be used to perform empirical Bayes estimation. Section~\ref{supp: PMMH} describes a particle marginal Metropolis-Hastings  \citep[PMMH;][]{Andrieu2010} sampler to perform fully Bayesian inference for both $\calT$ and $\bftheta$.

\subsection{Sequential Monte Carlo algorithm} \label{supp: SMC} 

The sequential Monte Carlo algorithm introduced in Section 3 of the main paper provides the theoretical basis for estimating the marginal tree posterior $p(\calT)$ when $\bftheta$ is given. A naive implementation of the SMC algorithm is to run all the computation in each particle, however this can quickly lead to computational bottleneck and memory explosion. This section presents a novel implementation strategy to avoid this limitation. The central design strategy is to identify each node's  information by its full ancestor path, evaluate the computation once, and let each particle use the result to grow the tree according to the look-ahead strategy in SMC. This removes duplicate numerical work while retaining the particle population and its SMC weights.

This SMC algorithm consists of three main stages: \textbf{(A)}: Shared computation at any level; \textbf{(B)}: Particle pass including sampling and propagation, \textbf{(C)} Resampling SMC over residual trees at fixed $\bftheta$. In SMC, the posterior distributions of decision variables $S(v), J(v), L(v)$  are used as a one-step look ahead proposal $q_{t|t-1}(\calT_t | \calT_{t-1})$ for expanding the tree. In particular, let $(j,c)$ be a proposed split for node $\calA$ with two child nodes $\calA_\lchild^{j,c}, \calA_\rchild^{j,c}$. Define 
\begin{align*}
\Phi^{(1)}(\calA)
&= \rho(\calA) q_0(\calA)
 + \bigl(1-\rho(\calA)\bigr) q_\calA(\calR_\calA)
   \sum_{j=1}^{d} \lambda_j(\calA) \\
&\qquad \times
   \int q_0(\calA_\lchild^{j,c})
        q_0(\calA_\rchild^{j,c})
        F_\calA(\mathrm{d}c \mid j).
\end{align*}

Using terminal likelihoods for the prospective children in the
recursion of Theorem~2 gives the one-step-ahead proposal $Q_{\down{\calA}}$ specified as 
\begin{equation}\label{eqn: proposal}
\begin{aligned}
\mathbb{P}\bigl(S(\calA)=1\bigr)
&= \frac{\rho(\calA) q_0(\calA)}
         {\Phi^{(1)}(\calA)}, \\[6pt]
\mathbb{P}\bigl(
    J(\calA)=j,\,
    L(\calA)\in\mathrm{d}c
    \mid S(\calA)=0
\bigr)
&= 
\frac{
    \bigl(1-\rho(\calA)\bigr)
    q_\calA(\calR_\calA)\lambda_j(\calA)
    q_0(\calA_\lchild^{j,c})
    q_0(\calA_\rchild^{j,c})
}{
    \Phi^{(1)}(\calA)-\rho(\calA)q_0(\calA)
} \\
& \quad \times F_\calA(\mathrm{d}c \mid j).
\end{aligned}
\end{equation}

Throughout, $t$ indexes the SMC step, so that the pending nodes processed at step $t$ are at depth $t-1$; $\downarrow\! v$ is the ancestor path of a node $v$ (from the root to $\Pa[v]$); $M$ is the number of candidate cut points per axis, with $M=1$ for the middle cut; $\calP_t$ is the set of distinct paths $\downarrow\! v$ with $v$ pending at step $t$; $q_0(v)$ is the terminal likelihood of the residuals at $v$ under the leaf model $\calM_0$; and $\Phi^{(1)}(v)$ is the one-step marginal, that is, the recursion of Theorem~2 with $\Phi$ at the two children replaced by $q_0$.
 
\paragraph{Shared state along a path.}
Write $L_uL_u^{\top}=C_u(\calR_u,\calR_u)$ and $b_u(\bfx):=L_u^{-1}C_u(\calR_u,\bfx)\in\mathbb{R}^{r}$ for the \emph{residual features} of location $\bfx$ at ancestor $u$. Then $C_v$ of Section~2.1 can be computed along $\downarrow\! v$: 
\begin{equation}\label{eqn: feature stack}
\begin{split}
C_v(\bfx,\bfx')=K(\bfx,\bfx')-\sum_{u\in\downarrow v}b_u(\bfx)^{\top}b_u(\bfx'),
\\
C_v(\calR_v,\bfx)=K(\calR_v,\bfx)-\sum_{u\in\downarrow v}B_u(\calR_v)^{\top}b_u(\bfx),
\end{split}
\end{equation}
with $B_u(\calR_v)=[b_u(\bfx)]_{\bfx\in\calR_v}$ an $r\times r$ block. Every $\bfx\in v$ belonged to every ancestor of $v$, so when the stack $\{b_u(\bfx)\}_{u\in\downarrow v}$ is cached along the path, only the newest layer $b_{\Pa[v]}(\bfx)$ is computed on arrival at $v$, and its right-hand side $C_{\Pa[v]}(\calR_{\Pa[v]},\bfx)$ was already formed at the parent. Forming $C_v(\calR_v,\bfx)$ for all $\bfx\in v$ at step $t$ therefore costs $\calO(n_v(rd+r^2t))$ flops, against $\calO(n_vr^2t^2)$ if all layers are recomputed. The state retained for a path is the residual $\xi_v$, the stack ($n_vrt$ numbers, or $n_vr$ if only the newest layer is kept and the rest recomputed), $L_v$ and the blocks $B_u(\calR_v)$, that is, $\calO(n_vrt+r^2t)$ memory.
 
\paragraph{Stage A: shared computation.}
At step $t$ every pending node is at depth $t-1$. The algorithm processes the whole pending frontier of each particle, but computes each quantity once per \emph{distinct path} $\downarrow\! v\in\calP_t$: two particles whose trees agree along $\downarrow\! v$ hold identical residuals and covariance factors at $v$ and share the work there, whatever they decided elsewhere. For a pending node $v$ with $n_v$ residual observations, the cut-independent work is maximin knot selection, $\calO(n_vrd)$; the cross-covariances~\eqref{eqn: feature stack}, $\calO(n_v(rd+r^2t))$; one $r\times r$ Cholesky factor; and the knot likelihood $q_v(\calR_v)$ together with the regressed residuals 
$\xi_v(\bfx)-C_v(\calR_v,\bfx)^{\top}C_v(\calR_v,\calR_v)^{-1}\xi_v(\calR_v)$, $\calO(n_vr)$. Because $\calR_v$ does not depend on the cut, this part, $\calO(n_vr(d+rt))$, is shared by the leaf term and by all candidate splits. 

Under the \texttt{WN} leaf model the terminal likelihood $q_0(v)$ follows from the normal--inverse-gamma marginal in Theorem~2 and costs $\calO(n_v)$. The candidate children of each split need only the sums and sums of squares of the regressed residuals; after one sort per axis, prefix sums deliver them for all $M$ cut points of that axis, so enumerating $(S,J,L)$ costs $\calO(n_vd\log n_v + d M)$ and the per-path total is
\begin{align*}
A^{\texttt{WN}}_v=\calO\bigl(n_vr(d+rt)+n_vd\log n_v + dM\bigr).
\end{align*}
Under the \texttt{PP} leaf model the terminal likelihood is the modified predictive process which requires $\calO(n_vr^2+r^3)$ flops. Each candidate split, however, requires its two children to be equipped with their own knots. This would require cross-covariance construction one layer deeper and PP likelihood evaluation for each child, thus requiring $\calO(n_{\mathrm{child}}\,r(d+r(t+1)))$. The total cost is 
\begin{align*}
A^{\texttt{PP}}_v=\calO\bigl(n_v\,r\,dM\,(d+rt)\bigr),
\end{align*}
which is $\calO(n_vrd(d+rt))$ for the middle cut ($M=1$) and $M$ times larger
for an adaptive cut with $M$ candidate points per axis.
 
Finally, summing over the distinct paths of step $t$ introduces the
\emph{path-sharing factor}
\begin{equation}\label{eqn: sharing factor}
\gamma_t:=\frac{1}{n}\sum_{\downarrow v\in\calP_t}n_v\in (0,N],
\end{equation}
since the pending nodes of one particle are disjoint and hence $\sum_vn_v\le n$ within a particle. The value $\gamma_t\leq 1$ corresponds to a single tree and $\gamma_t=N$ to no sharing at all. The cut rule governs $\gamma_t$ as well as $M$. With the middle (or median) cut, a path is determined by its sequence of split axes and left/right choices, so $|\calP_t|\le(2d)^{t-1}$ independently of $N$, and particles share all computation on the part of the tree where they agree. With an adaptive cut the candidate positions depend on the data in the node (or on a particle-specific draw), children of particles that disagree anywhere above $v$ are distinct, so $|\calP_t|\le\min\{N2^{t-1},(2dM)^{t-1}, N \lfloor n/r \rfloor\}$.  Stage~A is embarrassingly parallel over $\calP_t$.
 
\paragraph{Stage B: propagation and weighting.}
For every particle and every pending node $v$, the algorithm draws $(S,J,L)$ from the cached one-step-ahead proposal $Q_{\downarrow v}$ of \eqref{eqn: proposal}. If $S=1$, $v$ is a terminal node; otherwise $v$ is split into its two children. % multiplies the particle weight by $G_{\downarrow v}=\Phi^{(1)}(v)/q_0(v)$. 
Because $Q_{\downarrow v}$ is the exact one-step posterior of the decision, the incremental weight is the same for every outcome of the draw, which is why it can be computed in Stage~A and applied without reference to the realized decision. The computational work is one categorical draw over $1+dM$ outcomes and one multiplication per pending node. Since a particle has at most $\min\{2^{t-1},n/r\}$ pending nodes,  Stage~B costs $\calO(N\,dM\min\{2^{t-1},n/r\})$ per step and $\calO(N\,dM\,n/r)$ in total, which involves no covariance evaluations.
 
\paragraph{Stage C: tempered resampling.}
When the effective sample size $ESS:=(\sum_i(W^i)^2)^{-1}$ falls below $\epsilon_0N$ (default $\epsilon_0=0.5$) and pending nodes remain, particles are resampled with probabilities $p^i\propto(W^i)^{\alpha}$, $\alpha\in(0,1]$ \citep{Liu2001}, where $\alpha$ controls the diversity of samples to be sampled.  Each offspring of ancestor $a_i$ receives the correction $\omega^i=W^{a_i}/p^{a_i}$. The resampled measure $N^{-1}\sum_i\omega^i\delta_{\calT^{a_i}}$ has conditional expectation $\sum_jW^j\delta_{\calT^j}$, so the target is preserved for every $\alpha$. $\alpha=1$ is the bootstrap resample, and smaller $\alpha$ retains particles whose splits pay off only at deeper levels. This step costs $\calO(N)$ with decision records shared by reference, independently of $\alpha$.

\paragraph{Computational complexity.}
Let $D$ be the realized number of SMC rounds, $\bar\gamma=\max_{t\le D}\gamma_t$, and $H=\sum_{t=1}^D\sum_{i=1}^N|\calP_t^i|$ the total number of particle-node decisions. Since each split removes $r$ knot observations, $D\le\lfloor n/r\rfloor$ and $H\le ND\lfloor n/r\rfloor$; logarithmic depth requires an additional sample-count balance condition.

Suppose that each ancestral state has at most $M$ shared candidate cuts per coordinate and that ancestral factors are cached. Let $s(B)$ denote the cost of drawing from a cached distribution over $1+B$ outcomes: $s(B)=1$ with alias sampling, $\log(1+B)$ with binary search, and $1+B$ with a linear scan. With linear-time resampling and shared particle records, the total costs satisfy
\begin{align}
T^{\mathrm{PP}}
&=\calO\!\left(
\bar\gamma nrdM(dD+rD^2)
+s(dM)H+ND 
\right), \label{eq: complexity PP}\\
T^{\mathrm{WN}}
&=\calO\!\left(
\bar\gamma n
\left[
rdD+r^2D^2+dD\log(n+1)+\frac{dMD}{r}
\right]
+s(dM)H+ND
\right), \label{eq: complexity WN}
\end{align}
where the WN bound uses sorting and prefix statistics for ordered or rank-indexed candidate cuts. 

For fixed $N,r,d,M,D$, the displayed PP and midpoint-WN numerical costs are linear in $n$, whereas fresh sorting introduces an $n\log n$ term for adaptive WN. Under sample-count balanced growth, $D=\calO(\log(n/r))$, these numerical costs are at most $\calO(n\log^2 n)$ when the other factors are fixed.  For a given fixed tree, cached likelihood computation costs $\calO(nrdD+nr^2D^2)$.

At step $t$, the numerical caches require $\calO(\gamma_tnrt+|\calP_t|r^2t)$ memory.  Total memory also includes retained ancestral states, proposal tables, candidate buffers, particle records, and data. WN reduces candidate-split evaluation to residual sufficient statistics, whereas PP generally requires additional low-rank covariance calculations for candidate children. The resulting computational advantage depends on the visited ancestral states, realized depth, candidate construction, and cache policy. If obsolete caches are released, the peak memory is the maximum over $t\in \{1, \ldots, D\}$. 

Under the proposed path-share SMC implementation, we never require storing these caches across all SMC steps. Caches are shared across particles and retained only while needed by surviving pending nodes. Unselected candidate states and unused caches are released after propagation and resampling. Thus, active cache memory is bounded by $\calO(\max_{1\le t\le D}\{\gamma_t nrt+|\calP_t|r^2t\})$, excluding temporary candidate buffers and retained tree records. Since $|\calP_t|r\le\gamma_t n$ and $\gamma_t\le N$, the active numerical caches require $\calO(\gamma_t nrt)$ memory. Releasing obsolete caches yields a peak bound of $\calO(\bar\gamma nrD)\le \calO(NnrD)$, where $\bar\gamma=\max_{t\le D}\gamma_t$.

\paragraph{Evidence estimate.}

Denote by $Z(\bftheta)=\int p(\bfy\mid\calT,\bftheta)\,\pi(\mathrm{d}\calT)$
the marginal likelihood of $\bftheta$. With normalized weights $W^i_{t-1}$
entering step $t$ and $G^i_t=\prod_{v}G_{\downarrow v}$ the product over the
pending nodes of particle $i$, Algorithm~\ref{alg: SMC} produces
\citep{Andrieu2010}
\begin{equation}\label{eqn: evidence}
\widehat Z_t(\bftheta)=\widehat Z_{t-1}(\bftheta)\,
\Bigl\{\sum_{i=1}^N W^i_{t-1}G^i_t\Bigr\}\,\bar\omega_t,
\qquad \widehat Z_0(\bftheta)=q_0(\calA^0),
\end{equation}
where $\bar\omega_t=N^{-1}\sum_i\omega^i$ if resampling occurred at step $t$
and $\bar\omega_t=1$ otherwise. The final value
$\widehat Z_D(\bftheta)$ is unbiased for $Z(\bftheta)$ \citep{DelMoral2006},
which is the property required by the particle marginal Metropolis--Hastings
sampler of Algorithm~2. The three stages are described in
Algorithm~\ref{alg: SMC}.

Our experience found that \texttt{ResTGP-PP} with adaptive cut is significantly more expensive than the middle cut, while \texttt{ResTGP-WN} is always scalable even with adaptive cut. Adaptive cut often gives better predictive performance. 
 
\begin{algorithm}[!htbp]
\caption{Path-shared one-step-ahead SMC at fixed $\bftheta$}
\label{alg: SMC}
\begin{algorithmic}[1]
\Require data, $\bftheta$, knot budget $r$, particles $N$, leaf model,
         resampling threshold $\epsilon_0\in(0,1]$, tempering exponent
         $\alpha\in(0,1]$
\Ensure evidence estimate $\widehat Z$ and weighted distinct trees
        $\{(\calT^{(k)},\Omega_k)\}_{k=1}^K$
\State Start every particle $\calT_0^i$ at the root, pending iff $n\ge r$;
       $W_0^i\gets 1/N$; $\widehat Z_0\gets q_0(\calA^0)$; $t\gets 0$
\While{some particle has a pending node}
  \State $t\gets t+1$
  \Statex \textit{A. Shared computation (parallel over distinct paths)}
  \ForAll{distinct paths $\downarrow\! v\in\calP_t$, $v$ pending in some
          $\calT_{t-1}^i$}
     \State From the cached state of $\downarrow\! v$, compute $q_0(v)$,
            the one-step marginal $\Phi^{(1)}(v)$, the proposal
            $Q_{\downarrow v}$ in \eqref{eqn: proposal}, and
            $G_{\downarrow v}\gets\Phi^{(1)}(v)/q_0(v)$; cache the child
            states
  \EndFor
  \Statex \textit{B. Propagation and weighting}
  \ForAll{particles $i$}
     \State $\calT_t^i\gets\calT_{t-1}^i$;\quad $W_t^i\gets W_{t-1}^i$
     \ForAll{pending nodes $v\in\calT_{t-1}^i$}
        \State Draw $(S,J,L)\sim Q_{\downarrow v}$; in $\calT_t^i$ close
               $v$ if $S=1$, otherwise attach its two children, each
               pending iff it holds at least $r$ observations
        \State $W_t^i\gets W_t^i\,G_{\downarrow v}$
     \EndFor
  \EndFor
  \State $c_t\gets\sum_i W_t^i$;\quad $\widehat Z_t\gets\widehat Z_{t-1}\,c_t$;
         \quad $W_t^i\gets W_t^i/c_t$
  \Statex \textit{C. Tempered resampling} \citep{Liu2001}
  \If{pending nodes remain and $\bigl(\sum_i (W_t^i)^2\bigr)^{-1}<\epsilon_0 N$}
     \State $p^i\gets (W_t^i)^{\alpha}\big/\sum_j (W_t^j)^{\alpha}$
     \State $a_1,\ldots,a_N\overset{\mathrm{iid}}{\sim}
            \mathrm{Cat}(p^1,\ldots,p^N)$
     \State $\omega^i\gets W_t^{a_i}/p^{a_i}$;\quad
            $\bar\omega\gets N^{-1}\sum_i\omega^i$
     \State $(\calT_t^1,\ldots,\calT_t^N)\gets
            (\calT_t^{a_1},\ldots,\calT_t^{a_N})$
     \State $\widehat Z_t\gets\widehat Z_t\,\bar\omega$;\quad
            $W_t^i\gets\omega^i/(N\bar\omega)$
  \EndIf
\EndWhile
\State \Return $\widehat Z\gets\widehat Z_t$ and the distinct final trees
       $\{(\calT^{(k)},\Omega_k)\}_{k=1}^K$ with summed weights
       $\Omega_k\gets\sum_{i:\,\calT_t^i=\calT^{(k)}}W_t^i$
\end{algorithmic}
\end{algorithm}

\newpage 
\subsection{Particle marginal Metropolis-Hastings sampler} \label{supp: PMMH} 

A starting point for estimating $\bftheta$ is to maximize the marginal likelihood $\hat{Z}(\bftheta)$ leading to a maximum likelihood estimate (MLE). If priors on $\bftheta$ are used, the maximum a posterior estimate can be obtained. Both the MLE and MAP have been implemented in our package \texttt{ResTree}. A fully Bayesian estimation based on particle marginal Metropolis-Hastings sampler is introduced in Algorithm~\ref{alg: pmmh}. The clear advantage of MLE or MAP lies in its computational time compared to the PMMH sampler but our experience in these examples has indicated that they may sacrifice statistical accuracy in certain occasions. 

Some default tuning parameter settings in the algorithm that work well for the presented numerical examples and are summarized below: 
\begin{itemize}[noitemsep]
\item The nominal depth $D$ of the tree is set to be $D=\lceil \log_2 (n/r) \rceil$. The realized $D$ might be different in SMC algorithm for PP and WN. However, Full leaf model always has maximum tree depth $D$. 
\item The default number of particle size $N$ is set to be $N=100 D$.
\item In the Bayesian CART prior, the hyperparameters are set to be the default one: $\rho_0=0.95, \beta_0=2$. 
\item The resampling  threshold $\epsilon_0$ is set to be $\epsilon_0=0.5$. 
\item The knot selection is default to maximin design that is performed sequentially down the tree from root to leaf. 
\item The default number of knots is set to be $r=30$. 
\item The white-noise leaf model has prior parameter settings: $\mu_0=0, \kappa_0=1, a_0=2, b_0=0.01$.
\item Priors on Mat\'ern covariance parameters $(\sigma^2, \ell_1, \ldots, \ell_d, v)$ are $\sigma^2 \sim IG(2,0.01),$ $p(\ell_i) \propto 1/(1+\ell_i^2) \mathds{1}(\ell_i>0), p(v) \propto 1/(1+v^2) \mathds{1}(v>0)$.
\end{itemize}

\begin{algorithm}[!htbp] 
\caption{Particle Marginal Metropolis-Hastings (PMMH) Sampler}
\label{alg: pmmh} 
%\begin{raggedright}
%\par\end{raggedright}
 \begin{algorithmic}[1]  
 \Require data $(\bfx_i,\bfy_i)_{i\le n}$, $\bftheta^{(0)}$, depth $D$, particles $N$, cut rule, leaf model,  threshold $\epsilon_0\in(0,1]$ %tempering $
 \Ensure $\{(\bftheta^{(k)}, Z^k) :k=1,\ldots, k_0 \}$
 \State  $Z^{0} \gets$ the SMC in Algorithm~\ref{alg: SMC} with $ \bftheta^{(0)}$
 \For{$k=1,\ldots, k_0$} 
 \State   Draw from a proposal: $\bftheta^* \sim J(\bftheta | \bftheta^{(k-1)})$
 \State  $Z^* \gets$ the  SMC in Algorithm~\ref{alg: SMC} with $\bftheta^{*}$
\State Compute MH ratio: 
$$
\alpha:=\min\left\{1, \frac{Z^* p(\bftheta^{*}) J(\bftheta^{(k-1)} | \bftheta^*) }{Z^{k-1} p(\bftheta^{(k-1)}) J(\bftheta^* | \bftheta^{(k-1)} ) } \right\}
$$
\State Accept $\bftheta^{*}$ and $Z^*$ with probability $\alpha$.
\EndFor

\State \Return $\{(\bftheta^{(k)}, Z^k) :k=1,\ldots, k_0 \}$
\end{algorithmic} 
\end{algorithm}
\newpage

%\subsection{Comparison of empirical Bayes estimates and PHHM estimates} 

\section{Additional Simulation Study} \label{supp: numerical example}

\subsection{Likelihood approximation}

In Section 5.1, \texttt{ResTGP-Full} requires the number of leaf observations to be smaller than $r=30$ to determine the maximum depth and \texttt{ResTGP-PP} uses $100\lceil \log_2 (n/r) \rceil$ as the default number of particles for tree inference via SMC. Default hyperparameters are used for \texttt{ResTGP-PP}. 
 
In Figure~\ref{fig: likelihood approximation2}, all the settings are the same except the number of nearest neighbors (or knots) are increased from $r=30$ to $r=100$. The results also show consistent better performance of \texttt{ResTGP-Full} and \texttt{ResTGP-PP} than  \texttt{sSGV} for likelihood approximation, especially in higher input dimensions. Between \texttt{ResTGP-Full} and \texttt{ResTGP-PP}, there is not much difference in approximation accuracy even though a PP  is used at the leaf compared to a full residual covariance. This is partly due to the SMC inference over tree since the knot selections changes along with the tree structure and in part due to the nearly equivalence of full residual structure and PP residual structure. The timing results again make \texttt{ResTGP-Full} stand out. It is worth mentioning that in all these experiments, both \texttt{ResTGP-Full} and \texttt{ResTGP-PP} could be better suited for addressing higher dimensional problems than \texttt{sSGV} when a target covariance function is known. Both of them can be improved in various ways. For instance, all the methods can be improved through better choice of knots or nearest neighbors. For instance, correlation based design for nearest neighbors can improve Vecchia approximation \citep{Kang2023, Katzfuss2022}. 
 
 \begin{figure}[!htpb]
\begin{center}
\begin{subfigure}[t]{1\textwidth}
\makebox[\textwidth][c]{\includegraphics[width=1\textwidth, height=0.25\textheight]{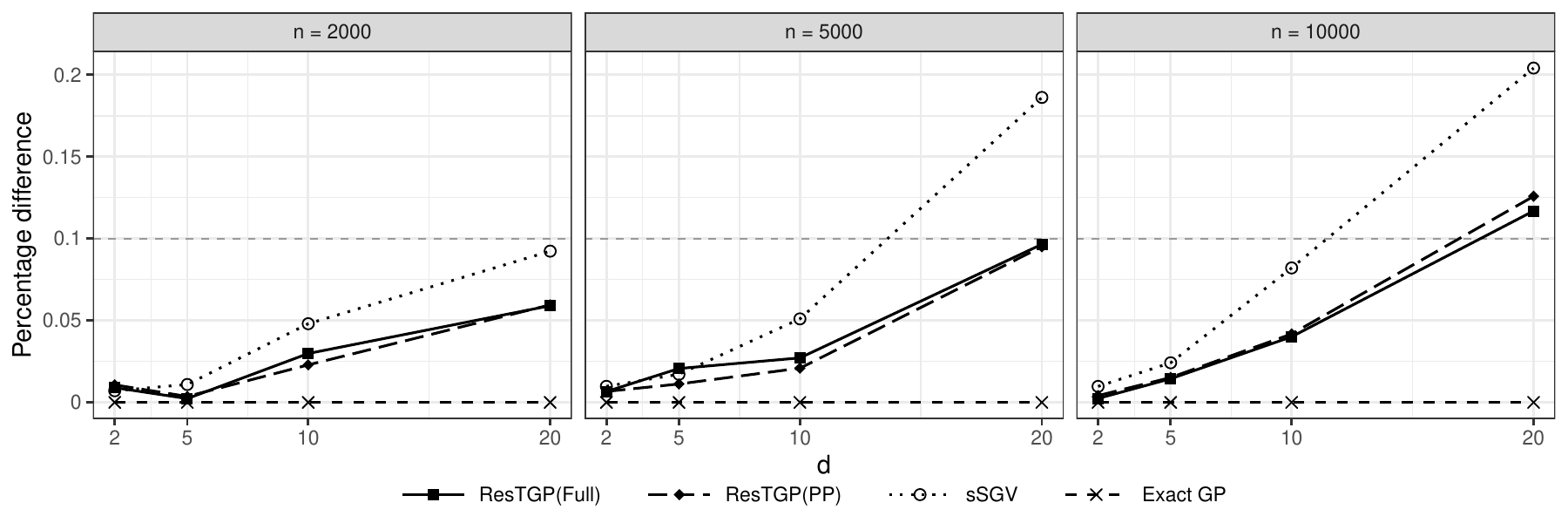}}
%\caption{Likelihood against sample size under different models and input dimension. $x$-axis is on the $\log_{10}$ scale.}
\end{subfigure}
\begin{subfigure}[t]{1\textwidth}
\makebox[\textwidth][c]{\includegraphics[width=1\textwidth, height=0.25\textheight]{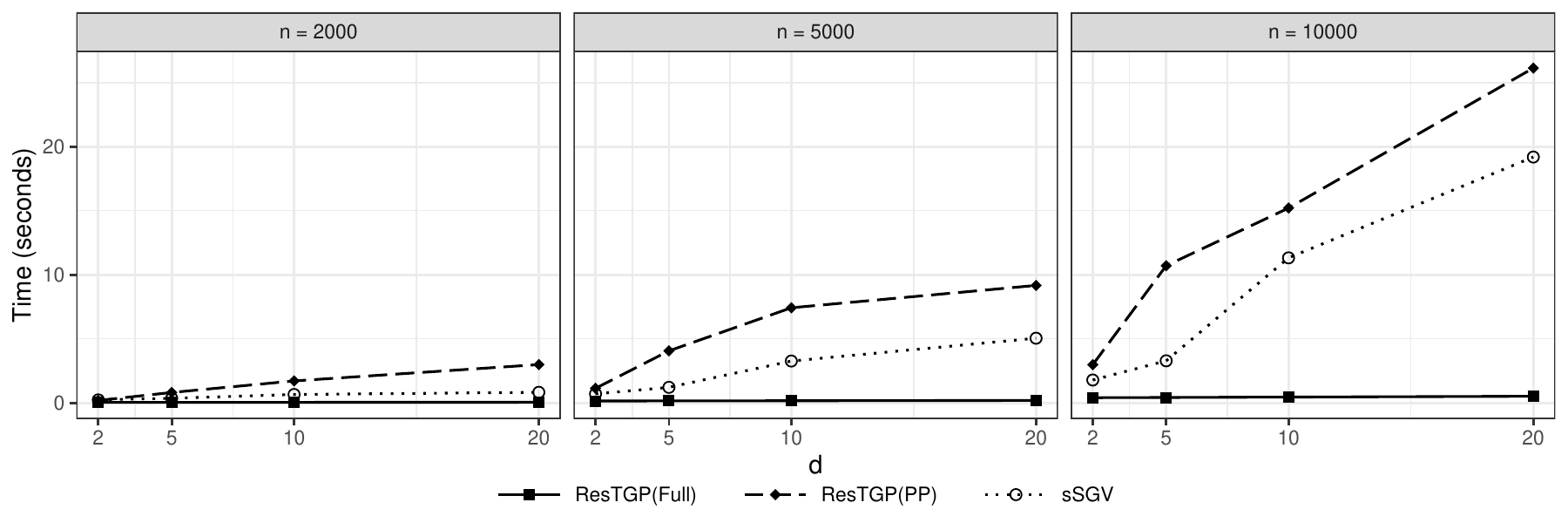}}
\end{subfigure}
\caption{Likelihood approximation versus input dimension $d$ under different $n$.  All methods have fixed $r=100$ knots or nearest neighbors chosen via maximin design. Top panels show the percentage error of log-likelihood compared to exact GP. Bottom panels show the timing.}
\label{fig: likelihood approximation2}
\end{center}

\end{figure}

\newpage

\subsection{Adaptation to input-dependent noise} 

\subsubsection{Results under $n=320$} 
In this example, SGV uses 30 nearest neighbors, \texttt{ResTGP-PP} and \texttt{ResTGP-WN} use 15 knots. \texttt{tgp} is run with 7000 samples including 2000 iterations for burn-in period. The numerical summaries of predictive performance are given in Table~\ref{table: 1D illustration}. This table also confirms the advantage of the proposed ResTGP method with its two models \texttt{ResTGP-PP} and \texttt{ResTGP-WN}.  \texttt{ResTGP-WN} outperform others in terms of CRPS and log-score but \texttt{ResTGP-PP} gives smallest RMSE and \texttt{tgp} gives better CVG that is closer to the nominal level. In terms of timing, \texttt{ResTGP-WN} is competitive to \texttt{SGV} with this small dataset. \texttt{ResTGP-PP} with adaptive cut is slowest. 

\begin{table}[!htbp] 
\centering
\normalsize
   \caption{Predictive performance under different models with $n=320$ training data points over 100 held out data points.
The reported metrics are the root mean squared error (RMSE), continuous ranked probability
score (CRPS), empirical coverage of the nominal 95\% predictive intervals (CVG),
and log-score. ResTGP are estimated using the empirical Bayes method.}
  {\resizebox{1.0\textwidth}{!}{%
  \setlength{\tabcolsep}{1.5em}
   \begin{tabular}{l l  l lll} 
   \toprule \noalign{\vskip 1.5pt} 
  Models & RMSE & CRPS & log-score & CVG(95\%)  & Timing (seconds) \\
\midrule \noalign{\vskip 2.5pt} 
SGV  & 1.03 & 0.527 & 1.44 & 0.934 & 0.25 \\
tgp & 1.08 & 0.489 & 1.38 & 0.942 & 32.3  \\
 ResTGP-PP &  {1.01}  & 0.510   & 1.37 & 0.922 & 69\\
 ResTGP-WN & 1.02   &  0.484   & 1.08 &  0.932  & 0.64\\
\bottomrule
   \end{tabular}%
   }}
   \label{table: 1D illustration}
\end{table}  

\subsubsection{Results under $n=3200$}
This example considers $n=3200$ training data points and compare methods including \texttt{SGV}, \texttt{BART} \citep{Chipman2010}, \texttt{ResTGP-PP}, and \texttt{ResTGP-WN}. \texttt{tgp} was not compared due to its excessive computing time. \texttt{BART} is run with 200 trees and 4000 posterior samples after 1000 burn-in period. \texttt{SGV} used 30 neighbors while both \texttt{ResTGP-PP} and \texttt{ResTGP-WN} used 15 knots. \texttt{ResTGP-PP} used middle cut as it is much faster than that with adaptive cut even though it may sacrifice performance accuracy. \texttt{ResTGP-WN} used uniform cut.  

As expected, all instances of {ResTGP} models outperform SGV and BART in terms of CRPS, log-score, and CVG even when SGV has $2r$ nearest neighbors. The prediction results are visualized in Figure~\ref{fig: 1d example n=3200}. BART shows constant predictive standard deviation and cannot adapt to the input-dependent noise very well. Compared with BART, the predictive standard deviation from GP models can use spatial covariance to allow stronger spatial smoothing. SGV's predictive standard deviation shows local adaptive structure but cannot  capture the input-dependent noises well since its predictive standard deviation is quite scattered around the same value. This is because SGV uses information from their selected neighbors and a stationary covariance function. 

In contrast, both ResTGP-PP and ResTGP-WN can adaptively capture the function with their 95\% prediction interval bands adapted to different regions. Between these two models, ResTGP-WN can better adaptively capture the heterogeneity and show adaptive predictive standard deviations. Table~\ref{table: 1D illustration n=3200} shows that ResTGP-WN outperform other models in terms of CRPS, log-score, CVG. The computing time for ResTGP-WN also shows potential scalability compared with SGV.

\begin{table}[!htbp] 
\centering
\normalsize
   \caption{Predictive performance under different models with $n=3200$ training data points over 100 held out data points.
The reported metrics are the root mean squared error (RMSE), continuous ranked probability
score (CRPS), empirical coverage of the nominal 95\% predictive intervals (CVG),
and log-score. ResTGP models are estimated using the empirical Bayes method.}
  {\resizebox{1.0\textwidth}{!}{%
  \setlength{\tabcolsep}{1.5em}
   \begin{tabular}{l l  l lll} 
   \toprule \noalign{\vskip 1.5pt} 
  Models & RMSE & CRPS & log-score & CVG(95\%)  & Timing (seconds) \\
\midrule \noalign{\vskip 2.5pt} 
              SGV  & 0.949 & 0.492 & 1.371 & 0.938 & 1.4 \\
            BART & 1.082 & 0.569 & 1.503 & 0.940 & 181  \\
  ResTGP-PP &  {0.936}  & 0.480   & 1.336 & 0.941 & 66\\
 ResTGP-WN & 0.989   &  0.458   & 0.953 &  0.947  & 4\\
\bottomrule
   \end{tabular}%
   }}
   \label{table: 1D illustration n=3200}
\end{table}

\begin{figure}[!htpb]
\begin{center}
\begin{subfigure}[t]{1\textwidth}
\makebox[\textwidth][c]{\includegraphics[width=1\textwidth, height=0.4\textheight]{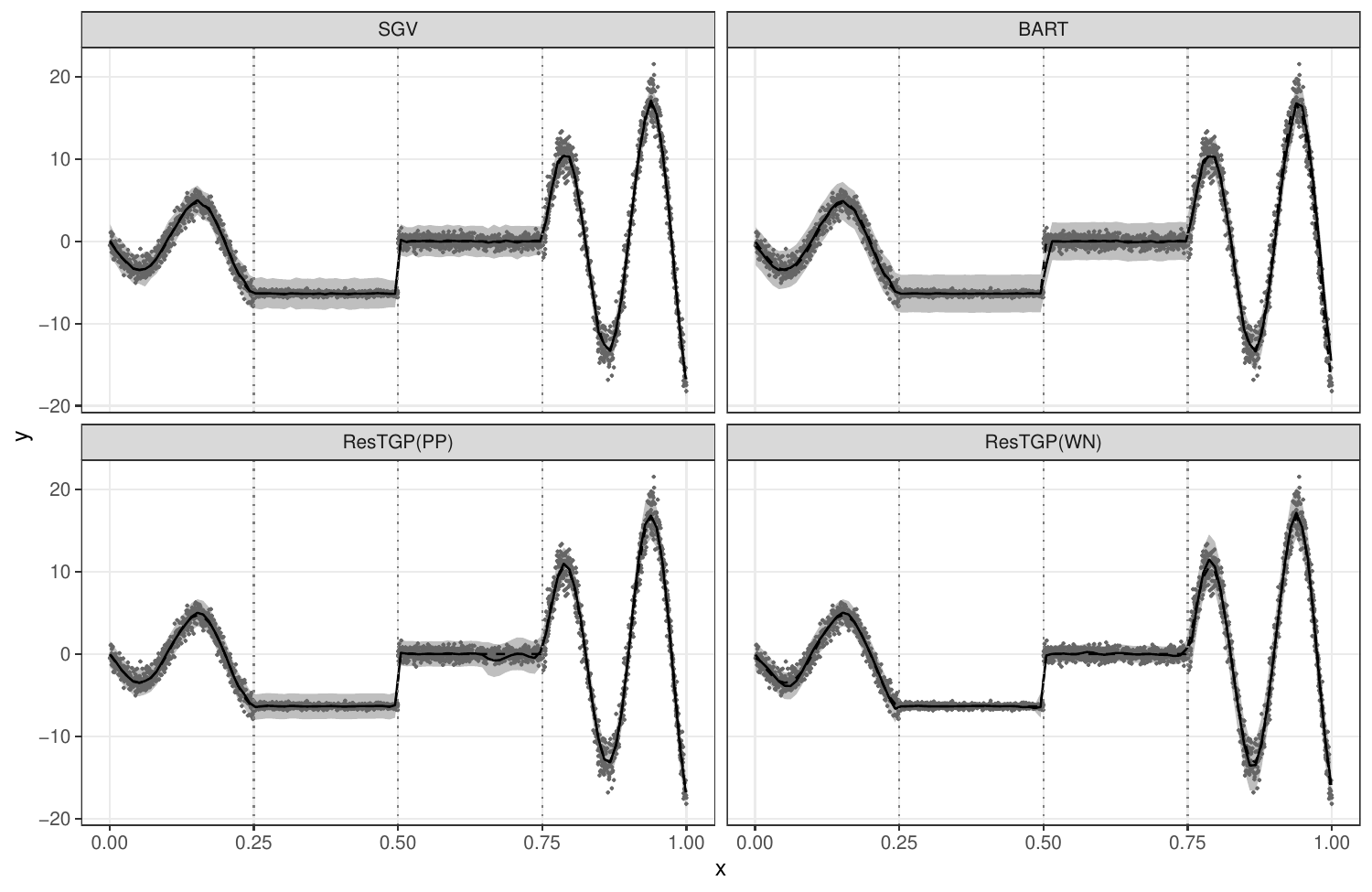}}
\caption{Predictions under different models}
\end{subfigure}%

\begin{subfigure}[t]{1\textwidth}
\makebox[\textwidth][c]{\includegraphics[width=1\textwidth, height=0.25\textheight]{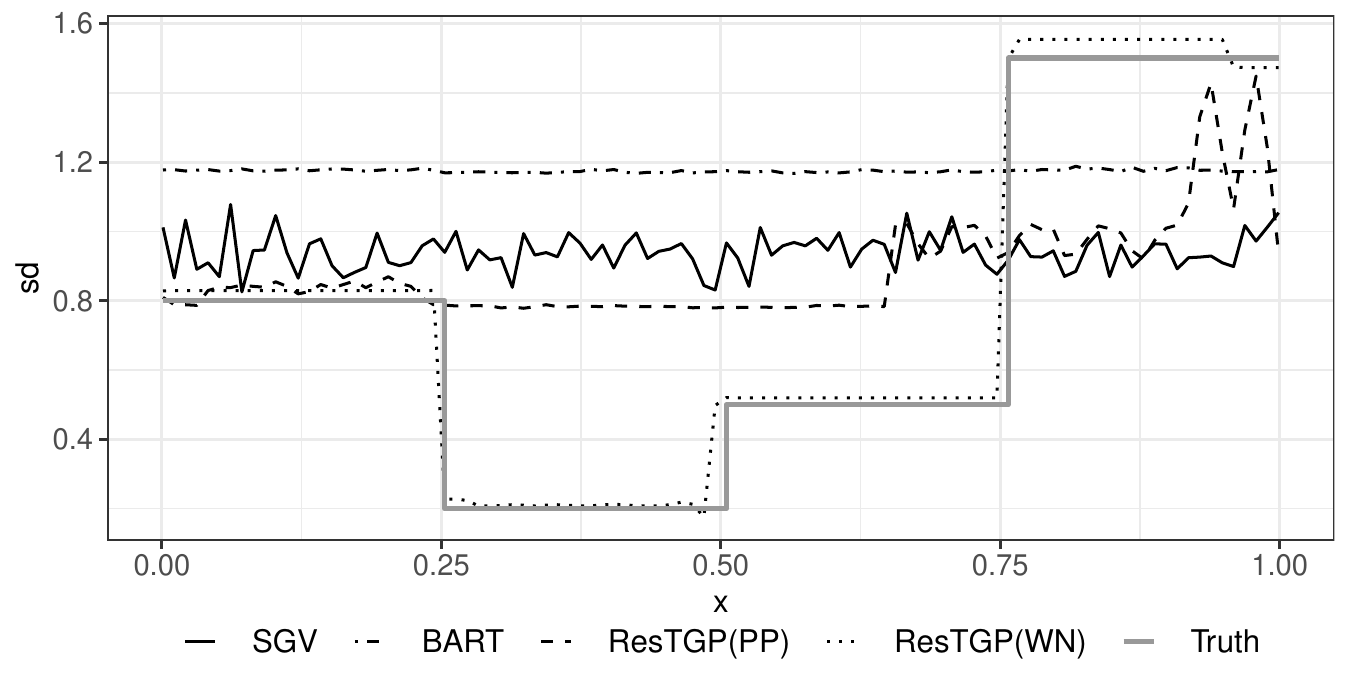}}
\caption{Predictive standard deviations}
\end{subfigure}
\caption{Predictive performance of different models over 100 testing locations equally spaced in $[0, 1]$. All models are trained based on 3200 observations (dots). Panel (a) shows the predictive mean and 95\% predictive intervals under different models. Panel (b) shows how predictive standard deviations change with input. Result for SGV is obtained from the \texttt{GpGp} package.  Result based on BART is obtained from the \texttt{BayesTree} package.}
\label{fig: 1d example n=3200}
\end{center}
\end{figure}

\newpage

\subsection{High-dimensional examples}

\textbf{Timing:} To measure the computing time, all models are fitted on a 10-core machine with 64GB memory. BART is run with 3,000 burn-in and 7,000 post burn-in iterations. ResTGP-PP and ResTGP-WN are run with default settings for hyperparameters in prior distributions. The knots and nearest neighbors are set to be consistent for sSGV and ResTGP models as $r=60$. The number of particles for ResTGP-PP and ResTGP-WN is set to be $200 \lceil \log_2(n/r) \rceil$. The PMMH algorithm for ResTGP-WN is run with 3000 iterations for burn-in and 2000 posterior draws after burn-in. 

Different estimation methods require different number of likelihood evaluations that would significantly impact computing time. Nevertheless, \texttt{sSGV} is fastest among all methods with \texttt{ResTGP-WN} being slowest with PMMH method. Using optimization methods, \texttt{ResTGP-WN}'s computing time is changed from 8 to 23 minutes when $d$ is changed from 10 to 20. 
From $d=10$ to $d=20$, the computing time for \texttt{sSGV} and \texttt{ResTGP-Full} is not affected too much even though their single likelihood evaluations become more expensive. This is partly because the optimization procedures can have a large impact on the number of likelihood evaluations before the algorithm stops. \texttt{ResTGP-WN} and \texttt{ResTGP-PP} becomes more expensive when $d$ increases due to the number of particles and inference for tree structures. 

\noindent \textbf{Predictive diagnostics: }Diagnostics plots shown in Figure~\ref{fig: 10D-illustration} and~\ref{fig: 20D-illustration} show the predictive mean and standard deviations. These results are discussed in Section 5.3. 

\begin{figure}[!htpb]
\begin{center}
\begin{subfigure}[t]{1\textwidth}
\makebox[\textwidth][c]{\includegraphics[width=1\textwidth, height=0.4\textheight]{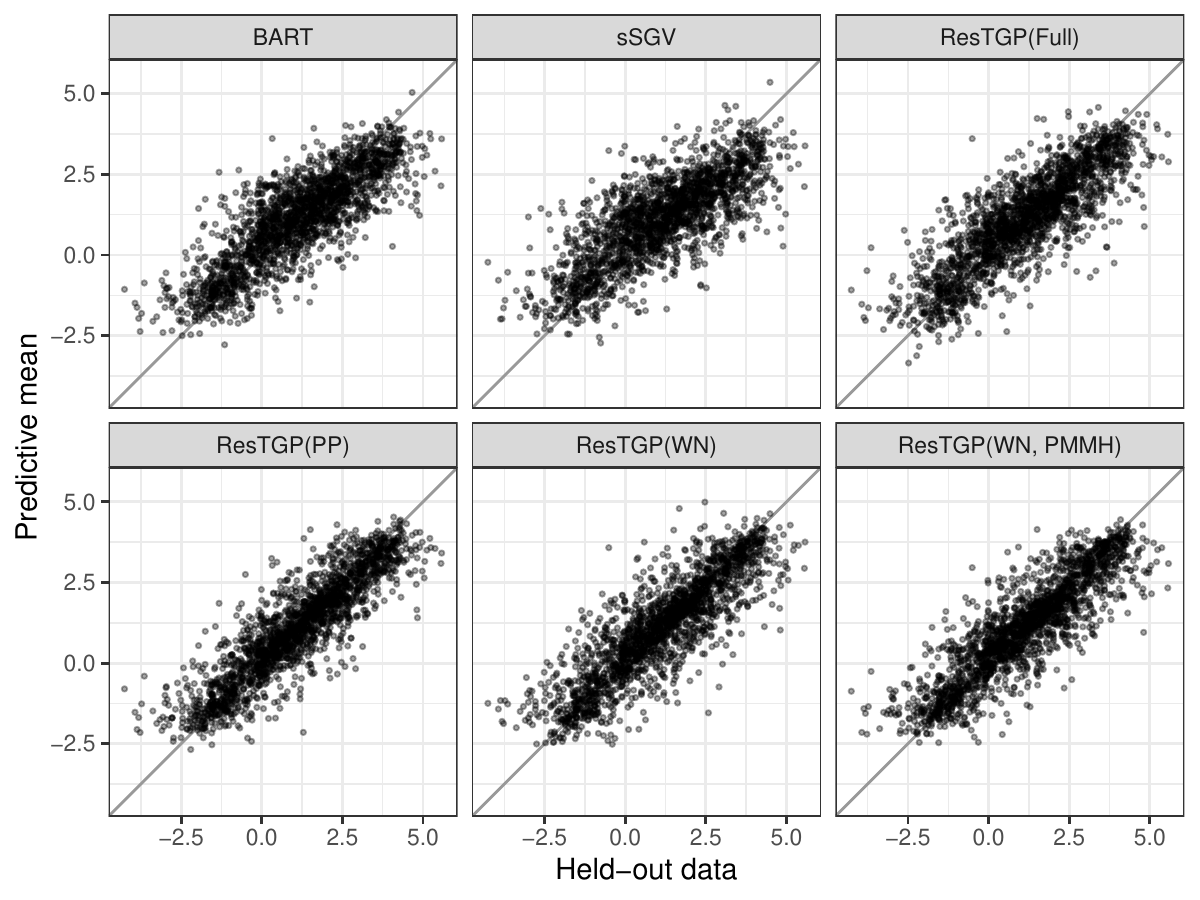}}
\caption{Predictions under different models}
\end{subfigure}%

\begin{subfigure}[t]{1\textwidth}
\makebox[\textwidth][c]{\includegraphics[width=1\textwidth, height=0.4\textheight]{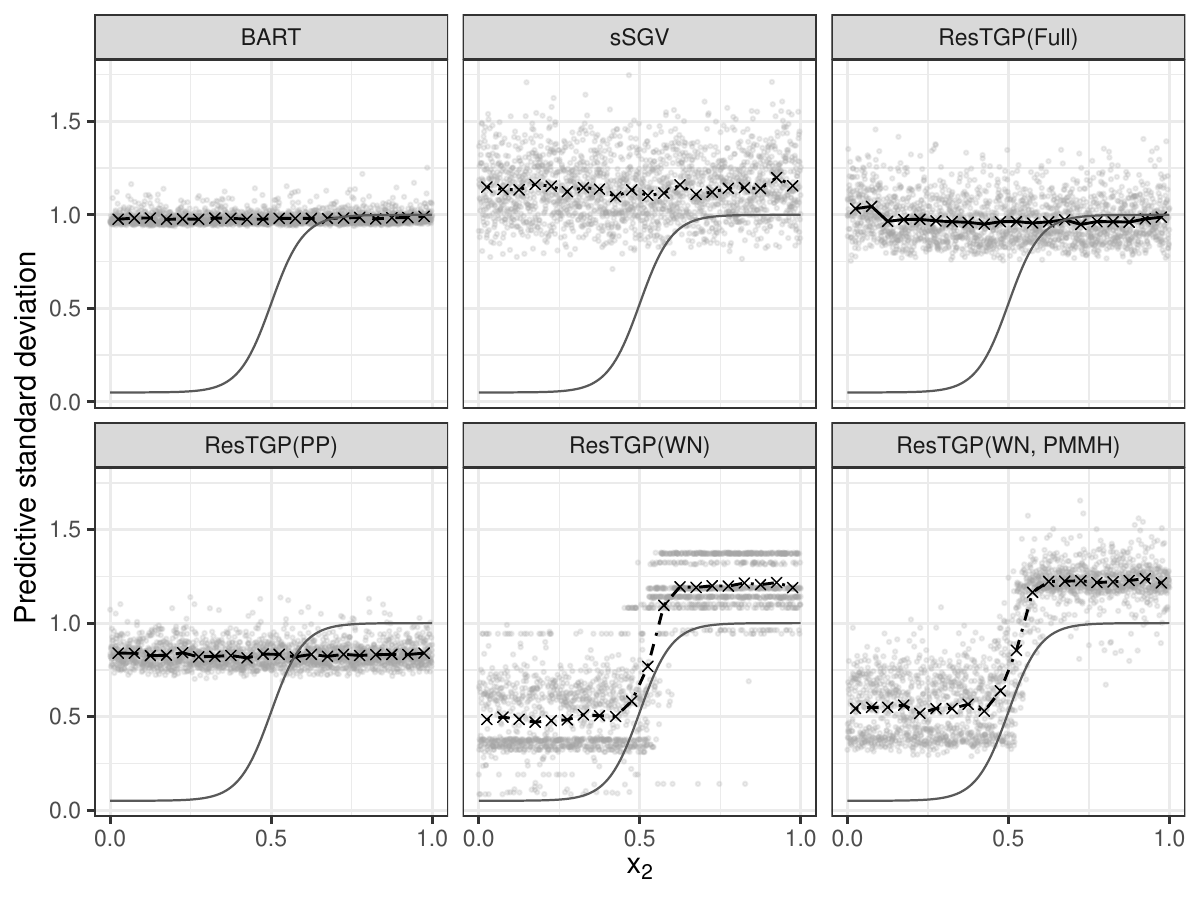}}
\caption{Predictive standard deviations}
\end{subfigure}
\caption{Predictive performance under $d=10$. Panel (a) shows the predictive mean against held-out data under different models with 45 degree line. Panel (b) shows how predictive standard deviations change with input $x_2$. The gray dots reflects different values of other inputs and cross sign shows the average over other input variables except $x_2$. The solid curve shows the true noise function $\sigma(\cdot)$.  Result for sSGV is obtained from the \texttt{GpGp} package. Result based on BART is obtained from the \texttt{BayesTree} package.}
\label{fig: 10D-illustration}
\end{center}
\end{figure}

\begin{figure}[!htpb]
\begin{center}
\begin{subfigure}[t]{1\textwidth}
\makebox[\textwidth][c]{\includegraphics[width=1\textwidth, height=0.4\textheight]{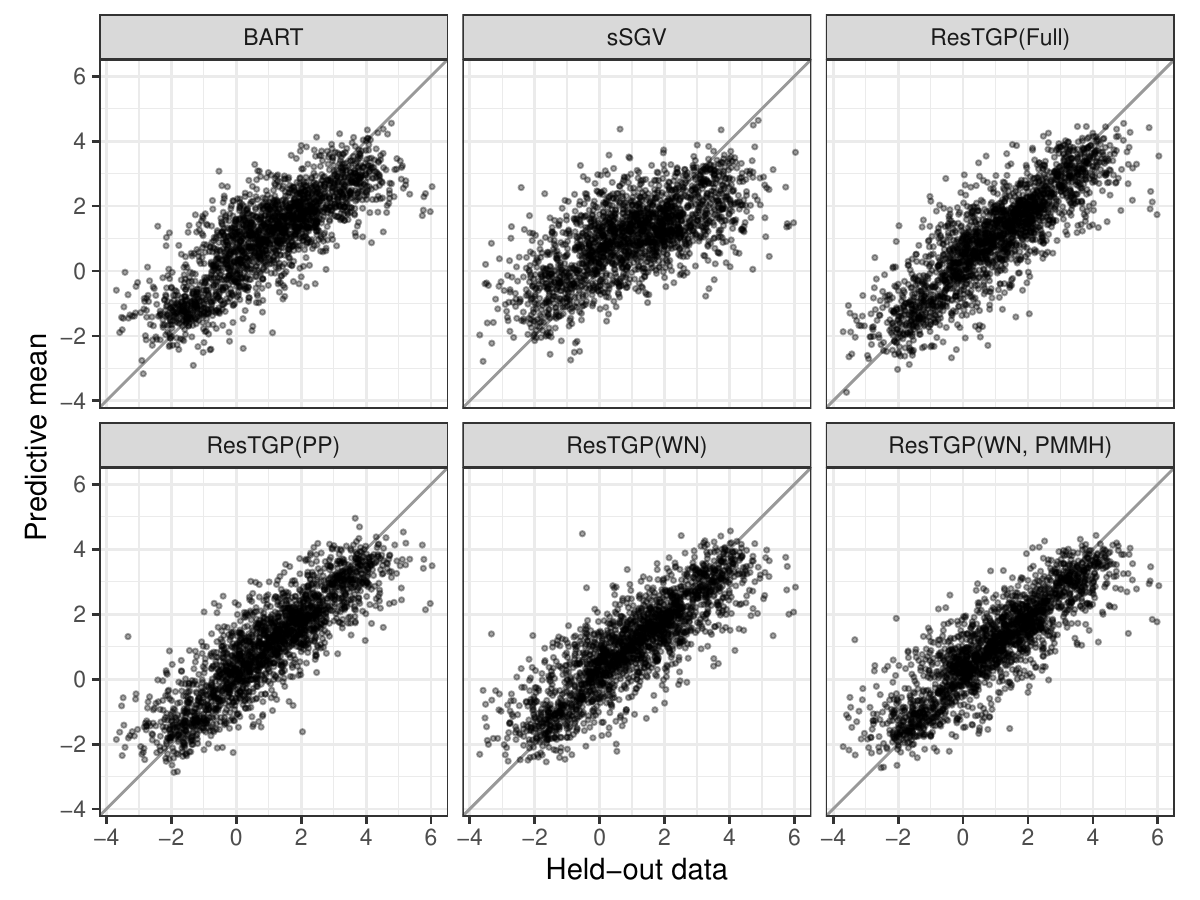}}
\caption{Predictions under different models}
\end{subfigure}%

\begin{subfigure}[t]{1\textwidth}
\makebox[\textwidth][c]{\includegraphics[width=1\textwidth, height=0.4\textheight]{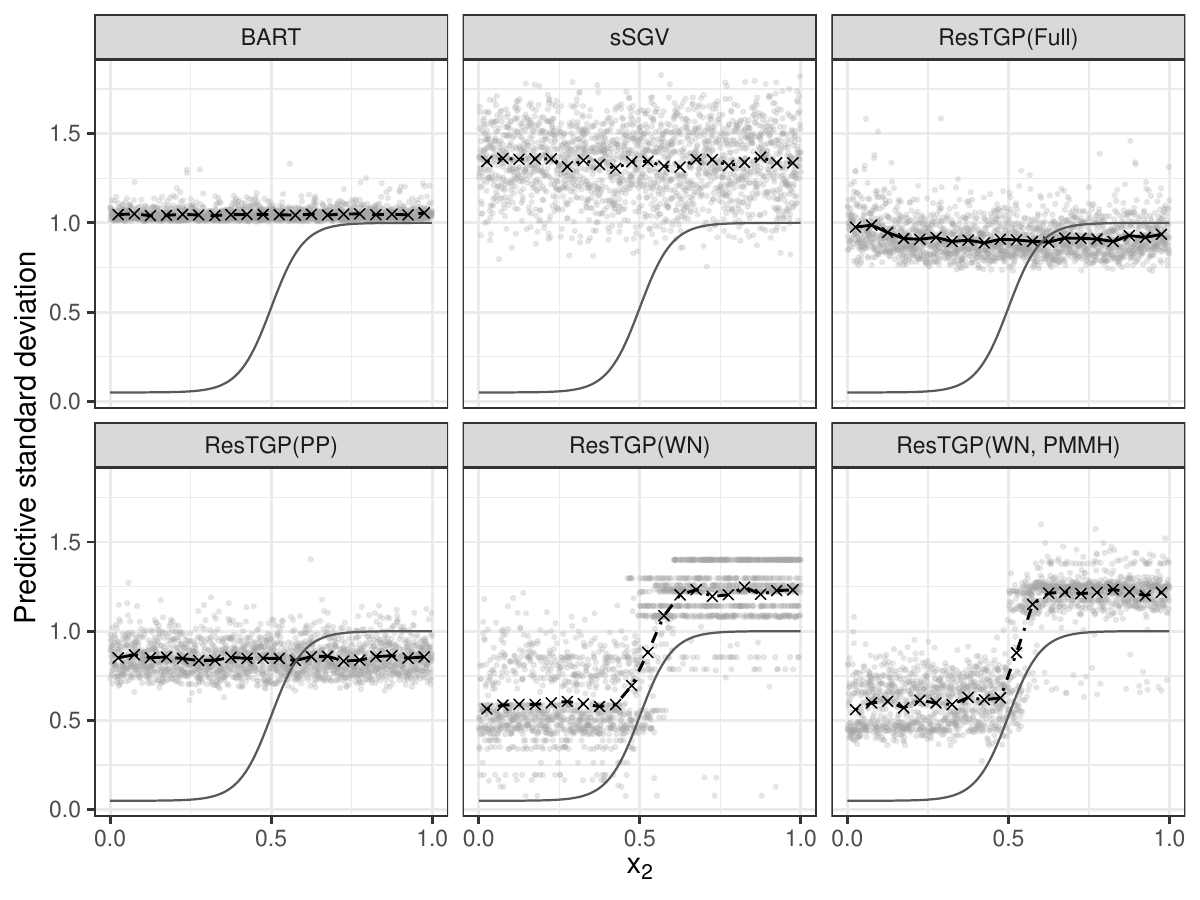}}
\caption{Predictive standard deviations}
\end{subfigure}
\caption{Predictive performance under $d=20$. Panel (a) shows the predictive mean against held-out data under different models with 45 degree line. Panel (b) shows how predictive standard deviations change with input $x_2$. The gray dots reflects different values of other inputs and cross sign shows the average over other input variables except $x_2$. The solid curve shows the true noise function $\sigma(\cdot)$.  Result for sSGV is obtained from the \texttt{GpGp} package. Result based on BART is obtained from the \texttt{BayesTree} package.}
\label{fig: 20D-illustration}
\end{center}
\end{figure}

\noindent \textbf{SMC Diagnostics:} The SMC diagnostic plots are given in Figure~\ref{fig: SMC diagnostics in 10-20D-illustration}. The plots only show ESS (effective sample size) and log evidence $\log Z$ across tree levels for the \texttt{ResTGP-WN} model, although they are updated at each SMC step. From these figures, the ESS exhibits high variation across tree levels. These decreases in ESS trigger resampling, after which the particle weights are reset and the ESS increases. Such behavior is expected when the posterior becomes concentrated on a relatively small number of tree decisions during the expansion process. The estimated marginal likelihood also stabilizes as the tree expansion proceeds.

\begin{figure}[!htpb]
\begin{center}
\begin{subfigure}[t]{1\textwidth}
\makebox[\textwidth][c]{\includegraphics[width=1\textwidth, height=0.25\textheight]{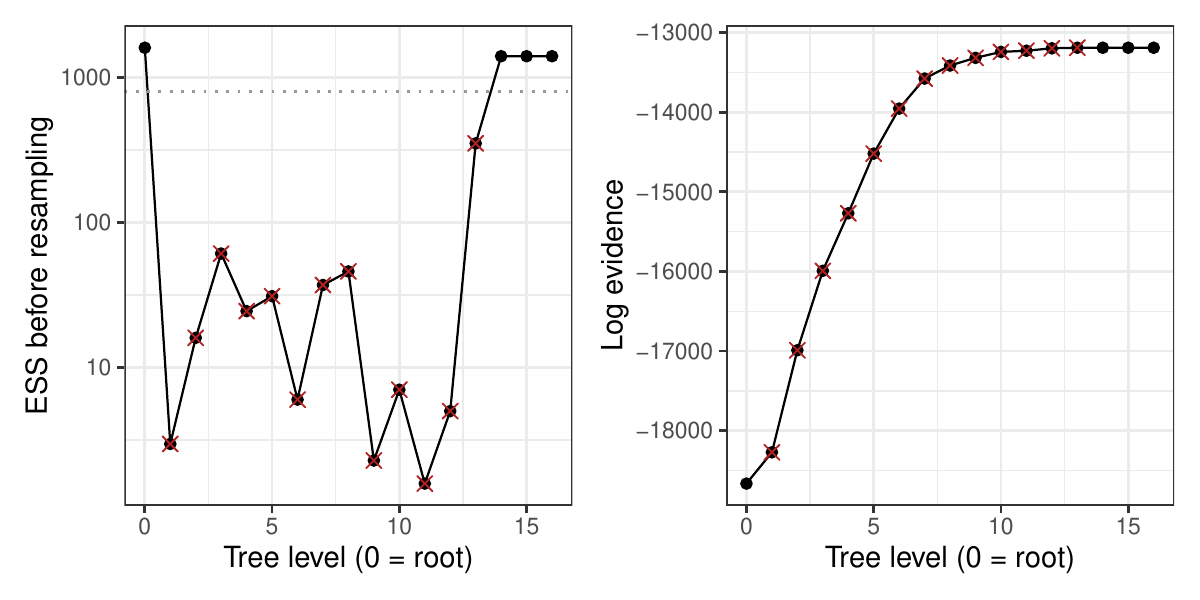}}
\caption{$d=10$.}
\end{subfigure}%

\begin{subfigure}[t]{1\textwidth}
\makebox[\textwidth][c]{\includegraphics[width=1\textwidth, height=0.25\textheight]{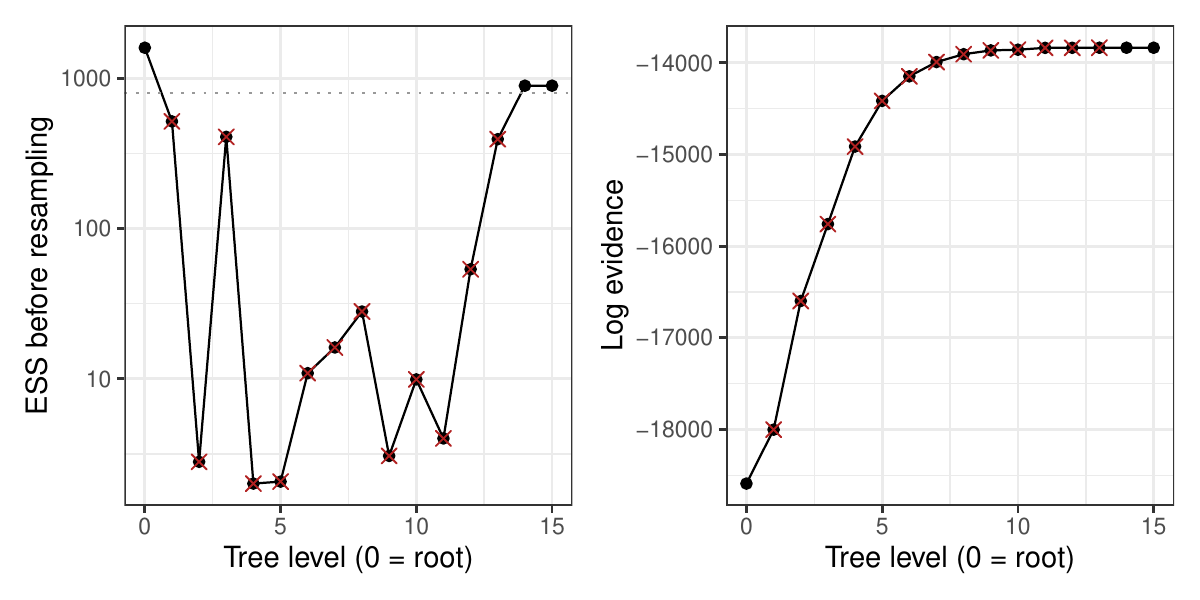}}
\caption{$d=20$}
\end{subfigure}
\caption{SMC diagnostics in \texttt{ResTGP-WN} with ESS and log evidence against tree levels. The cross sign indicates resampling step happened at a tree level.}
\label{fig: SMC diagnostics in 10-20D-illustration}
\end{center}
\end{figure}

\newpage

\section{Results in Real Application} \label{supp: real application} 

\subsection{Methods settings}

The configuration of each model under comparison is given below:
\begin{itemize}[itemsep=-2pt, topsep=0pt, partopsep=0pt]
\item sSGV: $r=30, 60$
\item BART: the number of trees is 200. The number of MCMC iterations is 12,000 with 5,000 for burn-in period. Default settings for hyperparameters are used. 
\item ResTGP-PP: The number of knots is $r=60$. Middle cut is used via the option \verb|cut_method = "middle"|. Default settings for hyperparameters are used. 
\item ResTGP-WN: The number of knots is $r=60$. Adaptive cut is used via the option 
\verb|cut_method = "uniform"|. Default settings for hyperparameters are used. 
\end{itemize}

\subsection{Exploratory analysis}

This section contains the description of inputs in the  storm surge application. 
%\subsection{Storm surge application} \label{supp:sec:storm surge}
\begin{table}[!htbp]
\centering
\normalsize
   \caption{Storm characteristics parameters.}
  {\resizebox{1.0\textwidth}{!}{%
  \setlength{\tabcolsep}{3.5em}
   \begin{tabular}{c l } 
   \toprule \noalign{\vskip 1.5pt} 
Input variables    &  Physical meaning \\ \noalign{\vskip 1.5pt} 
\midrule \noalign{\vskip 2.5pt} 
$\Delta P$  & central pressure deficit of the storm (mb) \\ \noalign{\vskip 2.5pt} 
$R_p$       & scale pressure radius in nautical miles \\ \noalign{\vskip 2.5pt} 
$V_f$       & storm's forward speed (m$/$s) \\ \noalign{\vskip 2.5pt} 
$\vartheta$    & storm's heading in degrees clockwise from north \\ \noalign{\vskip 2.5pt} 
$B$         & Holland's $B$ parameter (unitless) \\ \noalign{\vskip 2.5pt} 
$\mathbf{\Ell}$       & landfall location in latitude and longitude \\  \noalign{\vskip 2.5pt}  \bottomrule
   \end{tabular}%
   }}
   \label{table: input parameters}
\end{table}

\begin{figure}[!htpb]
\begin{center}
\makebox[\textwidth][c]{\includegraphics[width=1\textwidth, height=0.4\textheight]{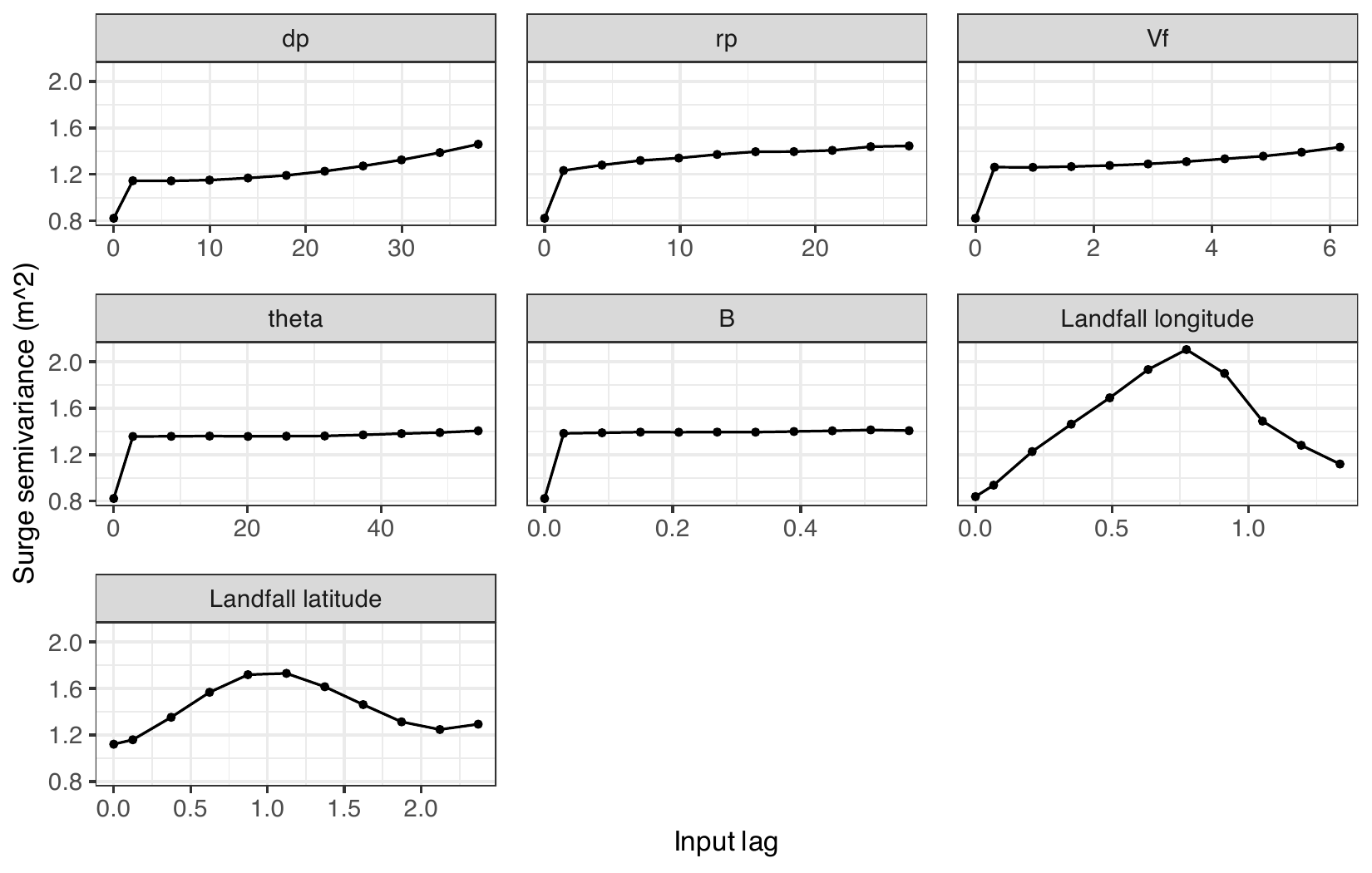}}
\caption{Directional empirical semivariograms along each input.}
\label{fig: semivariograms}
\end{center}
\end{figure}

\subsection{Additional results}

Figure~\ref{fig: app scatterplot} shows the predictive mean against held-out observations with 45 degree line. Figures~\ref{fig: app predictive SE sSGV},~\ref{fig: app predictive SE PP},~\ref{fig: app predictive SE WN} show the predictive standard deviations with respect to each input under sSGV, ResTGP-PP, and ResTGP-WN. Within each panel, the dots represent the predictive standard deviations at one variable when other variables take different values. 

\begin{figure}[!htpb]
\begin{center}
\makebox[\textwidth][c]{\includegraphics[width=1\textwidth, height=0.4\textheight]{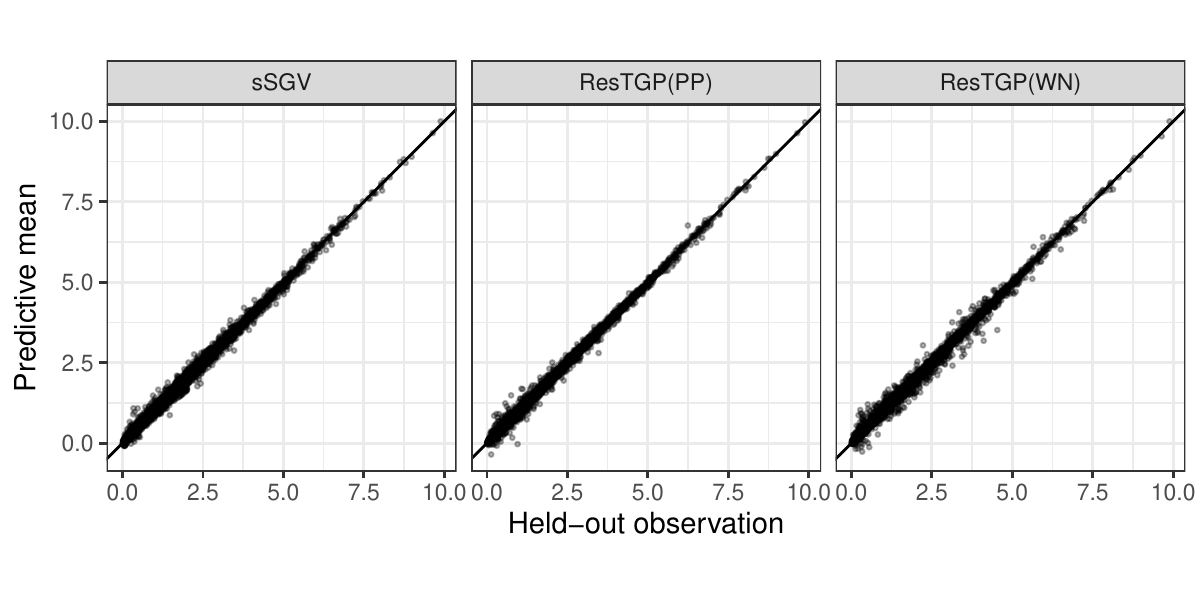}}
\caption{Scatterplot of predictive mean against held-out data under different models.}
\label{fig: app scatterplot}
\end{center}
\end{figure}

\begin{figure}[!htpb]
\begin{center}
\makebox[\textwidth][c]{\includegraphics[width=1\textwidth, height=0.4\textheight]{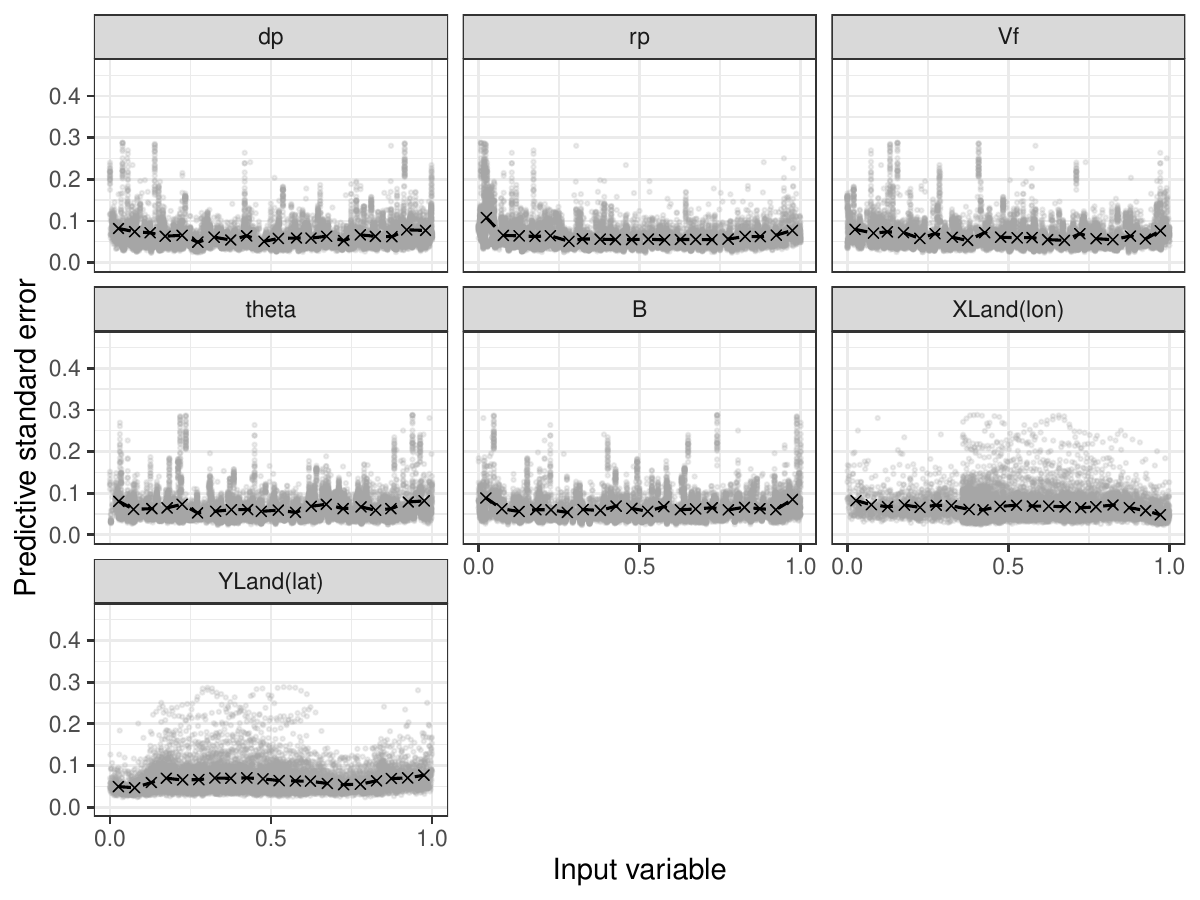}}
\caption{Predictive standard deviations against input variables under sSGV. Cross sign represents the averages of predictive standard deviations over other variables.}
\label{fig: app predictive SE sSGV}
\end{center}
\end{figure}

\begin{figure}[!htpb]
\begin{center}
\makebox[\textwidth][c]{\includegraphics[width=1\textwidth, height=0.4\textheight]{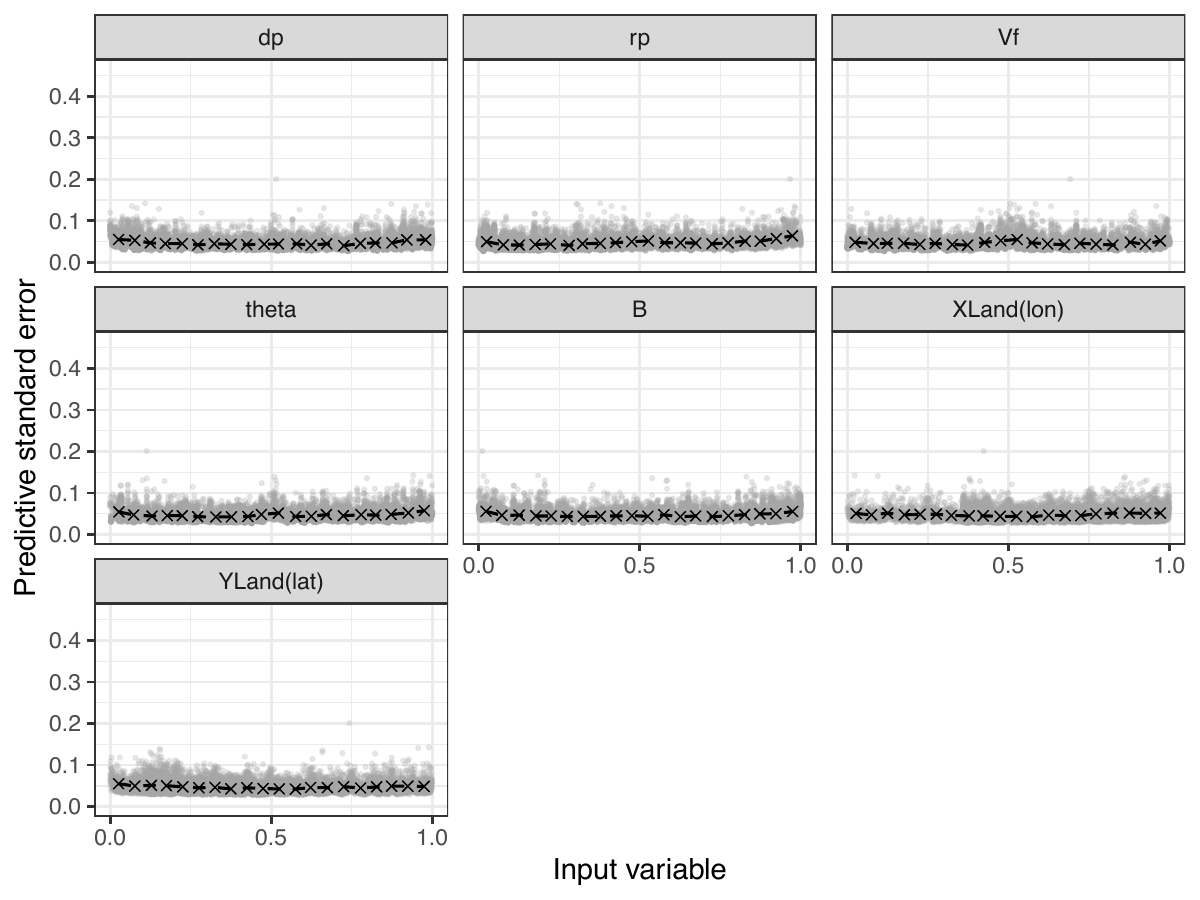}}
\caption{Predictive standard deviations against input variables under ResTGP-PP. Cross sign represents the averages of predictive standard deviations over other variables.}
\label{fig: app predictive SE PP}
\end{center}
\end{figure}

\begin{figure}[!htpb]
\begin{center}
\makebox[\textwidth][c]{\includegraphics[width=1\textwidth, height=0.4\textheight]{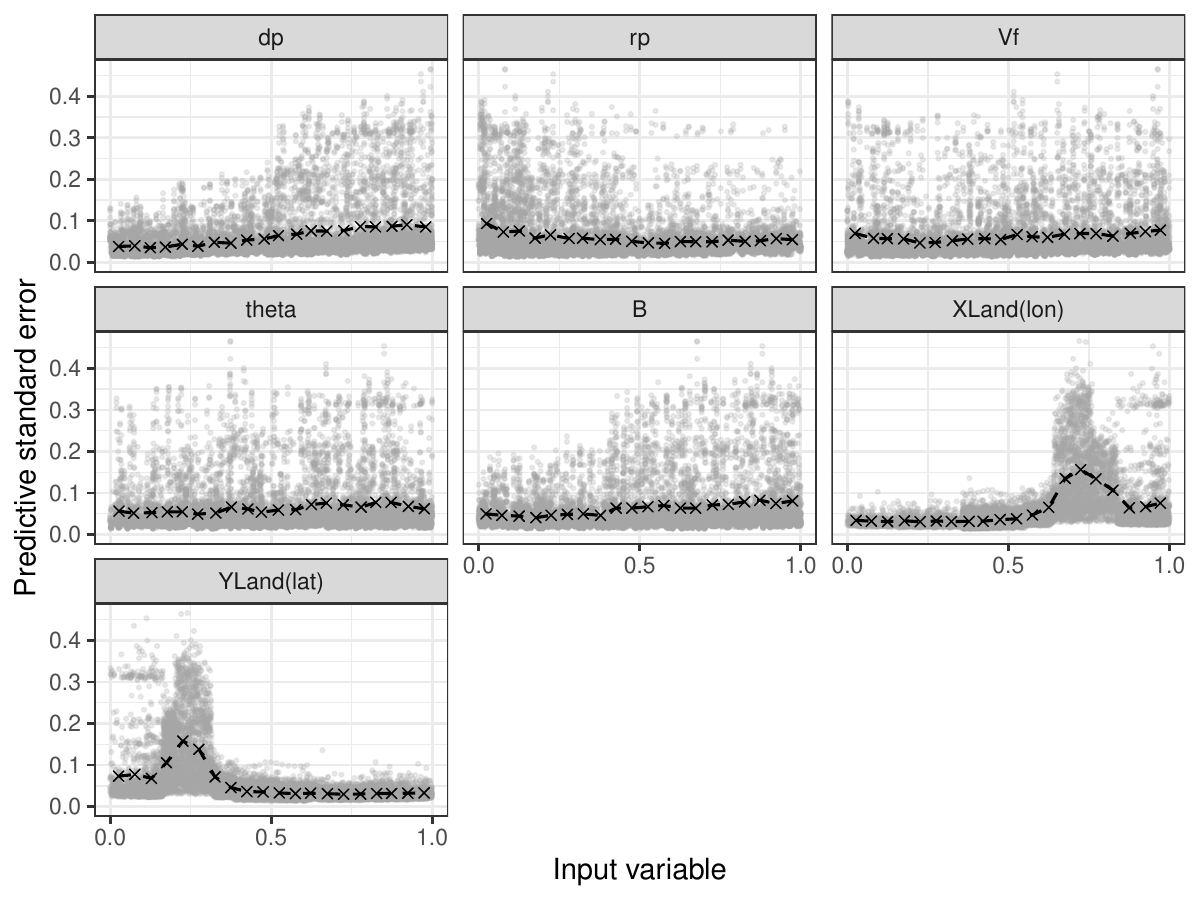}}
\caption{Predictive standard deviations against input variables under ResTGP-WN. Cross sign represents the averages of predictive standard deviations over other variables.}
\label{fig: app predictive SE WN}
\end{center}
\end{figure}

\newpage

\section{Multi-Scale Decomposition} \label{supp: ms decomposition} 
\subsection{Ancillary results} \label{supp: orthogonality for Gaussian rv}
Let $\bfY$ be an $n$-dimensional Gaussian random vector with mean vector $\bfmu$ and positive-definite covariance matrix $\bfSigma$. Let $\{\bfy_1, \bfy_2\}$ be a partition of $\bfY$ such that $\bfY=(\bfy_1^\top, \bfy_2^\top)^\top$, where $\bfy_1 \sim {N}_{r_1}(\bfmu_1, \bfSigma_{11})$, $\bfy_2 \sim {N}_{r_2}(\bfmu_2, \bfSigma_{22})$, $ \bfSigma_{12}:=\text{cov}(\bfy_1, \bfy_2)$, and $r_1 + r_2=n$. We define the quantities $\delta_{0,1}=\bfy_1$ and $\delta_{1,2}:=\bfy_2 - E[\bfy_2 \mid \bfy_1]$. One can easily check that the random vectors $\delta_{0,1}$ and $\delta_{1,2}$ are independent Gaussian random vectors. In fact, we have density factorization $p(\bfY) = p(\delta_{0,1}) \times p(\delta_{1,2})$. This allows one to shift the computation for the Gaussian density on an $n$-dimensional space to the computation on two lower-dimensional spaces. In fact, such orthogonal decomposition can be generalized to break the computation into computations on $k$ lower-dimensional spaces individually. 

\begin{lemma}\label{lem: OD for Gaussian rv}
Let $\bfY$ be an $n$-dimensional Gaussian random vector with mean vector $\bfmu$ and positive-definite covariance matrix $\bfSigma$. Suppose that $\{\bfy_1, \bfy_2, \ldots, \bfy_k\}$ is an arbitrary partition of $\bfY$ such that $\bfY = (\bfy_1^\top, \bfy_2^\top, \ldots, \bfy_k^\top)^\top$, where $\bfy_i\sim {N}_{r_i}(\bfmu_i, \bfSigma_{ii})$, $\bfSigma_{ij}:=\text{cov}(\bfy_i, \bfy_j)$, and $\sum_{i=1}^k r_i = n$. For $m=1, \ldots, k-1$, we define the following quantities:
\begin{align*}
\delta_{m, i} := \delta_{m-1, i} - E[\delta_{m-1, i} \mid \delta_{m-1, m}], \, i=m+1, \ldots, k,
\end{align*}
where $\delta_{0, \ell} = \bfy_{\ell}$ for $\ell=1, \ldots, k$. The following statements hold true:
\begin{itemize}[itemsep=-2pt, topsep=0pt, partopsep=0pt]
\item[(a)] For $m=1, \ldots, k-1$, the random vector $\delta_{m,i}$ is an $r_i$-dimensional Gaussian random vector with mean  zero 
and cross-covariance matrix given by 
\begin{align*}
\bfSigma_{i,j}^{[m]} :=\text{cov}(\delta_{m,i}, \delta_{m,j}) = \bfSigma_{i,j}^{[m-1]} - \bfSigma_{i, m}^{[m-1]} \biggl (\bfSigma_{m,m}^{[m-1]} \biggr)^{-1} \bfSigma_{m,j}^{[m-1]},\, i,j=m+1, \ldots, k,
\end{align*}
where $\bfSigma_{i, j}^{[0]}:=\text{cov}(\delta_{0,i}, \delta_{0, j})= \bfSigma_{i,j}$ for $j=1, \ldots, k$.  
\item[(b)] $\{\delta_{m,m+1}: m=0, \ldots, k-1\}$ are mutually independent.
\item[(c)] The density of $\bfY$ can be factorized as  
\begin{align*}
p(\bfY) = \prod_{m=1}^k p(\delta_{m-1, m}) =  p(\delta_{0, 1}) \times p(\delta_{1,2}) \times \cdots \times p(\delta_{k-1,k}).
\end{align*}
\end{itemize}
\end{lemma}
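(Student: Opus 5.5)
The plan is to argue by induction on $m$, using at each step the two-block fact recorded just before the lemma: if $(\mathbf{u}^\top,\mathbf{w}^\top)^\top$ is jointly Gaussian with positive-definite covariance, then $\mathbf{w}-E[\mathbf{w}\mid\mathbf{u}]$ is Gaussian, independent of $\mathbf{u}$, and has covariance $\Sigma_{ww}-\Sigma_{wu}\Sigma_{uu}^{-1}\Sigma_{uw}$. I would apply this at stage $m$ with $\mathbf{u}=\delta_{m-1,m}$ and $\mathbf{w}$ the stacked vector $(\delta_{m-1,m+1}^\top,\ldots,\delta_{m-1,k}^\top)^\top$. The key structural observation is that each $\delta_{m,i}$ is an affine function of $\bfY$. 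More precisely, $E[\delta_{m-1,i}\mid\delta_{m-1,m}]=E\delta_{m-1,i}+\bfSigma^{[m-1]}_{i,m}(\bfSigma^{[m-1]}_{m,m})^{-1}(\delta_{m-1,m}-E\delta_{m-1,m})$. Hence all $\delta$'s are jointly Gaussian with $\bfY$, and $E\delta_{m,i}=0$ for $m\ge1$.

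For part (a), I would write $\delta_{m,i}=\delta_{m-1,i}-\bfSigma^{[m-1]}_{i,m}(\bfSigma^{[m-1]}_{m,m})^{-1}(\delta_{m-1,m}-E\delta_{m-1,m})$. The cross-covariance $\text{cov}(\delta_{m,i},\delta_{m,j})$ is then a direct bilinear expansion, and it collapses to the stated Schur-complement recursion.

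The nondegeneracy needed to invert $\bfSigma^{[m-1]}_{m,m}$ is handled alongside this. The joint covariance of $(\delta_{m-1,m},\ldots,\delta_{m-1,k})$ is the Schur complement of the leading $(r_1+\cdots+r_{m-1})$ block in $\bfSigma$, by the standard iterated Schur complement (quotient) property. A Schur complement of a positive-definite matrix is positive definite, so every $\bfSigma^{[m-1]}_{m,m}$ is invertible.

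For part (b), I would prove by induction the stronger claim that $\delta_{0,1},\delta_{1,2},\ldots,\delta_{m-1,m}$ are mutually independent and jointly independent of the block $\{\delta_{m,i}:i>m\}$.
\begin{itemize}
\item The base case $m=1$ is the two-block fact.
\item For the inductive step, $\{\delta_{m,i}\}_{i>m}$ and $\delta_{m-1,m}$ are measurable functions of the block $\{\delta_{m-1,i}\}_{i\ge m}$, which by hypothesis is independent of the earlier $\delta$'s.
\item Within that block, the two-block fact gives $\delta_{m-1,m}$ independent of $\{\delta_{m,i}\}_{i>m}$.
\item Combining these two independence statements yields the claim at level $m$.
\end{itemize}

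For part (c), note that the map $\bfY\mapsto(\delta_{0,1},\delta_{1,2},\ldots,\delta_{k-1,k})$ is affine and block lower-triangular with identity diagonal blocks, since $\delta_{m-1,m}=\bfy_m-(\text{affine function of }\bfy_1,\ldots,\bfy_{m-1})$. Its Jacobian determinant is therefore $1$. The change-of-variables formula together with the independence in (b) gives the product factorization of $p(\bfY)$.

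The main obstacle is confirming the lower-triangular structure. This needs a secondary induction showing that $\delta_{m-1,i}$ depends only on $\bfy_1,\ldots,\bfy_{m-1},\bfy_i$, with coefficient identity on $\bfy_i$. The covariance bookkeeping is also delicate, in particular tracking that the conditioning at each stage uses the updated matrices $\bfSigma^{[m-1]}$.
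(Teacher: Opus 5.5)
Your proof is correct and uses the same iterated two-block regression (Schur-complement) decomposition as the paper, only with (b) and (c) in the opposite order. The paper proves (c) by repeatedly splitting $p(\delta_{m-1,m},\ldots,\delta_{m-1,k})=p(\delta_{m-1,m})\,p(\delta_{m,m+1},\ldots,\delta_{m,k})$ and reads (b) off from it, whereas you prove (b) by induction and get (c) from a unit-Jacobian change of variables; along the way you supply the invertibility of $\bfSigma^{[m-1]}_{m,m}$ and the Jacobian argument that the paper leaves implicit. (One harmless slip: your display for $\delta_{m,i}$ in part (a) drops the constant $-E\delta_{m-1,i}$, which equals $-\bfmu_i$, generally nonzero, when $m=1$; your earlier conditional-expectation formula has it right, and the covariance computation is unaffected.)
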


\begin{proof}
Note that $(c)$ implies $(b)$, so it suffices to prove $(a)$ and $(c)$. 
\begin{itemize}
\item[(a)] Since linear combinations of Gaussian random vectors are still Gaussian, it suffices to compute the mean and covariance matrix for $\delta_{m,i}$ with $m=1, \ldots, k-1$ and $i=m+1, \ldots, k$. Taking the expectation for $\delta_{m,i}$ immediately indicates that $E[\delta_{m,i}]=\bfzero$. The covariance expression for $\delta_{m,i}$ follows by direct calculation.
\item[(c)] Note that $\delta_{1, i} = \delta_{0,i} - E[\delta_{0,i} \mid \delta_{0,1}]$ for $i=2, \ldots, k$. Let 
$$ \delta_{1,2:k}:=\{\delta_{1,2}, \ldots, \delta_{1,k}\}.$$ Then it follows that 
\begin{align*}
p(\bfY) = p(\delta_{0,1}) \times p(\delta_{1, 2:k}). 
\end{align*}
Since $\delta_{2, i} = \delta_{1,i} - E[\delta_{1,i} \mid \delta_{1,2}]$ for $i=3, \ldots, k$. Let $\delta_{2, 3:k}=\{\delta_{2,3}, \ldots, \delta_{2, k}\}$. We have $p(\delta_{1,2:k}) = p(\delta_{1,2}) \times p(\delta_{2, 3:k})$. Thus, 
\begin{align*}
p(\bfY) = p(\delta_{0,1}) \times p(\delta_{1,2}) \times p(\delta_{2,3:k}).
\end{align*}
Repeatedly applying this procedure $k-2$ times yields the desired result. 
\end{itemize}
\end{proof} 

The key idea behind this novel decomposition is to successively regress one subset of data over the other subset of data, based on which, we can obtain residuals over the remaining data. These residuals can be treated as ``new" data.  We then again choose a subset of data from this new data to obtain a new set of residuals. Recursively applying this procedure provides a new way to compute any finite-dimensional high-dimensional Gaussian probability density function. Sometimes, simulating from multivariate Gaussian density is also important. The following lemma shows how to simulate from multivariate Gaussian density efficiently. 
\begin{corollary} \label{cor: simulation} 
Let $\delta_{m,m+1}$ be a independent realization from ${N}(\bfzero, \bfSigma_{m+1,m+1}^{[m]})$ for $m= 1, \ldots k-1$ and let $\delta_{0,1}$ be an realization from  ${N}(\bfmu_1, \bfSigma_{11})$ independent of $\{\delta_{m,m+1}: m=1, \ldots, k-1\}$. Define the bottom-up recursion formula: 
$$\delta_{m-1, i}=\delta_{m,i} + \bfSigma_{i,m}^{[m-1]} \biggl (\bfSigma_{m,m}^{[m-1]} \biggr)^{-1} \delta_{m-1,m}$$
for $m=k-1, \ldots, 2$ and $i=m+1, \ldots, k$. Then a realization from $\bfY=(\bfy_1^\top, \ldots, \bfy_k^\top)^\top$
can be obtained by setting $\bfy_1=\delta_{0,1}$ and $\bfy_i = \delta_{1, i} + \mu_i + \bfSigma_{i,1} \bfSigma^{-1}_{11} (\bfy_1-\mu_1)$ for $i=2, \ldots, k$.
\end{corollary}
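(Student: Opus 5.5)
The plan is to show that the bottom-up recursion inverts the top-down residual construction of Lemma~\ref{lem: OD for Gaussian rv}. Once that is established, the distributional claim follows from part (c) of that lemma: there the joint law of $\bfY$ is determined by the independent pieces $\delta_{0,1},\delta_{1,2},\ldots,\delta_{k-1,k}$ with the stated marginals.

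The first step fixes a convention. I will regard the residual blocks $\delta_{m,i}$ for $i>m+1$ as defined through the recursion, starting from the independent draws $\delta_{m,m+1}$. I then argue by downward induction on $m$ that the constructed collection $\{\delta_{m-1,i}\}_{i\geq m}$ has the same joint law as the corresponding residual vectors in Lemma~\ref{lem: OD for Gaussian rv}.
\begin{itemize}
\item At the top level, $m=k-1$, the only block is $\delta_{k-1,k}\sim N(\bfzero,\bfSigma^{[k-1]}_{k,k})$, which matches part (a).
\item For the inductive step, the identity $\delta_{m,i}=\delta_{m-1,i}-E[\delta_{m-1,i}\mid\delta_{m-1,m}]$ together with the Gaussian conditional-mean formula gives $E[\delta_{m-1,i}\mid\delta_{m-1,m}]=\bfSigma^{[m-1]}_{i,m}(\bfSigma^{[m-1]}_{m,m})^{-1}\delta_{m-1,m}$. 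So the recursion is exactly the inverse map.
\item For the laws to agree, the pair $(\delta_{m-1,m},\{\delta_{m,i}\}_{i>m})$ must consist of independent components with the correct covariances. This is the standard fact that a regression residual is independent of its regressor in the Gaussian case, as in the two-block discussion preceding the lemma.
\item Adding back the regression term then reconstructs a Gaussian vector with covariance $\bfSigma^{[m]}_{i,j}+\bfSigma^{[m-1]}_{i,m}(\bfSigma^{[m-1]}_{m,m})^{-1}\bfSigma^{[m-1]}_{m,j}=\bfSigma^{[m-1]}_{i,j}$, by the covariance recursion in part (a).
\end{itemize}

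The last step, $m=1$, is handled separately because the mean is nonzero. The recursion gives $\{\delta_{1,i}\}_{i\geq2}$, independent of $\bfy_1=\delta_{0,1}\sim N(\bfmu_1,\bfSigma_{11})$. Setting $\bfy_i=\bfmu_i+\bfSigma_{i1}\bfSigma_{11}^{-1}(\bfy_1-\bfmu_1)+\delta_{1,i}$ produces a Gaussian vector with:
\begin{itemize}
\item mean $\bfmu$;
\item $\text{cov}(\bfy_i,\bfy_1)=\bfSigma_{i1}$;
\item $\text{cov}(\bfy_i,\bfy_j)=\bfSigma^{[1]}_{ij}+\bfSigma_{i1}\bfSigma_{11}^{-1}\bfSigma_{1j}=\bfSigma_{ij}$.
\end{itemize}
Since a Gaussian law is determined by its mean and covariance, this finishes the argument. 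Equivalently, all blocks are linear in the independent inputs, so the result holds without appealing to (c) separately.

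The main obstacle is bookkeeping rather than substance. The statement's index range ($m=k-1,\ldots,2$ and the sampling indices $m=1,\ldots,k-1$) leaves implicit how the off-diagonal blocks $\delta_{m,i}$ for $i>m+1$ are obtained at the start of the recursion. I will make this precise by expanding $\{\delta_{m,i}\}_{i>m}$ as a linear function of the independent draws $\delta_{m,m+1},\ldots,\delta_{k-1,k}$. The covariance identity can then be checked blockwise through the Schur-complement recursion of part (a). Positive definiteness of $\bfSigma$ guarantees that every $\bfSigma^{[m-1]}_{m,m}$ is invertible, since each is a Schur complement.
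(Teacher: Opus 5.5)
Your proposal is correct and is essentially the paper's argument. The paper only asserts that the corollary ``follows immediately'' from Lemma~\ref{lem: OD for Gaussian rv}; your downward induction supplies exactly the omitted verification: the recursion inverts the residual map, regressor and residual are independent, and the covariance recursion of part~(a) restores $\bfSigma^{[m-1]}$, with the mean reinstated at the last step. One small indexing slip: your base case is the collection $\{\delta_{k-1,k}\}$, not the level ``$m=k-1$'' in your own notation.
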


This corollary follows immediately from Lemma~\ref{lem: OD for Gaussian rv}. While the above result is for finite-dimensional Gaussian random vector, as shown below, the orthogonal decomposition can also be generalized to Gaussian processes.

\subsection{Proof of Theorem~1} \label{supp: orthogonal decomposition}
\begin{proof}
We prove by induction.
First, when $k=1$, the result follows from Lemma~\ref{lem: OD for Gaussian rv}. Second, we assume that when $k=m$, the result holds true. That is, let $\{\calR^0, \ldots, \calR^{m-1}, \calA_{m}\}$ be a partition of $\calA^0$ in the sense of observation locations. We have 
\begin{align} \label{eqn: induction k=m}
p(y(\calA^0)) = p(f_0(\calR^0)) \times p(f_1(\calR^1)) \times \cdots \times p(f_{m-1}(\calR^{m-1})) \times p(\xi_{m}(\calA^m)).
\end{align}
Third, we need to show that the result holds true for $k=m+1$. 

By assumption, observations in $\{\calR^m, \calA^{m+1}\}$ is a partition of observations in $\calA^m$. By assumption, 
\begin{gather*}
f_m(\cdot) = E\{\xi_m(\cdot) \mid \xi_m(\calR^m) \}, \quad
\xi_{m+1}(\cdot) = \xi_m(\cdot) - f_m(\cdot).
\end{gather*}
By the property of spatial ``BLUP'' \citep[][Section 1.5]{Stein1999}, the process $f_m(\cdot)$ and the process $\xi_{m+1}(\cdot)$ are independent. As $\calA^m = \calR^m \cup \calA^{m+1}$, it follows immediately that 
\begin{align*}
p(\xi_m(\calA^m)) = p(f_m(\calR^m)) \times p(\xi_{m+1}(\calA^{m+1})). 
\end{align*}
According to Equation~\eqref{eqn: induction k=m}, we have 
\begin{align*}
p(y(\calA^0))& = p(f_0(\calR^0)) \times p(f_1(\calR^1)) \times \cdots \times p(f_{m-1}(\calR^{m-1})) \times p(\xi_{m}(\calA^m)) \\
			   & = p(f_0(\calR^0)) \times p(f_1(\calR^1)) \times \cdots \times p(f_{m-1}(\calR^{m-1})) \times p(f_m(\calR^m)) \times p(\xi_{m+1}(\calA^{m+1})).
\end{align*} 
\end{proof}

\subsection{Proof of Theorem~2} \label{supp: marginal likelihood}
\begin{proof}

 The proof of this uses the Bayes theorem for fixed tree. When the tree is given, if $\calA$ is a leaf node, the likelihood is given by $q_0(\calA)$. As the leaf model includes predictive process (PP) and white-noise models, $q_0(\calA)$ will be computed under these leaf models respectively. 
 
For the PP leaf model, let $\calR_\calA$ denote a set of knots and $C_{\calA}$ denote the covariance function at $\calA$. Then the covariance matrix of $\xi(\calA)$ over $\calA$ is 
\begin{align*}
\tilde{\bfC}_{\calA}:=C_{\calA}(\calA, \calR_\calA) C_{\calA}(\calR_{\calA}, \calR_{\calA})^{-1} C_{\calA}(\calA, \calR_\calA)^\top + V_{\calA}
\end{align*}
where $V_{\calA}$ is the diagonal matrix with diagonal entry $$V_{\calA}(\bfx, \bfx) := C_{\calA}(\bfx, \bfx) - C_\calA(\calR_\calA, \bfx)^\top C_{\calA}(\calR_\calA, \calR_\calA)^{-1} C_\calA(\calR_\calA, \bfx), \bfx \in \calA.$$ This is known as the modified predictive process construction \citep{Finley2009}. Then $q_0$ can be computed via
\begin{align*}
q_0(\calA) = (2\pi)^{-\frac{n_\calA}{2}} |\tilde{\bfC}_{\calA}|^{-1/2} \exp\left\{- \frac{\xi(\calA)^\top \tilde{\bfC}^{-1}_{\calA} \xi(\calA)}{2} \right\},
\end{align*}
where $n_{\calA}$ denotes the number of data points in $\calA$. 

For the WN leaf model, $\xi(\calA)$ is modeled as iid observations from $\xi_i(\calA) \overset{iid}{\sim} {N}(\mu_A, \kappa^2_\calA)$ independently at each node $\calA$. Integrating out parameters $(\mu_\calA, \kappa_\calA)$ with respect to the prior $\mu_{\calA} | \kappa^2_{\calA} \sim {N}(\mu_{0}, \kappa^2_\calA/\kappa_0), \kappa^2_{\calA} \sim IG(a_0, b_0)$ yields the marginal likelihood 
\begin{align*}
q_0(\calA) &= \int (2\pi)^{-n_{\calA}/2} \{\kappa_\calA^2\}^{-n_{\calA}/2} \exp\left\{- \frac{( \xi(\calA) - \mu_\calA)^\top (\xi(\calA) - \mu_\calA)}{2\kappa_\calA^2} \right\} \\ 
&\quad \times N(\mu_\calA| \mu_0, \kappa^2_\calA/\kappa_0) IG(\kappa^2_\calA| a_0, b_0) \mathrm{d} (\mu_\calA, \kappa_\calA^2) \\
&= (2\pi)^{-n_\calA/2} \left(\frac{\kappa_0}{\tilde{\kappa}}\right)^{1/2} \frac{b_0^{a_0}}{\tilde{b}^{\tilde{a}}} \frac{\Gamma(\tilde{a})}{\Gamma(a_0)}
\end{align*}
where 
\begin{gather*}
\tilde{a}:=a_0 + n_\calA/2, \tilde{\kappa}:=\kappa_0 + n_\calA \\
\tilde{\mu}:= \tilde{\kappa}^{-1} \{\kappa_0 \mu_0 + n_\calA \bar{\xi}(\calA)\} \\
\tilde{b}:= b_0 + \frac{1}{2} \sum_{i=1}^{n_\calA} (\xi_{i,\calA} - \bar{\xi}(\calA))^2 + \frac{\kappa_0 n_\calA}{2\tilde{\kappa}} (\bar{\xi}(\calA) - \mu_0)^2
\end{gather*}
with default values $(\mu_0, \kappa_0, a_0, b_0)=(0,1,2,0.01)$. 

When $\calA$ is not leaf, it is then cut into two child nodes $\calA_\lchild^{J, L}, \calA_\rchild^{J,L}$ according to $J$ and $L$. The likelihood under $S(\calA)=0$ becomes $q_\calA(\calR_\calA) q(\calA_\lchild^{J,L}) q(\calA_\rchild^{J,L})$, where $q_{\calA}(\calR_\calA)$ denotes the likelihood over knots $\calR_{\calA}$. 

Combining these two situations yields 
\begin{align*}
q(\calA) = S(\calA) q_0(\calA) + (1-S(\calA)) q_{\calA}(\calR_\calA) q(\calA_\lchild^{J,L}) q(\calA_\rchild^{J,L}). 
\end{align*}
Then integrating out $S, J, L$ with respect to their prior yields the integrated likelihood 
\begin{align*}
\Phi(\calA) &= \int q(\calA) \mathbb{P}(\mathrm{d} \calT) \\
                  &=\rho(\calA) q_0(\calA) +(1-\rho(\calA)) \biggr\{ \sum_{j=1}^d \lambda_j q_{\calA}(\calR_{\calA}) 
\times \int \Phi(\calA^{j,c}_{\lchild}) \cdot \Phi(\calA^{j,c}_{\rchild}) F_\calA(\mathrm{d} c | j)  \biggr\},
\end{align*}
as desired. 

Once this is computed, applying Bayes theorem can yield the posterior probabilities for tree decisions $(S, J, L)$ at each node: 
\begin{align*}
\begin{split}
    \tilde{\rho}(\calA)&:=P(S(\calA)=1) = \frac{\rho(\calA) q_0(\calA)}{\Phi(\calA)}, \\
    \tilde{\lambda}_{j}(\calA) &:=P(J(\calA)=j| S(\calA)=0) \\
    & \propto \lambda_j(\calA) q_{\calA}(\calR_{\calA})\times \int \Phi(\calA_{\lchild}^{j,c}) \Phi(\calA_{\rchild}^{j,c}) F_{\calA}(\mathrm{d} c | j),  j=1, \ldots, d, \\
    \tilde{F}_{\calA}(\mathrm{d} c \mid j) &:= P(L_\calA=\mathrm{d} c| S(\calA)=0, J(\calA)=j) = \frac{\Phi(\calA_{\lchild}^{j,c} ) \Phi(\calA_{\rchild}^{j,c} )}{\int \Phi(\calA_{\lchild}^{j,u} ) \Phi(\calA_{\rchild}^{j,u} ) F_{\calA}(\mathrm{d} u | j)} \cdot F_{\calA}(\mathrm{d} c | j)
\end{split}
\end{align*}

\end{proof}

\subsection{Proof of Theorem~3} \label{supp: residual predictive formula}
\begin{proof}

Let $\calB(\bfx_0)$ be the child of $\calA$ containing $\bfx_0$. The arrival residual satisfies 
\begin{align*}
\xi_\calA(\bfx_0) = y(\bfx_0) - \sum_{\calB \in \downarrow\!\calA} h_{\calB}(\bfx_0).
\end{align*}

Denote by 
\begin{align*}
q(z|\calA) = p_{\xi_{\calA}(\bfx_0)}(z \mid \bfy, \calT, \bftheta, v), \quad z\in \mathbb{R}
\end{align*}
the conditional density of the arrival residual at the input location $\bfx_0$ evaluated at $z$. 

If $\calA\in \calL$, $q(z|\calA) = q_0(z | \calA)$; otherwise $\xi_{\calA}(\bfx_0) = h_\calA(\bfx_0) + \xi_\calB(\bfx_0)$. Since $\bfy, \calT, \bftheta, v$ are given, $h_{\calA}(\bfx_0)$ is fixed and hence $q(z|\calA) = q(z - h_\calA(\bfx_0) \mid \calB)$. It follows that 
\begin{align} \label{eqn: predictive mean at node A} 
\phi(\bfx_0; \calA) &=  E\{h_\calA(\bfx_0) + \xi_\calB(\bfx_0)\} = h_\calA(\bfx_0) + \phi(\bfx_0; \calB), \\ \label{eqn: predictive variance at node A} 
V(\bfx_0; \calA) &= \text{Var}\{h_\calA(\bfx_0) + \xi_\calB(\bfx_0)\} = V(\bfx_0; \calB).
\end{align}
Since the above formula is a recursion from root to leaf nodes that contain $\bfx_0$, we have 
\begin{align*}
y(\bfx_0) =  h_{\calT}(\bfx_0) + \xi_{\calA_*}(\bfx_0)
\end{align*}

\noindent \textbf{\textit{PP leaf model}}. $\xi_{\calA}(\bfx_0) \sim N(\phi_0(\bfx_0; \calA), V_0(\bfx_0; \calA))$, where
\begin{gather} \label{eqn: pp mean}  
\phi_0(\bfx_0; \calA) =  h_\calA(\bfx_0) \\ \label{eqn: pp variance} 
V_0(\bfx_0; \calA) = C_{\calA}(\bfx_0, \bfx_0)  - C_{\calA}(\calR_\calA, \bfx_0)^\top {C}_{\calA}^{-1}(\calR_\calA, \calR_\calA) C_{\calA}(\calR_\calA, \bfx_0). 
\end{gather}
Thus, combining Equations~\eqref{eqn: predictive mean at node A}--\eqref{eqn: pp variance} yields the desired result under PP leaf model. 

\noindent \textbf{\textit{WN leaf model}}. $\xi_{\calA}(\bfx_0) \sim N(\mu_\calA, \kappa^2_\calA )$ with prior
\begin{align*}
\mu_\calA | \kappa^2_\calA \sim N(\mu_0, \kappa_\calA^2/\kappa_0) \\
\kappa_\calA^2 \sim IG(a_0, b_0). 
\end{align*}

Using the notations from Theorem 2, standard calculation yields that the posterior predictive distribution of $\xi_{\calA}(\bfx_0)$ is a Student-$t$ distribution with degrees of freedom $2\tilde{a}$ whose mean and variances 
\begin{gather*}
\phi_0(\bfx_0; \calA) = \tilde{\mu} = \tilde{\kappa}^{-1} \{\kappa_0 \mu_0 + n_\calA \bar{\xi}(\calA)\} \\
V_0(\bfx_0; \calA) = \frac{\tilde{b} ( 1 + \tilde{\kappa}^{-1})}{\tilde{a} - 1}. 
\end{gather*}
Adding $h_{\calT}(\bfx_0)$ yields the desired result. 

\end{proof}

%%%%%%%%%%%%%%%%%%%%%%%%%%%%%%%%%%%%%%%%%%%%%%%%%%%%%%%%
%%%%%%%%%%%%%%%%%%%%%%%%%%%%%%%%%%%%%%%%%%%%%%%%%%%%%%%%
%%%%%%%%%%%%%%%%%%%%%%%%%%%%%%%%%%%%%%%%%%%%%%%%%%%%%%%%

\section{Theoretical Results on Posterior Consistency} \label{supp: posterior consistency}

\subsection{Model properties}  \label{supp: ancillary results}
This section presents the model bounds and continuity used to verify C1-C3. 

\begin{prop} \label{prop: residual process bound} 

For a tree $\calT$ with  $K<\infty$  leaves in the residual tree GP prior $g(\cdot)$, write $r_{\calA} :=|\calR_\calA|, \tilde{r} = r+1$. Under Assumptions 3-4, the following statements hold.
\begin{itemize}
\item[(i)] Conditional on $(\calT, \bftheta)$, the random function $g$ has the equivalent representation 
\begin{align}\label{eqn: model form} 
g(\bfx) = \mu_0 + b_{\calT, \bftheta}^\top Z, \quad Z \sim N_{q}(\mathbf{0}, \mathbf{I}_{q}),
\end{align}
where $b_{\calT,\bftheta}: \calD \mapsto \mathbb{R}^{q}$ is some deterministic function and 
\begin{align*}
q = \sum_{\calA \in \calI} r_\calA + K \leq r(K-1) + K \leq \tilde{r} K.
\end{align*}
\item[(ii)] For fixed $(\calT, \bftheta)$ and fixed vectors $\bfz \in \mathbb{R}^{q}$, define $g_{\calT, \bftheta, z}(\bfx) = \mu_0 + b_{\calT, \bftheta}(\bfx)^\top \bfz$. Write $B:=(\sigma_+^2 + \kappa_+^2(1+\kappa_0^{-1}))^{1/2}$. Then 
\begin{align*}
 \sup_{\calT, \bftheta, \bfx} \| b_{\calT, \bftheta}(\bfx)\|_2 \leq B, \quad 
\| g_{T,\bftheta, \bfz} - g_{T, \bftheta, \bfz'}\|_{\infty} \leq B \| \bfz - \bfz'\|_2.
\end{align*}
\end{itemize}

\end{prop}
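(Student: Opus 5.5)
Given $(\calT,\bftheta)$, I will write $g$ as an explicit Gaussian linear map of independent standard normals, one block per internal node and one scalar per leaf. Everything is built top-down from the latent $f$ with covariance $\calM_\nu$, so $\xi$ at the root is $f$, with variance at most $\sigma_+^2$.

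At an internal node $\calA$ the predictive process is
\[
f_\calA(\bfx)=C_\calA(\calR_\calA,\bfx)^\top C_\calA(\calR_\calA,\calR_\calA)^{-1}\xi_\calA(\calR_\calA).
\]
By Theorem~\ref{thm: orthogonal decomposition} and Lemma~\ref{lem: OD for Gaussian rv}, applied along the ancestor path, $\xi_\calA(\calR_\calA)$ is independent of the knot values at all ancestors. Its covariance $C_\calA(\calR_\calA,\calR_\calA)$ is positive definite because knots are distinct, avoid ancestor knots (Assumption~\ref{ass: knots}), and Mat\'ern is strictly positive definite. So with $L_\calA L_\calA^\top=C_\calA(\calR_\calA,\calR_\calA)$ I can write $\xi_\calA(\calR_\calA)=L_\calA Z_\calA$ with $Z_\calA\sim N_{r_\calA}(0,I)$, and $f_\calA(\bfx)=b_\calA(\bfx)^\top Z_\calA$ with $b_\calA(\bfx)=L_\calA^{-1}C_\calA(\calR_\calA,\bfx)$. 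This matches the feature-stack notation in \eqref{eqn: feature stack}. Sibling independence in the residual-process definition makes the blocks for disjoint subtrees independent, so all $Z_\calA$ are mutually independent.

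At a leaf, $\delta_\calA=\mu_\calA+(\delta_\calA-\mu_\calA)$. Marginally over $\mu_\calA\mid\kappa^2_\calA$ and given $\kappa^2_\calA$, $\delta_\calA-\mu_0\sim N(0,\kappa_\calA^2(1+\kappa_0^{-1}))$, so I write $\delta_\calA=\mu_0+\kappa_\calA(1+\kappa_0^{-1})^{1/2}Z_\calA'$. Here $\kappa_\calA$ is treated as part of the conditioning; I will fold it into $\bftheta$ or note that the bound only uses $\kappa_\calA^2<\kappa_+^2$. Stacking gives \eqref{eqn: model form} with $b_{\calT,\bftheta}(\bfx)$ the concatenation of $b_\calA(\bfx)\mathds 1(\bfx\in\calA)$ over internal nodes and $\kappa_\calA(1+\kappa_0^{-1})^{1/2}\mathds 1(\bfx\in\calA)$ over leaves. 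The constant $\mu_0$ appears once because exactly one leaf contains $\bfx$. The count is $q=\sum_{\calI}r_\calA+K\le r|\calI|+K=r(K-1)+K\le\tilde rK$, since a binary tree with $K$ leaves has $K-1$ internal nodes.

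For (ii), the key identity is $\|b_\calA(\bfx)\|_2^2=C_\calA^*(\bfx,\bfx)$. Hence along the ancestor path of $\bfx$,
\[
\sum_{\calA\ni\bfx,\calA\in\calI}\|b_\calA(\bfx)\|_2^2=C(\bfx,\bfx)-C_{\calA_*}(\bfx,\bfx)\le \sigma^2\le\sigma_+^2,
\]
by telescoping the residual-covariance recursion $C_\calB=C_\calA-C_\calA^*$ (the first line of \eqref{eqn: feature stack}) and using $C_{\calA_*}(\bfx,\bfx)\ge0$. Only the nodes containing $\bfx$ contribute, since the other coordinates vanish. The single leaf term contributes at most $\kappa_+^2(1+\kappa_0^{-1})$. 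Summing gives $\|b_{\calT,\bftheta}(\bfx)\|_2\le B$ uniformly in $\calT,\bftheta,\bfx$. The Lipschitz bound then follows from Cauchy--Schwarz: $|g_{\calT,\bftheta,\bfz}(\bfx)-g_{\calT,\bftheta,\bfz'}(\bfx)|=|b_{\calT,\bftheta}(\bfx)^\top(\bfz-\bfz')|\le B\|\bfz-\bfz'\|_2$.

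The main obstacle is making the independence of the knot blocks across levels and subtrees rigorous. This needs the sibling-independence convention from the residual-process definition together with the orthogonal decomposition. I also need to handle correctly whether the variance bound is per-root process ($\sigma^2$, excluding the nugget $v$, since $g$ is the latent regression function) or includes $v$.
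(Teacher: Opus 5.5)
Your proposal is correct and follows the paper's proof essentially step for step. You Cholesky-whiten the knot values at each internal node to get $b_\calA(\bfx)=L_\calA^{-1}C_\calA(\calR_\calA,\bfx)\mathds{1}_\calA(\bfx)$, and you standardize each leaf constant as $\mu_0+\kappa_\calA(1+\kappa_0^{-1})^{1/2}Z_\calA$. You then stack the blocks to count $q$, bound $\|b_{\calT,\bftheta}(\bfx)\|_2^2$ by $\sum_{\calA\in\calI,\,\calA\ni\bfx}C^*_\calA(\bfx,\bfx)+\kappa_+^2(1+\kappa_0^{-1})\le B^2$, and finish with Cauchy--Schwarz. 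Your extra justifications fill in details the paper leaves implicit without changing the route: positive definiteness of $C_\calA(\calR_\calA,\calR_\calA)$ via Assumption~3, the telescoping identity, mutual independence of the blocks, and treating $\kappa_\calA$ as part of the conditioning.
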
 

\begin{proof}
\begin{itemize}
\item[(i)]
Write $L(\calR_\calA)$ as the Cholesky factor of $C_{\calA}(\calR_\calA, \calR_\calA)$ such that $L(\calR_\calA) L(\calR_\calA)^\top = C_{\calA}(\calR_\calA, \calR_\calA)$. 

For internal node $\calA$,  let $Z_{\calA} \sim {N}_{r_A}(\mathbf{0}, \mathbf{I})$. Then the predictive process $f_\calA$ can be written as $f_{\calA}(\bfx) = C_{\calA}(\calR_\calA, \bfx)^\top L_{\calR_\calA}^{-\top} Z_{\calA} = b_\calA(\bfx)^\top Z_\calA$, where $b_\calA(\bfx) := L(\calR_\calA)^{-1}C_{\calA}(\calR_\calA, \bfx) \mathds{1}_\calA(\bfx)$ for $\calA \in \calI$; For leaf node $\calA$,  let $Z_{\calA} \sim N(0,1)$. Then $\delta_{\calA} = \mu_0  + \kappa_\calA\sqrt{1+\kappa_0^{-1}} Z_{\calA} = \mu_0 + b_\calA(\bfx) Z_\calA$, where $b_\calA(\bfx) := \kappa_\calA\sqrt{1+\kappa_0^{-1}} \mathds{1}_\calA(\bfx)$ for $\calA \in \calL$.

Let $\calA_1, \ldots, \calA_J$ denote all the internal nodes of the tree $\calT$ and $\calB_1, \ldots, \calB_K$ denote all the leaf nodes of the tree.  Define 
\begin{gather*}
b_{\calT, \bftheta}(\bfx) := [b_{\calA_1}(\bfx), \ldots, b_{\calA_J}(\bfx), \kappa_{\calB_1}\sqrt{1+\kappa_0^{-1}} \mathds{1}_{\calB_1}(\bfx), \ldots, \kappa_{\calB_K}\sqrt{1+\kappa_0^{-1}} \mathds{1}_{\calB_K}(\bfx)]^\top, \\
Z := (Z_{\calA_1}, \ldots, Z_{\calA_J}, Z_{\calB_1}, \ldots, Z_{\calB_K})^\top.
\end{gather*}

Then the residual tree GP prior $g$ can be equivalently represented in the following form:
\begin{align*}
g(\bfx): = \mu_0 + b_{\calT,\bftheta}(\bfx)^\top Z, \quad Z \sim \mathcal{N}_{q}(\mathbf{0}, \mathbf{I}), 
\end{align*}
where $q = \sum_{\calA \in \calI} r_A + K \leq (r+1) K.$
\item[(ii)]
It is immediate that 
$$\|b_{\calT, \bftheta} (\bfx) \|_2^2 = \sum_{\substack{\calA \in \calI  \\ \bfx \in \calA} } C^*_{\calA}(\bfx, \bfx) + \kappa^2_{\calB} (1+\kappa_0^{-1}) \leq C_0(\bfx, \bfx) + \kappa_\calB^2(1+\kappa_0^{-1}) \leq B^2$$
where $\calB$ denotes the leaf node containing $\bfx$. 

For every deterministic $\bfz,\bfz' \in \mathbb{R}^q$, $g_{\calT,\bftheta}$ is a realized quantity for fixed $\calT, \bftheta$.  Then for every $\bfx\in \scrD$, it follows from Cauchy-Schwarz inequality that 
\begin{align*}
| b_{\calT,\bftheta}(\bfx)^\top  \bfz - b_{\calT, \bftheta}(\bfx)^\top \bfz'|^2 &= \left\| \sum_{j=1}^q b_j(\bfx) (z_j -z_j') \right\|^2 \\
&\leq\left( \sum_{j=1}^q b_j(\bfx)^2 \right)  \left( \sum_{j=1}^q (z_j - z_j')^2 \right) \\
&\leq B^2 \| \bfz - \bfz'\|_2^2
\end{align*}
where $b_j(\bfx), z_j, z_j'$ are components of $b_{\calT,\bftheta}, \bfz, \bfz'$. Taking square root and then the supremum over $\bfx$ yields 
\begin{align*}% \label{eqn: continuity of g}
\| g_{\calT,\bftheta, \bfz} - g_{\calT,\bftheta, \bfz'} \|_{\infty} \leq B \| \bfz - \bfz'\|_2.
\end{align*}
\end{itemize}

\end{proof} 

\begin{prop} \label{prop: tree size} 
Under the tree prior $\calT\sim \pi$, the tree is finite almost surely, and there exists constants $C_{\mathcal T},c_{\mathcal T}>0$ depending only on $\rho_0,\beta_0$ such that
\[
\mathbb P(K>k)\leq C_{\mathcal T}e^{-c_{\mathcal T}k},\qquad k\geq0.
\]
\end{prop}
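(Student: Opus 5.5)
The plan is to dominate the random dyadic tree by a branching process that becomes subcritical below a fixed depth. The key tool is the Łukasiewicz (random-walk) encoding of a binary Galton–Watson tree, which turns the leaf count into a random-walk hitting time with an exponential tail. Since every internal node has exactly two children, $K = 1 + |\calI|$, so it suffices to bound the total number of nodes $N_{\calT} = 2K-1$.

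\textbf{Step 1 (subcritical depth).} A node at depth $t$ splits with probability $p_t := 1-\rho(\calA) = \rho_0 (1+t)^{-\beta_0}$. This is nonincreasing in $t$ and tends to $0$ because $\beta_0>0$ (Assumption~\ref{ass: tree-prior}). Fix $t_0 = t_0(\rho_0,\beta_0)$ with $p_t \le 1/4$ for all $t \ge t_0$. The number of nodes at depth $\le t_0$ is at most $2^{t_0+1}$ deterministically, and at most $m_0 := 2^{t_0}$ nodes sit at depth $t_0$. Any additional stopping imposed by the data, such as nodes with fewer than $r$ observations, only removes splits and can only decrease $K$. So it suffices to treat the pure prior.

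\textbf{Step 2 (coupling).} Attach to each potential node an independent $U_\calA \sim \mathrm{Unif}(0,1)$ and declare $\calA$ split iff $U_\calA \le p_{\mathrm{depth}(\calA)}$. The choice of $(J,L)$ does not affect the count, so it can be ignored. Using the same uniforms, the subtree rooted at each depth-$t_0$ node is contained in a binary Galton–Watson tree whose nodes split iff $U_\calA \le 1/4$, since $p_t \le 1/4$ there. These $m_0$ dominating trees $G_1,\dots,G_{m_0}$ are i.i.d. Hence
\[
N_\calT \le 2^{t_0+1} + \sum_{i=1}^{m_0} |G_i| .
\]

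\textbf{Step 3 (exponential moment; the main obstacle).} I need $E e^{s|G_1|} < \infty$ for some $s>0$. Use the Łukasiewicz path. Explore $G_1$ in depth-first order and let $S_0 = 0$, $S_j = S_{j-1} + (\xi_j - 1)$, where $\xi_j \in \{0,2\}$ is the number of children of the $j$th explored node. Then $|G_1| = \inf\{j : S_j = -1\}$. The increments are i.i.d. $\pm 1$ with $P(+1) = 1/4$, so their mean is $-1/2$. Therefore $\{|G_1| > j\} \subset \{S_j \ge 0\}$, and Hoeffding's inequality gives
\[
P(|G_1| > j) \le P(S_j - E S_j \ge j/2) \le e^{-j/8}.
\]
In particular $|G_1| < \infty$ almost surely, and $E e^{s|G_1|} < \infty$ for every $s < 1/8$.

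\textbf{Step 4 (conclusion).} Finiteness of $\calT$ almost surely follows from Steps 2–3. For the tail, apply Chernoff with $s = 1/16$:
\[
P(K > k) \le P\Bigl(\sum_i |G_i| > 2k - 1 - 2^{t_0+1}\Bigr) \le e^{-s(2k - 1 - 2^{t_0+1})} \bigl(E e^{s|G_1|}\bigr)^{m_0}.
\]
This is of the form $C_\calT e^{-c_\calT k}$ with $c_\calT = 1/8$ and $C_\calT$ depending only on $t_0$, hence only on $(\rho_0, \beta_0)$. For small $k$ the bound holds trivially once $C_\calT \ge 1$.
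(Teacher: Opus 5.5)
Your proof is correct, but it takes a different route from the paper's. The paper builds no dominating process. It orders the split indicators $B_i$ breadth-first and uses the queue identity $\{N>u\}\subset\{2\sum_{i\le u}B_i\ge u\}$. It then bounds $\mathbb E\, e^{2c\sum_{i\le u}B_i}$ by iterated conditioning. This needs only the deterministic envelope $\mathbb P(B_i=1\mid\mathcal F_{i-1})\le\bar p_i$, with $\bar p_i=1$ for the first $2^{m_0}-1$ indices and $\bar p_i=1/4$ afterwards. Taking $c=(\log 2)/2$ gives $c_{\mathcal T}=(\log 2)/2-1/4$.

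You instead isolate the non-subcritical part in the deterministic top $t_0$ levels. You dominate each subtree below by an i.i.d.\ subcritical binary Galton--Watson tree through a monotone uniform coupling. You then control the total progeny by the \L{}ukasiewicz walk and Hoeffding, and finish with a Chernoff bound on an i.i.d.\ sum.

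The paper's argument is shorter and self-contained: it needs no coupling, no independence between subtrees, and no encoding bijection. In your version the bijection tacitly requires padding the depth-first offspring sequence with fresh i.i.d.\ variables after extinction, so that $S_j$ is a genuine random walk; this is standard but worth a sentence. Your approach is more structural and makes the subcritical branching mechanism explicit. It also gives a slightly larger rate, $c_{\mathcal T}=1/8$, though the value of the rate is irrelevant for the sieve bound in C2. Your remark about data-imposed stopping is unnecessary, since $\pi$ depends only on depth, but harmless.
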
 

\begin{proof}
Under the Bayesian CART tree prior, write $s_m=\rho_0(1+m)^{-\beta_0}$ with $m$ denoting the depth of the tree. Choose $m_0$ so that $s_m\leq1/4$ for $m\geq m_0$. In breadth-first order let $B_i$ be the split indicator, set to zero after extinction, and $\mathcal F_{i-1}$ the preceding history. At most $2^{m_0}-1$ nodes have depth below $m_0$, so
\[
\mathbb P(B_i=1\mid\mathcal F_{i-1})\leq\bar p_i
:=\mathds{1}_{\{i<2^{m_0}\}}+\tfrac14\mathds{1}_{\{i\geq2^{m_0}\}},\qquad
\mathbb E(e^{2cB_i}\mid\mathcal F_{i-1})\leq e^{(e^{2c}-1)\bar p_i}.
\]
For the total node count $N$, the queue identity and the conditional Chernoff argument of \citet[Lemma 5.1]{Rockova2019} give, for integer $u\geq0$,
\[
\begin{aligned}
\{N>u\}&\subset\{2\textstyle\sum_{i=1}^u B_i\geq u\},\\
\mathbb P(N>u)&\leq e^{-cu}\mathbb E e^{2c\sum_{i=1}^uB_i}
\leq\exp\{-cu+(e^{2c}-1)(2^{m_0}-1+u/4)\}.
\end{aligned}
\]
Taking $c=(\log2)/2$ and $c_{\mathcal T}=(\log2)/2-1/4>0$ yields $\mathbb P(N>u)\leq e^{2^{m_0}-1-c_{\mathcal T}u}$. Hence $N<\infty$ almost surely, $N=2K-1$, and
\[
\mathbb P(K>k)\leq\mathbb P(N>\lfloor k\rfloor)
\leq e^{2^{m_0}-1+c_{\mathcal T}}e^{-c_{\mathcal T}k},\qquad k\geq0,
\]
as desired. 
\end{proof}

\begin{prop} \label{prop: covariance continuity} 
For any fixed $\calT, \bftheta'$ in Assumption~4, 
\begin{equation}\label{eqn: cov lipschitz}
\sup_\bfx\|b_{\mathcal T,\bftheta'}(\bfx)-b_{\mathcal T,\bftheta}(\bfx)\|_2
\leq C\sqrt{K(\mathcal T)}\,\|\bftheta'-\bftheta\|_2, 
\end{equation}
for any fixed $\bftheta'$ in the parameter space. Here $C$ depends only on $d,\nu,r,\sigma_+$
\end{prop}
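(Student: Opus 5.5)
The plan is to exploit that only the internal-node features depend on $\bftheta$. In the representation of Proposition~\ref{prop: residual process bound}(i), the leaf components $\kappa_{\calB}\sqrt{1+\kappa_0^{-1}}\mathds{1}_{\calB}(\bfx)$ do not involve $\bftheta$, so they cancel in $b_{\calT,\bftheta'}-b_{\calT,\bftheta}$. For fixed $\bfx$, the internal components $b_\calA(\bfx)=L(\calR_\calA)^{-1}C_\calA(\calR_\calA,\bfx)\mathds{1}_\calA(\bfx)$ vanish unless $\calA\in{\downarrow\!\calA_*(\bfx)}$, i.e.\ unless $\calA$ is an ancestor of the leaf containing $\bfx$. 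There are at most $K-1$ such nodes. It therefore suffices to prove a \emph{per-node} bound
\[
\sup_{\bfx}\|b_{\calA,\bftheta'}(\bfx)-b_{\calA,\bftheta}(\bfx)\|_2\le C\|\bftheta'-\bftheta\|_2,
\]
uniformly in $\calA$, $\calT$ and the depth of $\calA$. Summing squares over the at most $K$ nonzero blocks then gives the factor $\sqrt{K(\calT)}$ in \eqref{eqn: cov lipschitz}.

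For the per-node bound I would use the mean value theorem along the segment $\bftheta_s=\bftheta+s(\bftheta'-\bftheta)$, $s\in[0,1]$. This reduces the claim to $\sup_{s,\bfx}\|\partial_{\theta_j}b_{\calA,\bftheta_s}(\bfx)\|_2\le C$ for each $j=0,\dots,d$. The first ingredient is a uniform bound on the derivatives of the base kernel. Write $\calM_\nu(\bfh)=\sigma^2\varphi_\nu(h_0)$. Then $\partial_{\theta_0}\calM_\nu=\calM_\nu$, which is bounded by $\sigma_+^2$. Also $\partial_{\theta_j}\calM_\nu=\sigma^2\varphi_\nu'(h_0)\,\partial_{\theta_j}h_0$ with $|\partial_{\theta_j}h_0|=(h_j^2/\ell_j^2)/h_0\le h_0$. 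Since $\sup_{t\ge0}|t\varphi_\nu'(t)|<\infty$ from the Bessel asymptotics at $0$ and $\infty$, every first derivative of the kernel in $\bftheta$ is bounded by a constant depending only on $\nu,\sigma_+$, uniformly over all ranges. Next I would propagate this through the recursion $C_\calA=C-\sum_{u\in\downarrow\calA}b_u b_u^\top$ of \eqref{eqn: feature stack}. The cleanest way is the RKHS view: the stacked vector $(b_u(\bfx))_{u\in\downarrow\calA_*(\bfx)}$ gives the coordinates of the kernel section $C(\cdot,\bfx)$ in the Gram--Schmidt orthonormalisation of the ancestor knot sections, ordered from root to leaf. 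Hence $\sum_u\|b_u(\bfx)\|^2\le C(\bfx,\bfx)\le\sigma_+^2$, and each $\partial b_u$ can be written through derivatives of $C$ alone.

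The main obstacle is that the derivative of a Cholesky-whitened feature,
\[
\partial b_\calA=L^{-1}\partial C_\calA(\calR_\calA,\bfx)-\Phi\!\left(L^{-1}\partial C_\calA(\calR_\calA,\calR_\calA)L^{-\top}\right)b_\calA,
\]
with $\Phi$ the lower-triangular half, contains $L^{-1}$. That factor is not uniformly bounded as the ranges vary over $\mathbb R^d$ or as knots coalesce. My first attempt would avoid raw $L^{-1}$ by bounding the derivative of the \emph{projection} instead. Derivatives of orthogonal projections in the RKHS are controlled by the derivative of the kernel, measured in the RKHS norm, composed with the projection. Each term would then be an inner product of a unit-norm orthonormalised section with a kernel-derivative section, which I expect to be bounded by the uniform bounds of the second paragraph together with $|\calR_\calA|\le r$; this is how the dependence on $d,\nu,r,\sigma_+$ would arise. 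If that fails, the fallback is to restrict to the sieve $\max_j|\theta_j|\le n$ used in Condition C2. There, the distinct interior knots of Assumption~\ref{ass: knots} and strict positive definiteness of the Matérn kernel give a lower bound on $\lambda_{\min}(C_\calA(\calR_\calA,\calR_\calA))$ on compacts. This would yield a Lipschitz constant that may depend on the sieve, which is still enough for the covering argument in C3.
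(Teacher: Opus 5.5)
\paragraph{Verdict: genuine gap.} Your reduction is sound and is essentially the paper's. The leaf blocks are $\bftheta$-free, only ancestors of the leaf containing $\bfx$ contribute, and you integrate the $\bftheta$-derivative along the segment. You also identify the stacked features with the Gram--Schmidt (global Cholesky) coordinates of $C(\cdot,\bfx)$ against the ancestor knots $U$. This gives $\dot b=L^{-1}\dot c_{\bfx}-Hb$, with $H$ the lower-triangular half of $M:=L^{-1}\dot{\mathbf K}L^{-\top}$.

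The gap is in how you propose to absorb the $L^{-1}$ factors. The uniform sup-norm bounds on $\partial_{\theta_j}\calM_\nu$ from the Bessel asymptotics cannot do it. For a unit vector $u$, set $\alpha=L^{-\top}u$. Then $u^\top L^{-1}\dot c_{\bfx}=\alpha^\top\dot c_{\bfx}$, which is exactly your RKHS inner product of a unit orthonormalised section with a derivative section. Sup bounds only give $|\alpha^\top\dot c_{\bfx}|\le\sqrt{k}\,\|\alpha\|_2\sup|\dot c_{\bfx}|$. But $\|L^{-\top}u\|_2$ can be as large as $\lambda_{\min}(\mathbf K)^{-1/2}$, which is unbounded in the ranges and in the knot configuration. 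The same problem affects $\|M\|_{\mathrm{op}}$. So the step ``which I expect to be bounded by the uniform bounds of the second paragraph'' fails. What your RKHS heuristic really needs is that $\partial_{\theta_j}C(\cdot,\bfx)$ lies in the RKHS of $C_{\bftheta}$ with norm at most $C\sigma$, and that is not a sup-norm fact.

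\paragraph{The missing idea.} You need a relative bound: the derivative kernel must be dominated by the kernel itself in the quadratic-form sense. The paper gets this from the spectral density. Uniformly in $(\omega,\bftheta)$, $|\partial_{\theta_j}\log s_{\bftheta}(\omega)|\le 1+2\nu+d$, so $|\dot s_{\bftheta}|\le Cs_{\bftheta}$. Cauchy--Schwarz in the spectral domain then gives, for every finite point set,
\[
|\bfa_1^\top\dot{\mathbf K}\bfa_2|\le C(\bfa_1^\top\mathbf K\bfa_1)^{1/2}(\bfa_2^\top\mathbf K\bfa_2)^{1/2}.
\]
This yields $\|M\|_{\mathrm{op}}\le C$. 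Applied to the augmented (possibly singular) covariance on $U\cup\{\bfx\}$, it also yields $\|L^{-1}\dot c_{\bfx}\|_2\le C\sigma$, however ill-conditioned $\mathbf K$ is.

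With this bound your per-node route goes through. Each $\calA$-row-block of $H$ has Frobenius norm at most $\sqrt r\,\|M\|_{\mathrm{op}}$, and $\|b(\bfx)\|_2\le\sigma$. Hence $\|\partial_{\theta_j}b_\calA(\bfx)\|_2\le C(1+\sqrt r)\sigma_+$ uniformly in depth, and summing over at most $K-1$ ancestors gives the factor $\sqrt K$. The paper instead uses $\|H\|_{\mathrm{op}}\le\|H\|_F\le C\sqrt k$ with $k\le r(K-1)$, which is the same order.

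\paragraph{The fallback does not prove the proposition.} A $\lambda_{\min}$-based constant depends on the tree and knots, since Assumption~\ref{ass: knots} imposes no separation, and on $n$. The proposition asserts a constant depending only on $d,\nu,r,\sigma_+$. Moreover, with $|\theta_j|\le n$ the ranges reach $e^{n}$, where $\lambda_{\min}$ of the knot matrix decays exponentially. The resulting Lipschitz bound would then make the covariance grid in the C3 entropy estimate contribute a term of order $n$ that shrinking $\alpha$ does not control.
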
 
\begin{proof}
It is well-known \citep{Stein1999} that the spectral density of an isotropic Mat\'ern covariance of the form $\calM_{\nu}(h) = \sigma^2 2^{1-\nu}/\Gamma(\nu) h^{\nu} \calK_{\nu}(h)$ is  
\begin{align*}
s_0(\omega) = \frac{\sigma^2 \Gamma(\nu+d/2)}{\pi^{d/2}\Gamma(\nu)}
(1+\|\omega\|_2^2)^{-\nu-d/2}, \, \omega \in \mathbb{R}^d. 
\end{align*} 
For stationary input-rescaled Mat\'ern covariance $C_{\bftheta}$, write $D_{\ell}:=\diag\{\ell_1, \ldots, \ell_d\}$. Then $C_{\bftheta} = \calM_{\nu}(D_{\ell}^{-1} h)$. Applying the change-of-variables technique yields  the spectral density of $C_{\bftheta}$ 
\begin{align*}
s_{\bftheta}(\omega)&=
\frac{\sigma^2|D_\ell|\Gamma(\nu+d/2)}{\pi^{d/2}\Gamma(\nu)}
(1+\|D_\ell\omega\|_2^2)^{-\nu-d/2}, \, \omega \in \mathbb{R}^d. 
\end{align*}

Recall that $\bftheta:=(\theta_0, \theta_1, \ldots, \theta_d)=(\log \sigma^2, \log \ell_1, \ldots, \log \ell_d)$. Write $\partial_{\theta} f : = \partial f / \partial \theta$ for any function of $s$. Taking derivative of $s_{\bftheta}$ with respect to $\bftheta$ yields 
\begin{align*}
\partial_{\theta_0}\log s_{\bftheta} =1,\qquad
\partial_{\theta_j}\log s_{\bftheta} 
=1-(2\nu+d)\frac{\ell_j^2\omega_j^2}{1+\sum_{k=1}^d \ell_k^2\omega_k^2}, \, j=1, \ldots, d.
\end{align*} 
Thus $|\partial_{\theta_j}\log s_{\bftheta}(\omega)|\leq 1+2\nu+d\leq C$ for $0\leq j\leq d$, uniformly in $\bftheta,\omega$ for some fixed constant $C$. For $h\in\mathbb R^{d+1}$, applying the chain rule and fundamental theorem of calculus yields 
\begin{align*}
\left|\log\frac{s_{\bftheta+h}(\omega)}{s_{\bftheta}(\omega)}\right|
=\left|\int_0^1\sum_{j=0}^d h_j\partial_{\theta_j}\log s_{\bftheta+th}(\omega)\,dt\right|
\leq C\|h\|_1.
\end{align*}
Exponentiating and using $\partial_{\theta_j}s=s\,\partial_{\theta_j}\log s$ yields
\[
s_{\bftheta+h}\leq e^{C\|h\|_1}s_{\bftheta},\qquad
|\partial_{\theta_j}s_{\bftheta+h}|\leq Ce^{C\|h\|_1}s_{\bftheta}.
\]
For the $j$th coordinate vector $e_j$ and $0<|u|\leq1$, applying the fundamental theorem of calculus gives
\begin{align*}
\left|\frac{s_{\bftheta+ue_j}-s_{\bftheta}}{u}\right|
=\left|\int_0^1\partial_{\theta_j}s_{\bftheta+tue_j}\,dt\right|
\leq Ce^C s_{\bftheta}.
\end{align*}
By the spectral representation of the $C_{\bftheta}$ and dominated convergence theorem, $C_{\bftheta}$ is differentiable in $\bftheta$. By continuity of the derivatives of the spectral density and the above bound, $C_{\bftheta}$ has  continuous derivatives in $\bftheta$.

Next, we show that $\partial_{\theta_j} b_{\calT, \bftheta}(\cdot)$ is also bounded. 

Write $\mathbf K_{ij}=C_{\bftheta}(\bfx_i,\bfx_j)$ and $\bfK=[K_{ij}]_{i,j=1,\ldots, k}$ for a set of fixed $k$ locations $\bfx_1, \ldots, \bfx_k$. Denote the matrix derivative by $\dot{\mathbf K}=\partial_{\theta_j}\mathbf K$. For fixed vectors $\bfa_1,\bfa_2\in\mathbb R^k$, with components $a_{1,i},a_{2,i}, i=1,\ldots, k$, 
\[
\bfa_1^\top\dot{\mathbf K}\bfa_2
=\sum_{i,j=1}^k \bfa_{1,i}\bfa_{2,j}\,\partial_{\theta_j}C_{0,\bftheta}(\bfx_i,\bfx_j)
=\partial_{\theta_j}(\bfa_1^\top\mathbf K\bfa_2).
\]
Observe that  
\begin{align*}
\dot{\mathbf K}_{pq}
=\int\cos\{\omega^\top(\bfx_p-\bfx_q)\}\dot s_{\bftheta}(\omega)\,d\omega,
\qquad |\dot s_{\bftheta}|\leq Cs_{\bftheta}.
\end{align*}
Direct calculation yields that 
\begin{equation} \label{eqn: bilinear-derivative}
\begin{split}
|\bfa_1^\top\dot{\mathbf K}\bfa_2|
&=\left| \int
\left(\sum_i a_{1,i}e^{\mathrm i\omega^\top \bfx_i}\right)
\overline{\left(\sum_j a_{2,j}e^{\mathrm i\omega^\top \bfx_j}\right)}
\dot s_{\bftheta}(\omega)\,d\omega\right|\\
&\leq\int\left|\sum_i a_{1,i}e^{\mathrm i\omega^\top \bfx_i}\right|
\left|\sum_j a_{2,j}e^{\mathrm i\omega^\top \bfx_j}\right|\,|\dot s_{\bftheta}(\omega)|\,d\omega\\
&\leq C\int\left|\sum_i a_{1,i}e^{\mathrm i\omega^\top \bfx_i}\right|
\left|\sum_j a_{2,j}e^{\mathrm i\omega^\top \bfx_j}\right|s_{\bftheta}(\omega)\,d\omega\\
&\leq C\left(\int\left|\sum_i a_{1,i}e^{\mathrm i\omega^\top \bfx_i}\right|^2s_{\bftheta}\,d\omega\right)^{1/2}
\left(\int\left|\sum_j a_{2,j}e^{\mathrm i\omega^\top \bfx_j}\right|^2s_{\bftheta}\,d\omega\right)^{1/2}\\
&=C(\bfa_1^\top\mathbf K\bfa_1)^{1/2}(\bfa_2^\top\mathbf K\bfa_2)^{1/2},
\end{split}
\end{equation}
where the last inequality is due to Cauchy-Schwarz inequality. 
For the last equality, direct calculation gives 
\[
\int\left|\sum_i a_{1,i}e^{\mathrm i\omega^\top \bfx_i}\right|^2s_{\bftheta}(\omega)\,d\omega
=\sum_{i,j}a_{1,i}a_{1,j}\int\cos\{\omega^\top(\bfx_i-\bfx_j)\}s_{\bftheta}(\omega)\,d\omega
=\bfa_1^\top\mathbf K\bfa_1,
\]
and likewise for $\bfa_2$. The finite sums are bounded and $s_{\bftheta}$ is integrable, so all integrals above are absolutely convergent. 

Fix $\bfx$ and $j\in\{0,\ldots,d\}$. Let $U$ contain the $k\geq1$ knots on the active internal-node path. Set 
\begin{align*}
\mathbf K=C_{\bftheta}(U,U)=\mathbf L\mathbf L^\top,\qquad
\bfc_{\bfx}=C_{\bftheta}(U,\bfx),\qquad b_U(\bfx)=\mathbf L^{-1}\bfc_{\bfx}.
\end{align*}
The knots are distinct by Assumption~3. Substituting $\bfa_1=\mathbf L^{-\top}u$, $\bfa_2=\mathbf L^{-\top}w$ in \eqref{eqn: bilinear-derivative} gives
\begin{align*}
|u^\top\mathbf L^{-1}\dot{\mathbf K}\mathbf L^{-\top}w|
\leq C\|u\|_2\|w\|_2,\qquad
\|\mathbf L^{-1}\dot{\mathbf K}\mathbf L^{-\top}\|_{\mathrm{op}}\leq C.
\end{align*}
Here $\|\cdot\|_{\mathrm{op}}$ denotes  the operator norm. 

Write $\bfH=\bfL^{-1} \dot{\bfL}$. Then it can be verified that 
\begin{align*}
\mathbf H+\mathbf H^\top&=\mathbf L^{-1}\dot{\mathbf K}\mathbf L^{-\top},\\
\|\mathbf H\|_{\mathrm{op}}&\leq\|\mathbf H\|_{\mathrm F}
\leq\|\mathbf H+\mathbf H^\top\|_{\mathrm F}\\
&\leq\sqrt{k}\,\|\mathbf L^{-1}\dot{\mathbf K}\mathbf L^{-\top}\|_{\mathrm{op}}
\leq C\sqrt{k}.
\end{align*}
Here $\|\cdot\|_{\mathrm F}$ is the Frobenius norm with
  $\|\mathbf H+\mathbf H^\top\|_{\mathrm F}^2=2\|\mathbf H\|_{\mathrm F}^2+2\sum_a H_{aa}^2$. 

Applying \eqref{eqn: bilinear-derivative} to the covariance matrix at $(U,\bfx)$, with vectors $a_1=\binom{\mathbf L^{-\top}u}{0}$ and $a_2=\binom{0_k}{1}$ yields
\begin{align*}
|u^\top\mathbf L^{-1}\dot \bfc_{\bfx}|
\leq C\{u^\top\mathbf L^{-1}\mathbf K\mathbf L^{-\top}u\}^{1/2}\sigma
=C\sigma\|u\|_2,\qquad
\|\mathbf L^{-1}\dot \bfc_{\bfx}\|_2\leq C\sigma,
\end{align*}
where the last inequality follows by taking the supremum over unit $u$. 

Differentiating $\mathbf Lb_U=\bfc_{\bfx}$ and multiplying by $\mathbf L^{-1}$ yields
\begin{align*}
\dot b_U&=\mathbf L^{-1}\dot \bfc_{\bfx}-\mathbf H b_U,\\
\|\dot b_U(\bfx)\|_2
&\leq\|\mathbf L^{-1}\dot \bfc_{\bfx}\|_2+\|\mathbf H\|_{\mathrm{op}}\|b_U(\bfx)\|_2
\leq C\sigma+C\sqrt{k}\,\sigma
\leq C(1+\sqrt{k})\sigma_+.
\end{align*}

These calculations also hold when $\bfx\in U$: the augmented covariance need only be nonnegative definite. Indeed, for an active internal node $\mathcal A$, let $V$ list its strict ancestor knots. Then  
\begin{align*}
[b_U(\bfx)]_{\mathcal R_{\mathcal A}}
&=\mathbf L_{\mathcal A}^{-1}\bigl\{C_{\bftheta}(\mathcal R_{\mathcal A},\bfx)
-C_{0,\bftheta}(\mathcal R_{\mathcal A},V)C_{\bftheta}(V,V)^{-1}C_{\bftheta}(V,\bfx)\bigr\}\\
&=\mathbf L_{\mathcal A}^{-1}C_{\mathcal A,\bftheta}(\mathcal R_{\mathcal A},\bfx)
=u_{\mathcal A,\bftheta}(\bfx),
\end{align*}
where the subtracted term is zero for $V=\varnothing$. Off-path coordinates vanish, and leaf scales and partition indicators do not depend on $\bftheta$. Hence 
\begin{align*}
\|\partial_{\theta_j}b_{\mathcal T,\bftheta}(\bfx)\|_2^2
=\sum_{\substack{\mathcal A\in\mathcal I(\mathcal T)\\\bfx\in\mathcal A}}
\|\partial_{\theta_j}u_{\mathcal A,\bftheta}(\bfx)\|_2^2
=\|\dot b_U(\bfx)\|_2^2.
\end{align*}
Since $k\leq r(K-1)$,
\[
\|\partial_{\theta_j}b_{\mathcal T,\bftheta}(\bfx)\|_2
\leq C\sigma_+(1+\sqrt{r(K-1)})
\leq C\sigma_+(1+\sqrt r)\sqrt K
\leq C\sqrt K.
\]
For $k=0$ the derivative is zero. The bound is uniform in $\bfx,\bftheta$ and $j$. Along $\bftheta_t=(1-t)\bftheta+t\bftheta'$, the chain rule and fundamental theorem of calculus give
\[
\begin{aligned}
b_{\mathcal T,\bftheta'}(\bfx)-b_{\mathcal T,\bftheta}(\bfx)
&=\int_0^1\sum_{j=0}^d\partial_{\theta_j}b_{\mathcal T,\bftheta_t}(\bfx)(\theta_j'-\theta_j)\,dt,\\
\sup_{\bfx}\|b_{\mathcal T,\bftheta'}(\bfx)-b_{\mathcal T,\bftheta}(\bfx)\|_2
&\leq C\sqrt K\,\|\bftheta'-\bftheta\|_1
\leq C\sqrt{(d+1)K}\,\|\bftheta'-\bftheta\|_2.
\end{aligned}
\]
Absorbing $\sqrt{d+1}$ into $C$ yields the desired results. 

\end{proof} 

\begin{remark}
Propositions~\ref{prop: residual process bound} and~\ref{prop: covariance continuity} make $(\bftheta,z)\mapsto g_{\mathcal T,\bftheta,z}$ continuous in the uniform norm for each tree. Proposition~\ref{prop: tree size} gives the total prior mass one for finite trees that form a countable set by Assumption 2. Thus, the map $(\mathcal T,\bftheta,z)\mapsto g_{\mathcal T,\bftheta,z}$ is Borel for every $\bfx$. 
\end{remark} 

%%%%%%%%%%%%%%%%%%%%%%
\subsection{Verification of Condition C1}

\begin{prop}\label{prop: C1}
For every $\delta>0$,
\[
\mathcal B_\delta=\{(g,v):\|g-g_0\|_\infty<\delta,
\ |v-v_0|<\delta\},\qquad
\Pi_{g,v}(\mathcal B_\delta)\geq m_\delta>0,
\]
with $m_\delta$ independent of $n$.
\end{prop}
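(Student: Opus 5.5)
The plan is to use the product structure $\Pi_{g,v}=\Pi_g\otimes\Pi_v$, which gives
\[
\Pi_{g,v}(\mathcal B_\delta)=\Pi_g(\|g-g_0\|_\infty<\delta)\,\Pi_v(|v-v_0|<\delta).
\]
The second factor is positive by Assumption~\ref{ass: covariance}. Neither factor involves $n$, since the prior is fixed, so $m_\delta$ will automatically be independent of $n$. All the work is in lower bounding $\Pi_g(\|g-g_0\|_\infty<\delta)$. I would do this by restricting to a single deterministic tree, a set of covariance and leaf-scale parameters with positive mass, and a Gaussian ball for the latent vector $Z$ in the representation of Proposition~\ref{prop: residual process bound}(i).

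\emph{Step 1 (tree).} Since $g_0$ is uniformly continuous on $[0,1]^d$, choose a depth $m$ such that every cell of the regular dyadic partition of side $2^{-m}$ has oscillation of $g_0$ below $\delta/3$. Let $\calT^*$ be the complete tree of depth $md$ that cuts the coordinates cyclically at the midpoint. Its prior probability is a finite product of three kinds of factors:
\begin{itemize}
\item split probabilities $1-\rho(\calA)=\rho_0(1+\text{depth})^{-\beta_0}>0$;
\item stop probabilities $\rho(\calA)\geq 1-\rho_0>0$ at the leaves;
\item selection probabilities $\lambda_j>0$ and the mass $1/M$ of the cut point $1/2$, by Assumption~\ref{ass: tree-prior}(iii).
\end{itemize}
Hence $\pi(\calT^*)=:p_1>0$. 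The number of leaves $K^*=2^{md}$ is fixed, and so is $q^*\leq \tilde r K^*$.

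\emph{Step 2 (parameters).} Conditional on $\calT^*$, pick a bounded set $\Theta_0$ with $\Pi_{\bftheta}(\Theta_0)>0$, which exists because the prior is proper. Also pick an interval $[\kappa_-^2,\kappa_+^2]$ for the leaf scales that carries positive prior mass jointly for the $K^*$ leaves. I expect this to be the one delicate point: the representation in Proposition~\ref{prop: residual process bound} divides by $\kappa_\calB$, so leaf scales must be bounded away from zero. I would rely on the inverse-gamma law, or whatever law is imposed under the bound $\kappa^2_\calA<\kappa^2_+$, giving positive mass to such an interval. Let $p_2>0$ be the resulting mass.

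\emph{Step 3 (latent ball).} For $(\bftheta,\kappa)$ in this set, define the target $\bfz^*$ as follows.
\begin{itemize}
\item All internal-node coordinates of $\bfz^*$ are set to $0$.
\item For each leaf $\calB$ with center $\bfx_\calB$, set $z^*_\calB=(g_0(\bfx_\calB)-\mu_0)/(\kappa_\calB\sqrt{1+\kappa_0^{-1}})$.
\end{itemize}
Then $g_{\calT^*,\bftheta,\bfz^*}$ equals $g_0(\bfx_\calB)$ on each half-open leaf cell $\calB$. It is therefore within $\delta/3$ of $g_0$ in sup norm, and the cell-boundary jumps cause no problem because every point belongs to exactly one cell. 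By Proposition~\ref{prop: residual process bound}(ii), whose constant $B$ is uniform over $\calT,\bftheta,\bfx$, every $\bfz$ with $\|\bfz-\bfz^*\|_2<\delta/(3B)$ gives $\|g-g_0\|_\infty<\delta$. Note also that
\[
\|\bfz^*\|_2\leq \sqrt{K^*}\,(\|g_0\|_\infty+|\mu_0|)/\kappa_-=:R,
\]
uniformly over the parameter set. The standard Gaussian measure of any ball of radius $\delta/(3B)$ centred in $\{\|\bfz\|\leq R\}$ in $\mathbb R^{q^*}$ is at least the ball's volume times $(2\pi)^{-q^*/2}e^{-(R+\delta/(3B))^2/2}$. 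Call this bound $p_3>0$; it is uniform.

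Combining the three steps, $\Pi_g(\|g-g_0\|_\infty<\delta)\geq p_1p_2p_3$. Setting $m_\delta:=p_1p_2p_3\,\Pi_v(|v-v_0|<\delta)>0$ completes the argument. Measurability of the event is supplied by the remark following Proposition~\ref{prop: covariance continuity}.
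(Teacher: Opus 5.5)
Your proposal is correct and takes essentially the same route as the paper. Both arguments use the same cyclic midpoint tree of depth $md$ with explicitly positive prior mass, the finite-dimensional Gaussian representation of Proposition~P1 on that tree, an open set of latent coefficients forcing $\|g-g_0\|_\infty<\delta$, and multiplication by $\Pi_v(|v-v_0|<\delta)$.

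The differences are minor. The paper uses a product set $E_z$: the internal block has norm below $\delta/(3\sigma_+)$ via Cauchy--Schwarz, and each leaf coordinate lies in a $\delta/3$-window around $g_0(\bfx_{\calA})$. Because $E_z$ does not depend on $\bftheta$, the paper integrates over all of $\Pi_{\bftheta}$ and needs no $\Theta_0$. Your restriction of the leaf scales $\kappa_{\calA}$ to a positive-mass set bounded away from zero makes explicit a point about their randomness that the paper's proof passes over.
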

\begin{proof}
Fix $\delta>0$. Since $g_0\in C(\scrD)$  and $\scrD$ is compact under Assumption 1,
\begin{align*}
\omega_{g_0}(t):=\sup_{\substack{\bfx,\bfx'\in\scrD\\\|\bfx-\bfx'\|_2\leq t}}
|g_0(\bfx)-g_0(\bfx')|\longrightarrow0
\end{align*}
as $t \to 0^+$. 

Choose $m\geq1$ with $\omega_{g_0}(\sqrt d\,2^{-m})<\delta/3$. Let $\mathcal T_m$ split every cell at its midpoint in coordinates $1,\ldots,d$ for $m$ cycles. Its depth is $H=md$, and its $2^H$ leaves have side length $2^{-m}$. Choose $\bfx_{\mathcal A}\in\mathcal A$ in each leaf. Write $z_{\mathcal I}$ for the concatenated internal coefficient blocks and $q_{\mathcal I}=\sum_{\mathcal B\in\mathcal I}r_{\mathcal B}\geq1$ for their dimension.

For fixed  $\bftheta$, fixed $z$, and $\bfx$ in its assigned leaf $\mathcal A$, it follows from Proposition~\ref{prop: residual process bound} that 
\begin{align*}
g_{\mathcal T_m,\bftheta,z}(\bfx)
&=\sum_{\mathcal B\in\mathcal I} \mathds1_{\mathcal B}(\bfx)u_{\mathcal B,\bftheta}(\bfx)^\top z_{\mathcal B} + \mu_0 + \kappa_{\mathcal A}\sqrt{1+\kappa_0^{-1}} z_{\mathcal A},\\ 
\sum_{\mathcal B\in\mathcal I(\mathcal T_m)} \mathds1_{\mathcal B}(\bfx)\|u_{\mathcal B,\bftheta}(\bfx)\|_2^2
&=C_{0,\bftheta}(\bfx,\bfx)-C_{\mathcal A,\bftheta}(\bfx,\bfx) \leq\sigma_+^2.
\end{align*}

By Cauchy-Schwarz inequality, 
\begin{align*}
\left|\sum_{\mathcal B\in\mathcal I } \mathds1_{\mathcal B}(\bfx)u_{\mathcal B,\bftheta}(\bfx)^\top z_{\mathcal B}\right|^2
&\leq\left(\sum_{\mathcal B\in\mathcal I} \mathbf1_{\mathcal B}(\bfx)\|u_{\mathcal B,\bftheta}(\bfx)\|_2^2\right) \|z_{\mathcal I}\|_2^2 \leq\sigma_+^2\|z_{\mathcal I}\|_2^2.
\end{align*}

Set 
\begin{align*}
E_z:=\left\{z:\|z_{\mathcal I}\|_2<\frac{\delta}{3\sigma_+},\,
\max_{\mathcal A\in\mathcal L} 
|\mu_0 + \kappa_{\mathcal A}\sqrt{1+\kappa_0^{-1}} z_{\mathcal A}-g_0(\bfx_{\mathcal A})|<\frac{\delta}{3}\right\}. 
\end{align*}

Since $\|\bfx-\bfx_{\mathcal A}\|_2\leq\sqrt d\,2^{-m}$, direct calculation yields
\begin{align*}
\|g_{\mathcal T_m,\bftheta,z}-g_0\|_\infty
&\leq\sigma_+\|z_{\mathcal I}\|_2 
+\max_{\mathcal A\in\mathcal L} 
|\mu_0 + \kappa_{\mathcal A}\sqrt{1+\kappa_0^{-1}} z_{\mathcal A}-g_0(\bfx_{\mathcal A})|+\omega_{g_0}(\sqrt d\,2^{-m}) \\
& < \delta/3 + \delta/3 + \delta/3=\delta
\end{align*}
for $z \in E_z$. 

As $E_z$ is a non-empty open set, $\Pi_z(E_z | \calT_m)>0$  for any $\delta>0$. Next we show positive prior probability for tree partition. 

Write $s_h=\rho_0(1+h)^{-\beta_0}$, $j_h=1+(h\bmod d)$, $\lambda_* =\min_j\lambda_j$, and let $w_{\mathcal A}\geq w_*>0$ be the conditional midpoint probabilities in Assumption~2. There are $2^h$ internal nodes at depth $h<H$ and $2^H$ terminal nodes at depth $H$. Thus, 
\begin{align*}
\pi(\mathcal T_m)
&=\left[\prod_{h=0}^{H-1}\prod_{\substack{\mathcal A\in\mathcal T_m\\\operatorname{depth}(\mathcal A)=h}}
(s_h\lambda_{j_h}w_{\mathcal A})\right](1-s_H)^{2^H}\\
&\geq\{\rho_0(1+H)^{-\beta_0}\lambda_*w_*\}^{2^H-1}(1-\rho_0)^{2^H}>0.
\end{align*}

Finally, it follows from Assumption 4 that 
\begin{align*}
\Pi_{g,v}(\mathcal B_\delta)
&\geq\pi(\mathcal T_m)\int \Pi_z(E_z\mid\mathcal T_m)\Pi_v(|v-v_0|<\delta)\,\Pi_{\bftheta}(\mathrm{d}\bftheta)\\
&=\pi(\mathcal T_m)\Pi_z(E_z\mid\mathcal T_m)\Pi_v(|v-v_0|<\delta) =:m_\delta>0,
\end{align*} 
where $m_{\delta}$ is independent of $n$. Thus the conclusion holds. 
\end{proof}

For the likelihood-moment calculation, define the log ratios at possible observations by
\begin{align}\label{eqn: log-likelihood-ratios}
\begin{split}
L_{n,i}(g,v;o_i)&=\log\frac{\varphi_{v_0}(y_i-g_0(\bfx_i))}{\varphi_v(y_i-g(\bfx_i))},\\
L_n(g,v;o)&=\sum_{i=1}^nL_{n,i}(g,v;o_i)
=\log\frac{p_{g_0,v_0}^{(n)}(o)}{p_{g,v}^{(n)}(o)}.
\end{split}
\end{align}
Here $\bfx_i$ is the fixed design point or the input component of $o_i$, as in Section~4. An omitted observation argument means evaluation at $O_i$ or $\mathcal O_n$, respectively. Thus the integral in C1 is
\begin{align}\label{pub:evidence-definition}
\int e^{-L_n(g,v)}\,\Pi_{g,v}(dg,dv)
=\frac{\int p_{g,v}^{(n)}(\mathcal O_n)\,\Pi_{g,v}(dg,dv)}
{p_{g_0,v_0}^{(n)}(\mathcal O_n)}.
\end{align}
This rewrites the denominator of Eq.~(5) after division by the true density.

\begin{prop}\label{prop: evidence}
Condition C1 holds under either design. More precisely, there is $C_*>0$ such that, for $0<\delta\leq\min(1,v_0/2)$ and $\epsilon=C_*\delta$ with $n\epsilon^2\geq1$,
\begin{equation}\label{eqn: evidence-probability}
\mathbb P_0\left(\int e^{-L_n(g,v)}\,\Pi_{g,v}(dg,dv)
\leq m_\delta e^{-2n\epsilon^2}\right)
\leq\frac{1}{n^2\epsilon^4}.
\end{equation}
\end{prop}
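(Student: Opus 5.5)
The plan follows the standard route of \citet[Lemma 8.21]{Ghosal2017}: restrict the evidence integral to the sup-norm ball $\mathcal B_\delta$ of Proposition~\ref{prop: C1}, apply Jensen's inequality to the normalized restricted prior, and control the fluctuation of the averaged log-likelihood ratio with Chebyshev's inequality. Let $\bar\Pi=\Pi_{g,v}(\cdot\cap\mathcal B_\delta)/\Pi_{g,v}(\mathcal B_\delta)$. Jensen's inequality gives
\[
\int e^{-L_n}\,d\Pi_{g,v}\geq \Pi_{g,v}(\mathcal B_\delta)\exp\Bigl\{-\int L_n\,d\bar\Pi\Bigr\}\geq m_\delta\exp\Bigl\{-\int L_n\,d\bar\Pi\Bigr\}.
\]
So the target event $\{\int e^{-L_n}d\Pi\le m_\delta e^{-2n\epsilon^2}\}$ is contained in $\{\int L_n\,d\bar\Pi\ge 2n\epsilon^2\}$. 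Fubini's theorem applies here because, by the remark after Proposition~\ref{prop: covariance continuity}, $(g,v)\mapsto L_n(g,v;o)$ is jointly measurable, and the moments below are finite.

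Next we compute the moments pointwise on $\mathcal B_\delta$. For each $i$, under $\mathbb P_0$ (conditionally on $\bfx_i$ in random design), write $y_i-g(\bfx_i)=\Delta_i+\sqrt{v_0}\varepsilon_i$ with $\Delta_i:=g_0(\bfx_i)-g(\bfx_i)$ and $|\Delta_i|<\delta$. Then
\[
L_{n,i}=\tfrac12\log(v/v_0)-\tfrac12\varepsilon_i^2+\frac{(\Delta_i+\sqrt{v_0}\varepsilon_i)^2}{2v}.
\]
Its mean is the Gaussian Kullback--Leibler divergence $\tfrac12\{\log(v/v_0)+v_0/v-1\}+\Delta_i^2/(2v)$. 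Since $|v-v_0|<\delta\le v_0/2$, this is at most $c_1\delta^2$, with $c_1$ depending only on $v_0$. The centered part is a quadratic polynomial in $\varepsilon_i$ with coefficients $O(\delta)$, namely $(v_0/v-1)/2$ and $\Delta_i\sqrt{v_0}/v$. Its variance is therefore at most $c_2\delta^2$, uniformly over $\mathcal B_\delta$ and over $\bfx_i$.

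Now set $\bar L_n:=\int L_n\,d\bar\Pi=\sum_i \int L_{n,i}\,d\bar\Pi$. By Fubini's theorem, $\mathbb E_0\bar L_n\le c_1 n\delta^2$. The summands $\int L_{n,i}d\bar\Pi$ are independent across $i$ under $\mathbb P_0$ in both designs, since they are functions of $\varepsilon_i$ alone, or of $(\bfx_i,\varepsilon_i)$ in the random design, and $\bar\Pi$ does not depend on the data. By Jensen's inequality, $\operatorname{Var}_0\bar L_n\le\sum_i\int\operatorname{Var}_0L_{n,i}\,d\bar\Pi\le c_2 n\delta^2$; in random design the same bound follows from $\mathbb E_0(\int (L_{n,i}-\mathbb E_0 L_{n,i}) d\bar\Pi)^2$. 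Choose $C_*$ so that $\epsilon^2=C_*^2\delta^2\ge c_1\delta^2$ and $C_*^2\ge c_2$. Chebyshev's inequality then gives
\[
\mathbb P_0(\bar L_n\ge 2n\epsilon^2)\le \mathbb P_0(\bar L_n-\mathbb E_0\bar L_n\ge n\epsilon^2)\le \frac{c_2n\delta^2}{n^2\epsilon^4}\le \frac{1}{n\epsilon^2}.
\]
This is weaker than the stated bound $1/(n^2\epsilon^4)$, and closing that gap is the main obstacle.

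To reach $1/(n^2\epsilon^4)$, which is what makes the bound summable in $n$ for almost-sure C1, I would use a fourth-moment Markov inequality instead of Chebyshev. The centered summands are independent, with uniformly bounded fourth moments of order $\delta^4$ (Gaussian polynomials), so the fourth central moment of the sum is at most $c_3(n\delta^4+n^2\delta^4)$. This yields a tail bound of order $n^2\delta^4/(n\epsilon^2)^4\lesssim 1/(n^2\epsilon^4)$ once $C_*$ is enlarged and $n\epsilon^2\ge1$ is used; this is exactly where the condition $n\epsilon^2\ge1$ enters. The key bookkeeping point is that all constants are uniform over $\mathcal B_\delta$ because $v\ge v_0/2$.

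C1 then follows. Given $a>0$, pick $\delta$ small enough that $2\epsilon^2=2C_*^2\delta^2<a/2$. Since $m_\delta e^{-2n\epsilon^2}\ge e^{-an}$ for large $n$, the failure probabilities are summable, and Borel--Cantelli gives the bound $\mathbb P_0$-almost surely, eventually. Borel--Cantelli only needs the marginal probabilities, so independence across different $n$ is not required.
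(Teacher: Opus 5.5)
Your final argument is correct and is essentially the paper's proof: the paper computes the same Gaussian KL bound of order $\delta^2$ and a centered fourth moment of order $n^2\delta^4$ for each fixed $(g,v)\in\mathcal B_\delta$. It then invokes \citet[Lemma 8.21]{Ghosal2017} with $k=4$, $D=1$, which is exactly the Jensen-on-the-restricted-prior plus fourth-moment Markov step you carry out inline, followed by Borel--Cantelli. One minor point is that the bound $c_3(n+n^2)\delta^4\le 2c_3n^2\delta^4$ already gives $1/(n^2\epsilon^4)$ after enlarging $C_*$, so $n\epsilon^2\ge1$ is not actually used there; it only makes the stated bound nontrivial.
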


\begin{proof}
Fix a deterministic $(g,v)\in\mathcal B_\delta$ and put $\Delta=g-g_0$; suppress this fixed pair in $L_{n,i}$ and $L_n$ within the proof for notational simplicity. By \citet[Lemma 2.7(ii)]{Ghosal2017}, under both designs,
\begin{equation}\label{eqn: gaussian-moments}
\mathbb E_0L_n
=\frac n2\left\{\log\frac v{v_0}+\frac{v_0}v-1+
\frac{\|\Delta\|_{L_2(\mathbb P_{\bfX,n})}^2}{v}\right\}
\leq C_{\mathrm{KL}}n\delta^2,
\end{equation}
where $C_{\mathrm{KL}}=3/v_0^2+1/(2v_-)$. Indeed, $\log t+t^{-1}-1$ and its first derivative vanish at $t=1$, and its second derivative $(2-t)/t^3$ lies in $(0,12]$ on $[1/2,3/2]$; Taylor's theorem applies with $t=v/v_0$.

Write $\Delta_i=\Delta(\bfx_i)$ in fixed design and $\Delta_i=\Delta(\bfX_i)$ in random design. Assumption~1 makes $\Delta_i$ independent of $\varepsilon_i$; in fixed design it is deterministic. 

Observe that 
\begin{align*}
W_i:=L_{n,i}-\mathbb E_0L_{n,i}
=\frac{v_0-v}{2v}(\varepsilon_i^2-1)
-\frac{\sqrt{v_0}\Delta_i}{v}\varepsilon_i
+\frac{\Delta_i^2-\mathbb E_0\Delta_i^2}{2v}.
\end{align*}
The last term is zero in fixed design. For the first two terms, Gaussian integration yields
\begin{align*}
&\int\left|\frac{v_0-v}{2v}(e^2-1)-\frac{\sqrt{v_0}\Delta(\bfx)}v e\right|^4\varphi_1(e)\,de\\
&\qquad=\frac{\frac{15}{4}(v_0-v)^4+15(v_0-v)^2v_0\Delta(\bfx)^2+3v_0^2\Delta(\bfx)^4}{v^4}.
\end{align*}

This uses $\mathbb E(\varepsilon^2-1)^4=60$, $\mathbb E[(\varepsilon^2-1)^2\varepsilon^2]=10$, $\mathbb E\varepsilon^4=3$, and zero odd moments. Since $|\Delta_i^2-\mathbb E_0\Delta_i^2|\leq\delta^2$, the bound $|a+b|^4\leq8(|a|^4+|b|^4)$ gives
\begin{equation}\label{eqn: component-fourth}
\mathbb E_0W_i^4\leq C_4\delta^4,\qquad
C_4=\frac8{v_-^4}\left(\frac{61}{16}+15v_0+3v_0^2\right).
\end{equation}
Independence within each experiment and $\mathbb E_0W_i^2\leq(\mathbb E_0W_i^4)^{1/2}$ imply
\begin{equation}\label{eqn: sample-moments}
\mathbb E_0|L_n-\mathbb E_0L_n|^4
=\sum_i\mathbb E_0W_i^4+6\sum_{i<j}\mathbb E_0W_i^2\mathbb E_0W_j^2
\leq3C_4n^2\delta^4.
\end{equation}
Choose $C_*\geq\max\{1,\sqrt{C_{\mathrm{KL}}},(3C_4)^{1/4}\}$. Equations~\eqref{eqn: gaussian-moments}--\eqref{eqn: sample-moments} and Proposition~\ref{prop: C1} verify the KL, centered fourth-moment and prior-mass requirements of \citet[Lemma 8.21]{Ghosal2017}, with $k=4$, $D=1$ and $\epsilon=\epsilon_{\mathrm{ev}}$, proving \eqref{eqn: evidence-probability}; see also \citet[Lemma 10]{Ghosal2007}. For fixed $a>0$, choose $\delta$ with $2C_*^2\delta^2<a$. Then
\[
\sum_n(n^2\epsilon_{\mathrm{ev}}^4)^{-1}<\infty,\qquad
m_\delta e^{-2n\epsilon_{\mathrm{ev}}^2}\geq e^{-an}
\quad\text{for all sufficiently large }n.
\]
The Borel--Cantelli lemma therefore establishes Condition C1.
\end{proof}

\subsection{Verification of Condition C2}

\begin{prop}\label{prop: C2}
Denote $\alpha_{\max}: =\min\{1,(4\tilde r\log2)^{-1}\}$. For $0<\alpha\leq\alpha_{\max}$, let
\begin{align*}
\mathcal S_{n,\alpha}
=\bigcup_{\mathcal T: K \leq\lfloor\alpha n\rfloor}
\left\{(g_{\mathcal T,\bftheta,z},v):
\|z\|_2\leq\sqrt n,\ \max_{1\leq j\leq d}|\theta_j|\leq n\right\},
\end{align*}
where $\bftheta$ and $v$ range over the domains in Assumption~4, with any fixed covariance coordinates held at their prescribed values. For sufficiently large $n$,
\begin{align*}
\Pi_{g,v}(\mathcal S_{n,\alpha}^c)\leq C_\alpha e^{-c_\alpha n},
\qquad c_\alpha=\min\{c_{\mathcal T}\alpha,1/8,c_\ell\}>0.
\end{align*}
\end{prop}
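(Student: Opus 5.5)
We bound the prior mass of the sieve complement through a union bound over the three ways a draw $(g,v)=(g_{\calT,\bftheta,Z},v)$ can fail to lie in $\mathcal S_{n,\alpha}$. By the representation of Proposition~\ref{prop: residual process bound}(i), every prior draw has this form, with $Z\sim N_q(\mathbf 0,\mathbf I_q)$ conditionally on $(\calT,\bftheta)$ and $v$ automatically in its domain. Hence
\[
\mathcal S_{n,\alpha}^c\subset\{K>\lfloor\alpha n\rfloor\}\cup\Bigl\{\max_{1\le j\le d}|\theta_j|>n\Bigr\}\cup\bigl\{K\le\lfloor\alpha n\rfloor,\ \|Z\|_2>\sqrt n\bigr\}.
\]
If $g$ admits several representations, membership of any one of them suffices, so this inclusion is valid.

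The first event is handled by Proposition~\ref{prop: tree size}: $\mathbb P(K>\lfloor\alpha n\rfloor)\le C_{\calT}e^{-c_{\calT}\lfloor\alpha n\rfloor}\le C_{\calT}e^{c_{\calT}}e^{-c_{\calT}\alpha n}$. The second event is handled by the range-tail condition \eqref{eqn: range-tail} of Assumption~\ref{ass: covariance}, which gives the bound $C_\ell e^{-c_\ell n}$ for large $n$. If some coordinates are held fixed, this event is empty for large $n$.

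For the third event we condition on $(\calT,\bftheta)$ with $K\le\lfloor\alpha n\rfloor$. Then $q\le\tilde rK\le\tilde r\alpha n$, and since $\alpha\le\alpha_{\max}\le(4\tilde r\log 2)^{-1}$ we have $q\le n/(4\log 2)$. We apply a chi-square Chernoff bound with $t=1/4$:
\[
\mathbb P(\|Z\|_2^2>n)\le e^{-n/4}\,\mathbb E e^{\|Z\|_2^2/4}=e^{-n/4}2^{q/2}\le e^{-n/4+n/8}=e^{-n/8}.
\]
This bound is uniform in $(\calT,\bftheta)$, so integrating over the prior gives mass at most $e^{-n/8}$.

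Summing the three bounds yields $\Pi_{g,v}(\mathcal S_{n,\alpha}^c)\le C_\alpha e^{-c_\alpha n}$ with $c_\alpha=\min\{c_{\calT}\alpha,1/8,c_\ell\}$ and $C_\alpha=C_{\calT}e^{c_{\calT}}+C_\ell+1$. The only delicate step is the third: the constant $\alpha_{\max}$ is calibrated exactly so that the factor $2^{q/2}$ from the moment generating function is absorbed into half of the $e^{-n/4}$ decay. One must therefore use $q\le\tilde rK$ from Proposition~\ref{prop: residual process bound}(i) rather than any cruder bound on $q$.
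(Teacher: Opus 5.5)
Your proof is correct and follows the paper's argument: you use the same three-way union bound over large trees, the range tail, and large coefficient norm, and you use the same ingredients for each term, namely the tree-size tail of Proposition~\ref{prop: tree size}, the Gaussian moment $\int e^{\|z\|_2^2/4}\,\mathrm{d}\Pi_z=2^{q/2}$ with Markov's inequality and $q\le\tilde r\alpha n$, and condition \eqref{eqn: range-tail}, which give the identical $C_\alpha=C_{\mathcal T}e^{c_{\mathcal T}}+1+C_\ell$ and $c_\alpha$. The paper additionally notes that, for each of the finitely many trees with $K\le\lfloor\alpha n\rfloor$, the map $(\bftheta,z,v)\mapsto(g_{\mathcal T,\bftheta,z},v)$ is continuous, which it uses for measurability of the sieve.
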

\begin{proof}
The finite cut set gives finitely many labeled trees with $K\leq\lfloor\alpha n\rfloor$. For each such tree, it follows from Propositions~\ref{prop: residual process bound} and~\ref{prop: covariance continuity} that
\begin{align*}
\|g_{\mathcal T,\bftheta,z}-g_{\mathcal T,\bftheta',z'}\|_\infty
\leq B\|z-z'\|_2+C\sqrt K\,\|z'\|_2\|\bftheta-\bftheta'\|_2.
\end{align*}
Consequently $(\bftheta,z,v)\mapsto(g_{\mathcal T,\bftheta,z},v)$ is continuous. 

For fixed $\mathcal T$ with $K\leq k_n:=\lfloor\alpha n\rfloor$, Proposition~\ref{prop: residual process bound} gives $q \leq\tilde r\alpha n$. The Gaussian exponential moment and Markov's inequality give
\begin{align*}
\int e^{\|z\|_2^2/4}\,\Pi_z(dz\mid\mathcal T)=2^{q/2},\qquad
\Pi_z(\{z:\|z\|_2>\sqrt n\}\mid\mathcal T)
\leq e^{-n/4+(q/2)\log2}\leq e^{-n/8}.
\end{align*}
Every coordinate draw satisfying the sieve restrictions maps into $\mathcal S_{n,\alpha}$. By Proposition~\ref{prop: tree size} and Assumption 4, direct calculation yields 
\begin{align*}
\Pi_{g,v}(\mathcal S_{n,\alpha}^c)
&\leq\sum_{\mathcal T:K>k_n}\pi(\mathcal T)
+\sum_{\mathcal T:K\leq k_n}\pi(\mathcal T)
\Pi_z(\{z:\|z\|_2>\sqrt n\}\mid\mathcal T)\\
&\quad+\Pi_{\bftheta}(\max_{1\leq j\leq d}|\theta_j|>n)\\
&\leq C_{\mathcal T}e^{c_{\mathcal T}}e^{-c_{\mathcal T}\alpha n}
+e^{-n/8}+C_\ell e^{-c_\ell n}\leq C_\alpha e^{-c_\alpha n},
\end{align*}
with $C_\alpha=C_{\mathcal T}e^{c_{\mathcal T}}+1+C_\ell$, as desired. 
\end{proof}

\subsection{Verification of Condition C3}

\begin{prop}\label{prop: C3}
For every fixed $0<\epsilon\leq1$, there are $0<\alpha_\epsilon\leq\alpha_{\max}$ and deterministic Borel tests $\psi_n$, as in C3, such that, for all sufficiently large $n$,
\begin{equation}\label{eqn: test-errors}
\mathbb E_0\psi_n(\mathcal O_n)\leq e^{-n\epsilon^2/16},\qquad
\sup_{\substack{(g,v)\in\mathcal S_{n,\alpha_\epsilon}\\H_{\mathbb P_{\bfX,n}}((g,v),(g_0,v_0))>\epsilon}}
\mathbb E_{g,v}^{(n)}(1-\psi_n)\leq e^{-n\epsilon^2/8}.
\end{equation}
An empty alternative set uses the zero test.
\end{prop}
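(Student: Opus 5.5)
The plan is the standard covering-plus-point-tests argument (Ghosal and van der Vaart, Proposition D.8 / Lemma 2.7). First, for a fixed alternative $(g_1,v_1)$ with $H_{\mathbb P_{\bfX,n}}((g_1,v_1),(g_0,v_0))>\epsilon$, I will use the likelihood-ratio test
\[
\phi_{1}=\mathds1\{p_{g_1,v_1}^{(n)}(\mathcal O_n)>p_{g_0,v_0}^{(n)}(\mathcal O_n)\}.
\]
Under both designs the observations within an experiment are independent with averaged squared Hellinger distance $H^2_{\mathbb P_{\bfX,n}}$. The Hellinger-affinity bound for products then gives
\[
\mathbb E_0\phi_1\le e^{-nH^2/2},\qquad \sup_{(g,v)\in B_1}\mathbb E_{g,v}^{(n)}(1-\phi_1)\le e^{-nH^2/2}
\]
on a Hellinger-ball $B_1$ of radius $\epsilon/4$ around $(g_1,v_1)$. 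Here $H$ is the Hellinger distance with respect to the empirical (fixed design) or population (random design) input measure, i.e., the Ghosal--van der Vaart construction of tests for i.n.i.d. observations with root-average-square Hellinger distance.

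Second, I will cover the part of $\mathcal S_{n,\alpha}$ that is $\epsilon$-far from the truth by $N_n$ such balls. Then $\psi_n=\max_k\phi_k$ has
\[
\mathbb E_0\psi_n\le N_n e^{-n\epsilon^2/2},\qquad \text{type II error}\le e^{-n\epsilon^2/8}
\quad (\text{after accounting for radius } \epsilon/4).
\]
It therefore suffices to show $\log N_n\le n\epsilon^2/4$ for large $n$, which yields $\mathbb E_0\psi_n\le e^{-n\epsilon^2/16}$ with room to spare.

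To bound $N_n$, I will control the Hellinger distance by sup-norm and variance distances. For Gaussians,
\[
H^2((g,v),(g',v'))\lesssim \frac{\|g-g'\|_\infty^2}{v_-}+\frac{|v-v'|^2}{v_-^2},
\]
uniformly in the design measure, because $v\in[v_-,v_+]$. So it suffices to cover $\mathcal S_{n,\alpha}$ in the metric $\|g-g'\|_\infty+|v-v'|$ at scale $\eta=c\epsilon$.

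Third, I count the covering. The number of labeled trees with $K\le\alpha n$ leaves and $M$ cut points per axis is at most $(C dM)^{\alpha n}$, so their log-count is $O(\alpha n)$. For each tree I use the Lipschitz bound from the proof of Proposition~\ref{prop: C2},
\[
\|g_{\calT,\bftheta,z}-g_{\calT,\bftheta',z'}\|_\infty\le B\|z-z'\|_2+C\sqrt K\sqrt n\,\|\bftheta-\bftheta'\|_2 .
\]
This reduces the problem to three grids. The $z$-grid covers $\{\|z\|_2\le\sqrt n\}\subset\mathbb R^q$, $q\le\tilde r\alpha n$, at mesh $\eta/(3B)$, with log-size $q\log(C\sqrt n/\eta)$. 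The $\bftheta$-grid covers $[\log\sigma_-^2,\log\sigma_+^2]\times[-n,n]^d$ at mesh $\eta/(3C\sqrt{\alpha}\,n)$, with log-size $O(d\log n)$. The $v$-grid covers $[v_-,v_+]$ and contributes $O(1)$. Summing,
\[
\log N_n\lesssim \alpha n+\tilde r\alpha n\log(n/\eta)+d\log n .
\]

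The main obstacle is the $\tilde r\alpha n\log n$ term from the $z$-grid, which is not $o(n)$ for fixed $\alpha$ because the sieve radius $\sqrt n$ grows. I would try first to fix this by bounding the sup-norm oscillation more cleverly. For $\bfx$ in a leaf, $g$ depends only on the path coefficients and one leaf coefficient. So I would cover coordinatewise in $\ell_\infty$ rather than $\ell_2$: each coordinate of $z$ is confined to $[-\sqrt n,\sqrt n]$, but only $O(rD)$ coordinates affect any single $\bfx$. This still leaves a $\log n$ factor. Failing that, the fallback is to exploit the freedom in C2 and shrink the sieve to $\|z\|_\infty\le M_\epsilon$ with $K\le\alpha_\epsilon n$. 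The Gaussian tail then gives per-coordinate complement mass $e^{-M^2/2}$, with union bound $q e^{-M^2/2}$. That bound is not exponentially small, so instead I would restrict the number of large coordinates: allow at most $\alpha_\epsilon n$ coordinates exceeding $M_\epsilon$, which a binomial Chernoff bound makes exponentially rare. The covering then costs $\tilde r\alpha_\epsilon n\log(M_\epsilon/\eta)$ plus a choose-term, and both are $\le n\epsilon^2/8$ once $\alpha_\epsilon$ is chosen small relative to $\epsilon^2/\log(1/\epsilon)$. Finally, I will check that the resulting $\psi_n$ is Borel: it is a finite maximum of indicators of measurable likelihood-ratio sets.
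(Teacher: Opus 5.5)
Your architecture matches the paper's: local tests combined by a maximum, Hellinger distance controlled by the regression and variance discrepancies, a tree count, and a $\bftheta$-grid. But there is a genuine gap exactly where you flag it. You never establish $\log N_n\le n\epsilon^2/4$ for the sieve $\mathcal S_{n,\alpha_\epsilon}$ in the statement.

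Covering $\{\|z\|_2\le\sqrt n\}\subset\mathbb R^q$ in $\ell_2$ and pushing the cover through $B\|z-z'\|_2$ cannot work. By volume it needs at least $(c\sqrt n/\eta)^q$ balls, and trees with $K$ near $\alpha n$ have $q\asymp\alpha n$. That gives $\log N_n\gtrsim\alpha n\log n$ for every fixed $\alpha$.

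The fallback does not repair this, for three reasons:
\begin{itemize}
\item The sieve is fixed by Proposition P6, where C2 was verified, so you cannot change it in this statement.
\item Even with a new sieve, the claimed cost $\tilde r\alpha_\epsilon n\log(M_\epsilon/\eta)$ is wrong. The cube $[-M_\epsilon,M_\epsilon]^q$ has $\ell_2$-diameter $2M_\epsilon\sqrt q$, so it needs on the order of $(M_\epsilon\sqrt q/\eta)^q$ Euclidean $\eta/B$-balls. A coordinatewise $\ell_\infty$ grid instead needs mesh $\eta/(B\sqrt{rD+1})$, and the sieve leaves the depth $D$ unrestricted up to $K-1$.
\item The up to $\alpha_\epsilon n$ large coordinates still range over a ball of radius $\sqrt n$, which brings the $\log n$ back.
\end{itemize}

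The missing idea is to cover in $L_2(\mathbb P_{\bfX,n})$ rather than sup-norm. This is legitimate because $H_{\mathbb P}\le C\{\|g-g'\|_{L_2(\mathbb P)}+|v-v'|\}$.

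Fix a tree and a grid value $\bftheta'$. The image $E=\{b_{\calT,\bftheta'}^\top z:\|z\|_2\le\sqrt n\}$ is an ellipsoid in $L_2(\mathbb P_{\bfX,n})$. Its squared semiaxes $a_j^2$ are the eigenvalues of $n\int bb^\top\,d\mathbb P_{\bfX,n}$. By Proposition P1(ii), $\sum_ja_j^2=n\int\|b\|_2^2\,d\mathbb P_{\bfX,n}\le nB^2$.

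This trace bound spreads the radius $\sqrt n$ across the $q$ directions. The volumetric argument gives $\log\mathcal N(u,E,L_2(\mathbb P_{\bfX,n}))\le\sum_j\log\{2(1+2a_j/u)\}\le q\log\{2(1+2B\sqrt{n/q}/u)\}$. This is increasing in $q$, so with $q\le\tilde r\alpha n$ it is at most $\tilde r\alpha n\log\{2(1+2B/(u\sqrt{\tilde r\alpha}))\}$, with no $\log n$.

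Your sup-norm $\bftheta$-grid and the tree count add only $O(\alpha n+\log n)$. The total entropy is then $C_\epsilon\{n\alpha\log(e/\alpha)+\log n\}$, and $\alpha_\epsilon$ can be chosen small enough to keep it below $n\epsilon^2/16$.

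A minor point: the plain likelihood-ratio test against a center does not control type II error uniformly on a Hellinger ball of radius $\epsilon/4$. The i.n.i.d.\ testing lemma needs a much smaller radius. The paper covers at $\epsilon/72$ with centers in the alternative and applies Proposition D.9 on balls of radius $\epsilon/36<\epsilon/5$. This only changes constants.
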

\begin{proof}
Write $\mathcal N(u,E,\cdot)$ for the minimum number of open radius-$u$ balls covering $E$. For any input probability measure $\mathbb P$, \citet[Lemma 2.7(i)]{Ghosal2017} gives
\begin{equation}\label{prop: hellinger-upper}
H_{\mathbb P}((g,v),(g',v'))
\leq\frac{\|g-g'\|_{L_2(\mathbb P)}}{2\sqrt{v_-}}+
\frac{|v-v'|}{2v_-}
\leq C\{\|g-g'\|_{L_2(\mathbb P)}+|v-v'|\}.
\end{equation}
Choose $C\geq1$ also valid in Eq.~\eqref{eqn: cov lipschitz}, and put $u=\epsilon/(216C)$.

Next, we compute the metric entropy. For each tree with $K\leq\alpha n$, a covariance grid $\mathcal Q_{\mathcal T}$ of coordinate mesh $u/\{C\sqrt{(d+1)nK}\}$ covers the covariance box. Proposition~\ref{prop: covariance continuity} implies
\begin{align*}
|\mathcal Q_{\mathcal T}|\leq(1+C_\epsilon n^2)^{d+1},\qquad
\inf_{\bftheta'\in\mathcal Q_{\mathcal T}}\sup_{\|z\|_2\leq\sqrt n}
\|g_{\mathcal T,\bftheta,z}-g_{\mathcal T,\bftheta',z}\|_\infty\leq u.
\end{align*} 
Write $b=b_{\mathcal T,\bftheta'}$. The coefficient image $E=\{b^\top z:\|z\|_2\leq\sqrt n\}$ is an ellipsoid in $L_2(\mathbb P_{\bfX,n})$ with squared semiaxes equal to the eigenvalues of $n\int bb^\top\,\mathrm{d}\mathbb P_{\bfX,n}$, including zeros. Let $a_j$ denote the length of the $j$th semiaxis of the ellipsoid $E$, that is $a_j=\sqrt{n \lambda_j}$ where $\lambda_j$ is the $j$th eigenvalue of $\int b(\bfx) b(\bfx)^\top \mathrm{d} \mathbb{P}_{\bfX,n}(\bfx)$. 
Proposition~\ref{prop: residual process bound} gives $\sum_{j=1}^q a_j^2 = n \int \| b(\bfx)\|_2^2  \mathrm{d} \mathbb{P}_{\bfX,n}(\bfx) \leq nB^2$.

Since covering numbers are invariant under translation, we have 
\begin{align*}
    \mathcal N(u,\mu_0 + E,L_2(\mathbb P_{\bfX,n})) = \mathcal N(u,E,L_2(\mathbb P_{\bfX,n})).
\end{align*}
Use the packing-volume argument of \citet[Appendix C.1 and Proposition C.2]{Ghosal2017}: in Euclidean coordinates, $D=\operatorname{diag}(a_j+u/2)$ satisfies $E+(u/2)B_2^q\subset2DB_2^q$. Disjoint open radius-$u/2$ balls around a maximal subset with pairwise distances at least $u$ therefore give
\begin{equation}\label{eqn: coefficient-cover}
\begin{split}
\log\mathcal N(u,E,L_2(\mathbb P_{\bfX,n}))
&\leq\log\frac{\operatorname{vol}(2DB_2^q)}{\operatorname{vol}((u/2)B_2^q)}
=\sum_{j=1}^q\log\{2(1+2a_j/u)\}\\
&\leq q\log\left\{2\left(1+\frac{2B\sqrt n}{u\sqrt q}\right)\right\}.
\end{split}
\end{equation}
The inclusion follows from $\|D^{-1}e\|_2\leq1$ for $e\in E$ and $\|D^{-1}w\|_2\leq1$ for $\|w\|_2\leq u/2$; the last inequality uses concavity of $\log$ and $\sum_j a_j\leq\sqrt{q\sum_j a_j^2}$. Here $B_2^q$ is the Euclidean unit ball. A variance grid of mesh $u$ completes a Hellinger cover of radius $3Cu=\epsilon/72$.

Let $\mathcal C_{\mathrm{cut}}$ be the finite relative cut set in Assumption~2. There are at most $(4d|\mathcal C_{\mathrm{cut}}|)^{K-1}$ labeled trees with $K$ leaves. The function $q\log\{2(1+c/\sqrt q)\}$ increases in $q>0$: its derivative is $\log\{2(1+t)\}-t/[2(1+t)]>0$, where $t=c/\sqrt q$. For $n\geq2$ with $\lfloor\alpha n\rfloor\geq1$, sum over $K\leq\lfloor\alpha n\rfloor$ and use $q\leq\tilde r\alpha n$ to obtain
\begin{equation}\label{eqn: entropy}
\begin{split}
\log\mathcal N(\epsilon/72,\mathcal S_{n,\alpha},H_{\mathbb P_{\bfX,n}})
&\leq C\alpha n+\log n+(d+1)\log(1+C_\epsilon n^2)+\log C_\epsilon\\
&\quad+\tilde r\alpha n\log\{2(1+C_\epsilon/\sqrt\alpha)\}\\
&\leq C_\epsilon\{n\alpha\log(e/\alpha)+\log n\}.
\end{split}
\end{equation}
All bounds are uniform in the input probability measure. Fixed covariance coordinates require only one grid point.

Then we construct the tests. Since $\alpha\log(e/\alpha)\to0$, choose $\alpha_\epsilon\leq\alpha_{\max}$ so that the cover size $N_n$ satisfies $\log N_n\leq n\epsilon^2/16$ for sufficiently large $n$. In each ball intersecting the alternative set, choose one alternative as center; the resulting radius is $\epsilon/36<\epsilon/5$. 

In fixed design, the $i$th factors of the sampling densities in Section~4 have squared Hellinger distance $H_{\delta_{\bfx_i}}^2((g,v),(g',v'))$; averaging over $i$ gives $H_{\mathbb P_{\bfX,n}}^2((g,v),(g',v'))$. In random design, each factor has squared distance $H_{\mathbb P_{\bfX}}^2((g,v),(g',v'))$, giving the same identity. Thus by \citet[Proposition D.9]{Ghosal2017}, with $c=1$, there exists a local Borel test $\psi_{n,j}$ with both errors at most $e^{-n\epsilon^2/8}$ at each center and throughout its ball. Their maximum $\psi_n=\max_j\psi_{n,j}$ satisfies
\begin{align*}
\mathbb E_0\psi_n(\mathcal O_n)&\leq\sum_{j=1}^{N_n}\mathbb E_0\psi_{n,j}(\mathcal O_n)
\leq N_ne^{-n\epsilon^2/8}\leq e^{-n\epsilon^2/16},\\
\mathbb E_{g,v}^{(n)}(1-\psi_n)
&\leq\mathbb E_{g,v}^{(n)}(1-\psi_{n,j})\leq e^{-n\epsilon^2/8}
\end{align*}
whenever $(g,v)$ belongs to the $j$th ball. 
\end{proof}

\subsection{Proof of the criterion in Section 4.2} 

\begin{proof}[Proof of the criterion in Section~4.2] 

Fix $\epsilon>0$, write $\mathcal H_n=\{(g,v):H_{\mathbb P_{\bfX,n}}((g,v),(g_0,v_0))>\epsilon\}$ and $\Psi_n=\psi_n(\mathcal O_n)$, and put
\begin{align*}
N_{1,n}=(1-\Psi_n)\int_{\mathcal H_n\cap\mathcal S_n}e^{-L_n(g,v)}\,\Pi_{g,v}(dg,dv),\qquad
N_{2,n}=\int_{\mathcal S_n^c}e^{-L_n(g,v)}\,\Pi_{g,v}(dg,dv).
\end{align*}
The density-ratio identity gives, for every fixed $(g,v)$,
\begin{align*}
\mathbb E_0\{(1-\Psi_n)e^{-L_n(g,v)}\}
=\int(1-\psi_n(o))\frac{p_{g,v}^{(n)}(o)}{p_{g_0,v_0}^{(n)}(o)}\,\mathbb P_0^{(n)}(do)
=\mathbb E_{g,v}^{(n)}(1-\psi_n).
\end{align*}
Here $\mathbb P_0^{(n)}$ is the law of $\mathcal O_n$ under $\mathbb P_0$; the same identity without the test gives $\mathbb E_0e^{-L_n(g,v)}=1$. 

It follows from C2--C3 and Markov's inequality that with $c=c_\epsilon$,
\begin{align*}
\mathbb E_0N_{j,n}\leq C_\epsilon e^{-cn},\qquad
\mathbb P_0(N_{j,n}>e^{-cn/2})\leq C_\epsilon e^{-cn/2}\ (j=1,2),\qquad
\sum_n\mathbb P_0(\Psi_n=1)<\infty.
\end{align*}
For $0<a<c/2$, it follows from Borel--Cantelli and C1 that eventually almost surely,
\begin{align*}
\Pi_{g,v}(\mathcal H_n\mid\mathcal O_n)
\leq\Psi_n+\frac{N_{1,n}+N_{2,n}}{\int e^{-L_n(g,v)}\,\Pi_{g,v}(dg,dv)}
\leq2e^{-n(c/2-a)}\longrightarrow0.
\end{align*}
This proves the criterion for general experiments, without an iid assumption across observations or independence across experiments.
\end{proof}

\subsection{Proof of Theorem 4}
\begin{proof}
Apply the preceding criterion to $\Pi_{g,v}$ and the sampling densities in Section~4. Note that Propositions~\ref{prop: C1}--\ref{prop: evidence} verifies C1, and Propositions~\ref{prop: C2}--\ref{prop: C3} verify C2--C3  with 
\begin{align*}
\mathcal S_n=\mathcal S_{n,\alpha_\epsilon},\qquad
c_\epsilon=\min\{c_{\alpha_\epsilon},\epsilon^2/16\},\qquad
C_\epsilon=\max\{C_{\alpha_\epsilon},1\}.
\end{align*}
Thus, the conclusion holds. 
\end{proof}

\subsection{Proof of Corollary 2}
\begin{proof}[Proof of Corollary~2]
Fix an input probability measure $\mathbb P$ and write $\Delta=g-g_0$. The Gaussian affinity formula \citep[proof of Lemma 2.7]{Ghosal2017} is
\begin{equation}\label{eqn: normal-affinity}
H_{\mathbb P}^2((g,v),(g_0,v_0))
=2-2\left\{\frac{2\sqrt{vv_0}}{v+v_0}\right\}^{1/2}
\int e^{-\Delta(\bfx)^2/[4(v+v_0)]}\,\mathbb P(d\bfx).
\end{equation}
Since $2(1-\sqrt t)\geq1-t$ for $0\leq t\leq1$ and $v,v_0\leq v_+$,
\begin{equation}\label{eqn: metric-lower-bounds}
H_{\mathbb P}^2\geq\frac{(\sqrt v-\sqrt{v_0})^2}{v+v_0}
\geq\frac{(v-v_0)^2}{8v_+^2},\qquad
H_{\mathbb P}^2\geq2(1-e^{-s^2/(8v_+)})\mathbb P(|\Delta|>s),\quad s>0.
\end{equation}
Put $\mathcal W_{\epsilon,\mathbb P}=\{d_{\mathbb P}(g,g_0)<\epsilon,\ |v/v_0-1|<\epsilon\}$. The definition of $d_{\mathbb P}$ gives
$d_{\mathbb P}(g,g_0)\geq\epsilon\Rightarrow\mathbb P(|\Delta|>\epsilon/2)\geq\epsilon/2$.
Consequently, uniformly in $\mathbb P$,
\begin{align*}
\mathcal W_{\epsilon,\mathbb P}^c\subset\{H_{\mathbb P}>\delta\},\qquad
\delta^2=\tfrac12\min\left\{\frac{v_0^2\epsilon^2}{8v_+^2},
\epsilon(1-e^{-\epsilon^2/(32v_+)})\right\}>0.
\end{align*}
Applying Theorem~4 with $\mathbb P=\mathbb P_{\bfX,n}$ yields the desired results. 
\end{proof}

\subsection{Proof of Corollary 3}

\begin{proof}[Proof of Corollary~3]
Let $\mathcal E_M=\{(g,v):\|g\|_\infty\leq M\}$ and $c_M=\Pi_{g,v}(\mathcal E_M)$. For $0<\delta<\min\{M-\|g_0\|_\infty,1,v_0/2\}$, by Proposition~\ref{prop: C1}, 
\begin{align*}
\mathcal B_\delta\subset\mathcal E_M,\qquad c_M\geq m_\delta>0,\qquad
\Pi_{g,v}^M(\mathcal B_\delta)=\Pi_{g,v}(\mathcal B_\delta)/c_M\geq m_\delta.
\end{align*}
Thus the conditioned prior $\Pi_{g,v}^M$ in Corollary~3 is proper. The moment bounds in Proposition~\ref{prop: evidence} do not depend on the prior, so its proof gives C1 for $\Pi_{g,v}^M$. Also
\begin{align*}
\Pi_{g,v}^M(\mathcal S_{n,\alpha}^c)\leq c_M^{-1}\Pi_{g,v}(\mathcal S_{n,\alpha}^c)
\leq c_M^{-1}C_\alpha e^{-c_\alpha n}.
\end{align*}
The tests in Proposition~\ref{prop: C3} are unchanged. The criterion and the deterministic inclusion in the preceding proof therefore give
$\Pi_{g,v}^M(\mathcal W_{\eta,\mathbb P_{\bfX,n}}^c\mid\mathcal O_n)\to0$ almost surely for every $\eta>0$.

On $\mathcal E_M$, $|g-g_0|\leq2M$. For $s>0$ with $\mathbb P(|g-g_0|>s)<s$, splitting the integral at $s$ gives \citep[cf.\ Remark 8]{Choi2007}
\begin{align*}
\|g-g_0\|_{L_1(\mathbb P)}\leq s+2M\mathbb P(|g-g_0|>s)<(1+2M)s.
\end{align*}
Taking the infimum over such $s$ yields $\|g-g_0\|_{L_1(\mathbb P)}\leq(1+2M)d_{\mathbb P}(g,g_0)$. Hence
\begin{align*}
\{\|g-g_0\|_{L_1(\mathbb P)}>\epsilon\}\cup\{|v/v_0-1|\geq\epsilon\}
\subset\mathcal W_{\epsilon/(1+2M),\mathbb P}^c.
\end{align*}
Use $\mathbb P=\mathbb P_{\bfX,n}$. Since $p_{\bfX}$ is bounded away from zero, there exists some $p_{\bfX,*}>0$  such that $p_{\bfX} \geq p_{\bfX,*}>0$. Thus, 
$\|g-g_0\|_{L_1(d\bfx)}\leq p_{\bfX,*}^{-1}\|g-g_0\|_{L_1(\mathbb P_{\bfX})}$ by applying the preceding conclusion at radius $\min(p_{\bfX,*},1)\epsilon$.
\end{proof}

\end{document}